\PassOptionsToPackage{unicode}{hyperref}
\PassOptionsToPackage{hyphens}{url}
\PassOptionsToPackage{dvipsnames,svgnames,x11names}{xcolor}
\documentclass[
  12pt,
  letterpaper,
]{article}
\usepackage{amsmath,amssymb}
\usepackage{iftex}
\ifPDFTeX
  \usepackage[T1]{fontenc}
  \usepackage[utf8]{inputenc}
  \usepackage{textcomp} 
\else 
  \usepackage{unicode-math} 
  \defaultfontfeatures{Scale=MatchLowercase}
  \defaultfontfeatures[\rmfamily]{Ligatures=TeX,Scale=1}
\fi
\usepackage{lmodern}
\ifPDFTeX\else
\fi
\IfFileExists{upquote.sty}{\usepackage{upquote}}{}
\IfFileExists{microtype.sty}{
  \usepackage[]{microtype}
  \UseMicrotypeSet[protrusion]{basicmath} 
}{}
\makeatletter
\@ifundefined{KOMAClassName}{
  \IfFileExists{parskip.sty}{%
    \usepackage{parskip}
  }{
    \setlength{\parindent}{0pt}
    \setlength{\parskip}{6pt plus 2pt minus 1pt}}
}{
  \KOMAoptions{parskip=half}}
\makeatother
\usepackage{xcolor}
\usepackage[margin = 1in]{geometry}
\usepackage{longtable,booktabs,array}
\usepackage{calc} 
\usepackage{etoolbox}
\makeatletter
\patchcmd\longtable{\par}{\if@noskipsec\mbox{}\fi\par}{}{}
\makeatother
\IfFileExists{footnotehyper.sty}{\usepackage{footnotehyper}}{\usepackage{footnote}}
\makesavenoteenv{longtable}
\usepackage{graphicx}
\makeatletter
\def\maxwidth{\ifdim\Gin@nat@width>\linewidth\linewidth\else\Gin@nat@width\fi}
\def\maxheight{\ifdim\Gin@nat@height>\textheight\textheight\else\Gin@nat@height\fi}
\makeatother
\setkeys{Gin}{width=\maxwidth,height=\maxheight,keepaspectratio}
\makeatletter
\def\fps@figure{htbp}
\makeatother
\makeatletter
\g@addto@macro\normalsize{%
  \setlength\abovedisplayskip{3pt}
  \setlength\belowdisplayskip{3pt}
  \setlength\abovedisplayshortskip{3pt}
  \setlength\belowdisplayshortskip{3pt}
  \setlength{\textfloatsep}{10pt plus 2pt minus 2pt}
  \setlength{\intextsep}{10pt plus 2pt minus 2pt}
  \setlength{\abovecaptionskip}{3pt}
  \setlength{\belowcaptionskip}{4pt}
}
\makeatother

\usepackage{setspace}
\usepackage{etoolbox}
\usepackage{microtype}
\AtBeginEnvironment{thebibliography}{%
  \singlespacing
  \fontsize{8.2}{8.8}\selectfont
  \setlength{\parskip}{0pt}
  \setlength{\bibsep}{1.5pt}
}

\AtBeginEnvironment{table}{\singlespacing}
\AtBeginEnvironment{longtable}{\singlespacing}
\AtBeginEnvironment{algorithm}{\singlespacing}
\AtBeginEnvironment{thebibliography}{\singlespacing}
\usepackage{booktabs}
\usepackage{adjustbox}
\usepackage{multirow}
\usepackage{longtable}
\usepackage{paralist}
\usepackage{fancyhdr}
\usepackage{dcolumn}
\usepackage{graphicx}
\usepackage{amsthm}
\usepackage{etoolbox}
\usepackage[font=small,labelfont=it]{caption}
\usepackage{algorithm2e}
\newcommand{\minipagebegin}{\begin{minipage}}
\newcommand{\minipageend}{\end{minipage}}
\RestyleAlgo{ruled}
\newcounter{algoline}

\AtBeginEnvironment{algorithm}{\setcounter{algoline}{0}}

\theoremstyle{plain}
\newtheorem{theorem}{Theorem}

\newtheorem{lemma}{Lemma}
\theoremstyle{definition}

\newtheorem{example}{Example}
\newtheorem{assumption}{Assumption}
\usepackage{mathtools}
\usepackage{bbm}
\usepackage{booktabs}
\usepackage{longtable}
\usepackage{array}
\usepackage{multirow}
\usepackage{wrapfig}
\usepackage{float}
\usepackage{colortbl}
\usepackage{pdflscape}
\usepackage{tabu}
\usepackage{threeparttable}
\usepackage{threeparttablex}
\usepackage[normalem]{ulem}
\usepackage{makecell}
\usepackage{xcolor}
\ifLuaTeX
  \usepackage{selnolig}  
\fi
\usepackage[round,authoryear]{natbib}
\usepackage{bookmark}
\IfFileExists{xurl.sty}{\usepackage{xurl}}{} 
\hypersetup{
  pdfauthor={Duong Trinh},
  colorlinks=true,
  linkcolor={blue},
  filecolor={Maroon},
  citecolor={Blue},
  urlcolor={blue},
  pdfcreator={LaTeX via pandoc}}

\title{Endogenous Selection and Spillovers:

Bayesian Inference for Policy-Relevant Causal Effects}
\author{Duong Trinh\footnote{University of Graz, Department of Economics. Universitätsstraße 15 Bauteil, Graz, Austria, 8010. E-mail address: \href{mailto:duong.trinh@uni-graz.at}{\nolinkurl{duong.trinh@uni-graz.at}}.}}
\date{}

\begin{document}
\maketitle

\thispagestyle{empty}

\begin{abstract}
\begin{spacing}{1.15}
This paper develops a new econometric framework to identify and estimate policy-relevant causal effects in contexts with endogenous selection into treatment and spillovers within single large networks or spatial settings. Conventional causal inference methods relying on either unconfoundedness or no-interference assumptions are generally inadequate in these scenarios. We introduce a Spillover Roy model that jointly models endogenous treatment selection and potential outcomes while allowing spillovers through a low-dimensional exposure mapping of neighbors' treatments. The model captures heterogeneous treatment responses across levels of latent resistance to treatment and neighborhood exposure.  Within this framework, we define policy-relevant direct, spillover, and total effects under feasible policy changes and show that the total effect decomposes into a direct component from policy-induced participation and a spillover component from policy-induced changes in neighborhood treatment exposure. For estimation and inference, we develop a Bayesian data-augmentation algorithm with parameter expansion that enables efficient posterior computation and coherent uncertainty quantification for heterogeneous causal effects and policy counterfactuals. An application to the U.S. Opportunity Zones program finds positive direct effects on housing development but limited spillover benefits, while counterfactual policy analysis reveals diminishing returns from program expansion.

\smallskip

\noindent\textit{Keywords:} spillovers, network interference, spatial interference, endogenous selection, policy-relevant treatment effects, Bayesian inference, place-based policy.

\medskip

\noindent\textit{JEL classifications: C11, C31, C35, C36, R58.} 
\end{spacing}
\end{abstract}

\newpage
\pagenumbering{arabic}

\section{Introduction}\label{introduction}

Spillovers, often referred to as interference, and endogenous selection into treatment are pervasive features of economic settings that complicate conventional causal inference approaches. Spillovers arise when a unit's outcome depends not only on its own treatment status but also on the treatments received by others within its social network or geographic area \citep[see, e.g.,][]{forastiere2021identification, giffin2022generalized}. For example, participants in after-school programs may influence the behavior or academic performance of non-participating peers, while place-based policies such as regional development incentives can affect neighboring communities through migration or business relocation. Simultaneously, participation in many programs is not randomly assigned \citep[see, e.g.,][]{abbring2007econometric}. Individuals or regions often enter treatment based on unobserved characteristics, such as motivation or growth potential, which are also related to potential outcomes, generating endogenous selection. In such environments, policy changes influence outcomes through two interconnected mechanisms: by altering who participates in the program and by reshaping the treatment exposure faced by others. Ignoring either mechanism can lead to biased estimates of program effectiveness and misleading conclusions regarding policy design and evaluation.

Nonetheless, the issue of policy-relevant causal inference in settings where both endogenous selection and spillover effects occur remains insufficiently explored due to the intertwined challenges. Existing approaches typically focus on identifying causal effects at fixed treatment or exposure states may not correspond directly to feasible policy changes. In practice, policymakers are interested in the consequences of modifying program rules, such as expanding eligibility, changing subsidies, or altering designation criteria, beyond the scope of existing treatment effects. As a result, conventional treatment-effect parameters do not fully capture multiple channels of policy interventions in such settings.

This paper develops a new econometric framework for policy-relevant causal effects that simultaneously handle endogenous selection and spillovers in large non-clustered networks or spatial settings. We extend the Generalised Roy model to accommodate network or spatial interactions by allowing potential outcomes to depend on both individual treatment status and an exposure measure summarizing neighbors' treatment. This structure preserves the economic interpretation of selection into treatment while incorporating spillovers in a tractable way. Our resulting Spillover Roy model captures heterogeneous treatment responses across the latent resistance distribution and across exposure levels. Within this framework, we define policy-relevant causal effects as contrasts between expected outcomes under alternative feasible policy regimes. A policy shift may affect outcomes through two distinct channels: induced changes in own treatment participation and induced changes in exposure to treatments experienced by neighbors. Our framework therefore decomposes the total policy impact into direct and spillover components, providing a transparent characterization of the mechanisms through which policies operate.

Our methodology contributes to three strands of research. \emph{First}, it relates to the literature on causal inference under interference. One approach imposes partial interference, whereby spillovers operate within exogenously defined clusters \citep[e.g.,][]{hudgens2008toward, sobel2006randomized, manski2013identification}, with recent work allowing for noncompliance or endogenous treatment take-up \citep{ditraglia2023identifying, vazquez2023causal}. Our setting is closer to work on general interference in a single network or spatial environment, where a low-dimensional exposure mapping summarizes the relevant treatment configuration. Existing methods typically rely on randomized treatment or unconfoundedness conditional on observed covariates \citep[e.g.,][]{aronow2017estimating, leung2020treatment, forastiere2021identification, forastiere2022estimating}. Recent studies relax treatment exogeneity: \citet{hoshino2024causal} use instrumental exposure mappings to identify local direct and indirect effects under noncompliance, while \citet{chen2025heterogeneous} model treatment choices as a network equilibrium and identify heterogeneous marginal exposure effects. Related work studies policy effects when interference is mediated through market-equilibrium variables under randomized treatment \citep{munro2025treatment}. Our setting instead combines endogenous treatment selection with network or spatial exposure and focuses on counterfactual changes in the rule governing treatment participation.

\emph{Second}, our policy estimands build on the literature concerning marginal and policy-relevant treatment effects under endogenous selection \citep[e.g.,][]{heckman2005structural, heckman2007econometric, carneiro2011estimating, mogstad2018using, sasaki2023estimation, opper2024late}. In the Generalized Roy framework, treatment is governed by a latent-index selection rule, and the treatment effects may vary with the unobserved determinants of treatment choice. The marginal treatment effect (MTE) characterizes treatment gains along this latent resistance margin and provides a building block for policy evaluation. A central insight of this literature is that policy evaluation requires specifying the policy-relevant target population: the individuals whose treatment choices would change under the counterfactual policy may not coincide with those whose choices are shifted by the available instrument. Among various measures or treatment effects, the PRTE has the advantage of directly evaluating alternative policy scenarios under consideration and can be represented as policy-specific weighted averages of the underlying MTE. PRTE evaluates a change from a baseline to a counterfactual treatment-selection regime and, when normalized by the change in participation, measures the average outcome gain per net participant induced by the policy. We extend this policy-evaluation approach to settings with interference.
A policy-induced change in treatment selection now affects outcomes through both own participation and the resulting change in neighbors' treatment exposure. We therefore define policy-relevant direct and spillover effects and show how the total policy impact decomposes into these two components, linking policy evaluation to heterogeneity along both the latent resistance and neighborhood-exposure margins.

\emph{Third}, our estimation strategy relates to Bayesian methods for endogenous selection and latent-index models, including inference on potential-outcome distributions in the presence of unidentified dependence parameters \citep{poirier2003predictive} and parameter-expanded data-augmentation techniques for handling covariance restrictions and identifying normalizations \citep[e.g.,][]{ding2014bayesian, dougan2018bayesian, zhang2026parameter}. Building on this literature, we develop a parameter-expanded Gibbs sampler tailored to the richer latent structure of the Spillover Roy model, which jointly estimates endogenous treatment selection and regime-specific potential outcomes and delivers posterior inference for heterogeneous treatment, spillover, and policy-relevant effects. Monte Carlo experiments show that our proposed proposed procedure yeilds valid inference, while naive approaches that ignore either endogenous selection or spillovers can exhibit substantial bias.

We apply the proposed framework to evaluate the causal effects of the Opportunity Zones (OZ) program on housing growth in U.S. census tracts. We model the designation of treated areas as an endogenous selection process influenced by local economic characteristics and political decisions. Our empirical results reveal substantial heterogeneity in treatment gains consistent with selection-on-gains behavior: areas with higher expected returns are more likely to receive the program. We find positive direct effects of designation on housing development but limited evidence of beneficial spillovers for neighboring non-designated areas. Furthermore, policy counterfactual analysis shows that expanding the program induces diminishing returns as marginal entrants generate smaller --- and eventually negative --- direct gains, while spillover benefits increase but remain insufficient to sustain positive net effects under large expansions.

The remainder of this paper is structured as follows. In Section \ref{BCIES-Section2}, we present Spillover Roy Model and define causal estimands with key identification assumptions. In Section \ref{BCIES-Section3}, we propose Bayesian data-augmentation approach to estimate the model and conduct inference. We then evaluate our method using simulations in Section \ref{BCIES-Section4} and investigate the causal impact of the U.S. Opportunity Zones (OZ) program on economic outcomes in Section \ref{BCIES-Section5}. Finally, Section \ref{BCIES-Section6} concludes with brief remarks and policy recommendations.

\section{The Spillover Roy Model}\label{BCIES-Section2}

\subsection{General Model Setup}\label{BCIES-Section2_1}

We consider a general setting for \(n\) agents (\(i = 1,\ldots,n\)) which involves treatment selection and outcome determination with spillovers.

\textbf{Treatment selection}

Let \(D_i\) be the observed binary treatment decision, which takes the value of \(1\) if the unit receives the treatment and \(0\) otherwise. This can be regarded as individual treatment and determined by a latent-index representation as follows
\begin{equation}
\begin{split}
  D_i^* &= \mu(Z_i,X_i) - U_i,\\
  D_i &= 1 \text{ if } D_i^* \geq 0, \quad D_i = 0 \text{ otherwise},
\end{split}
\end{equation}
where \(D_i^*\) denotes the net benefit, or latent utility, from receiving the treatment and \(U_i\) captures an unobserved component of the treatment choice. Vector \(X_i\) contains observed characteristics that may jointly influence treatment participation and potential outcomes. To identify causal effects under endogenous selection, we additionally observe a vector of excluded variables \(Z_i\), which shifts treatment participation without directly affecting potential outcomes. Throughout the paper, \(Z_i\) plays the role of an instrumental variable in the structural selection equation. Variation in \(Z_i\) provides exogenous changes in treatment propensity while satisfying the exclusion restriction imposed later in the identification analysis.

Assume that \(U_i\) is continuously distributed with a strictly increasing cumulative distribution function \(F_U\). Define \(V_i \coloneqq F_U(U_i)\), then it has uniformly distribution and indicates different quantile level of \(U_i\). Also define \(\nu(Z_i,X_i) \coloneqq F_U(\mu(Z_i,X_i))\), which is the mean scale utility function in discrete choice theory, we can thereby rewriting the treatment rule as:
\begin{equation}
D_i = \mathbbm{1} \{\mu(Z_i,X_i) \geq U_i\} = \mathbbm{1} \{\nu(Z_i,X_i) \geq V_i\}.
\end{equation}
The variable \(V_i\) indexes the unit's latent resistance to treatment: units with higher values of are less likely to participate for a given value of \(\nu(Z_i,X_i)\). This representation is standard in the Generalized Roy model \citep{heckman2005structural} and will be central for defining marginal treatment effects and policy-relevant effects. Because the latent resistance \(V_i\) may be statistically dependent on the potential-outcome disturbances, treatment selection mechanism is generally endogenous.

\textbf{Outcome determination with spillovers}

Under interference, potential outcomes of unit \(i\) may depend not only on its individual treatment but also on the treatment assignments of other units connected to \(i\) through a network or spatial interaction structure. Let \(\mathbf{D} = (D_1,\ldots,D_n)\) denote the population treatment vector, and let \(Y_i(D_i,\mathbf{D}_{-i})\) denote the potential outcome of unit \(i\) under the treatment assignment \(\mathbf{D} = (D_1,\ldots,D_n)\), where \(\mathbf{D}_{-i} = (D_1, \ldots, D_{i-1}, D_{i+1}, \ldots, D_n)\) collects the treatment assignments of all units other than \(i\). Without additional restrictions, unit \(i\) may have a distinct potential outcome for each possible realization of \(\mathbf{D}\). Because the number of potential outcomes grows exponentially with the number of units, such a framework is generally infeasible for both identification and estimation in large networks. Following the network causal inference literature \citep[see, e.g.,][]{aronow2017estimating, manski2013identification}, we impose an exposure-mapping restriction that summarizes the aspects of neighbors' treatment assignments relevant for unit \(i\)'s outcome.

\begin{assumption}{[Exposure mapping]}
\label{Assp1}
There exists a known measurable mapping $T:\{1,\ldots,n\}\times\{0,1\}^{n-1}\times\mathcal W\rightarrow\mathcal T$ such that, for each unit $i$,
$$
T_i=T(i,\mathbf{D}_{-i},\mathbf{W}),
$$
where $\mathbf W\in\mathcal W$ denotes the observed network or spatial structure and $\mathcal T$ is the exposure space. Depending on the application, $\mathbf W$ may be either a network adjacency matrix or a spatial weights matrix. For every $d\in\{0,1\}$ and every pair of treatment vectors $\mathbf d_{-i}$ and $\mathbf d'_{-i}$,
$$
T(i,\mathbf d_{-i},\mathbf W) = T(i,\mathbf d'_{-i},\mathbf W) \quad \Longrightarrow \quad Y_i(d,\mathbf d_{-i}) = Y_i(d,\mathbf d'_{-i}).
$$
\end{assumption}

Thus, conditional on unit \(i\)'s own treatment status, the potential outcome depends on the treatment assignments of all other units only through the exposure state \(T_i\), rather than through the full treatment vector \(\mathbf D_{-i}\). Consequently, there exists a function \(Y_i^{(d)}: \mathcal T \rightarrow \mathbbm{R}\) such that
\[
Y_i(d,\mathbf D_{-i}) = Y_i^{(d)}(T_i),\qquad d\in\{0,1\}.
\]
Throughout the paper, the interaction matrix \(\mathbf{W}\) is assumed to be known and predetermined with respect to the latent disturbances in the treatment and outcome equations. This matrix characterizes the pattern of either network interference or spatial interference. For the remainder of the paper, we focus on the weighted neighborhood treatment as the exposure mapping
\[
T_i = \bar D_{\mathcal N i} = \sum_{j\neq i}w_{ij}D_j, 
\qquad w_{ii}=0, 
\qquad \sum_{j\neq i} w_{ij} = 1,
\]
The scalar \(\bar D_{\mathcal N i}\) summarizes the treatment intensity in unit \(i\)'s neighborhood.\footnote{When \(w_{ij} = 1/N_i\) for all neighbors \(j\in \mathcal N (i)\), where \(N_i\) denotes the number of neighbors of unit \(i\), \(\bar D_{\mathcal N i}\) reduces to the proportion of treated neighbors. More generally, the weights may reflect heterogeneous interaction strengths, geographic proximity, or other measures of network influence, yielding a weighted neighborhood treatment intensity.} Other commonly used exposure mappings include binary exposure indicators (e.g., at least one treated neighbor), unweighted treated-neighbor counts, higher-order neighborhood summaries, and other nonlinear exposure measures.

Under the weighted neighborhood exposure mapping, we specify the treated and untreated potential outcomes as
\begin{equation}
\begin{split}
Y_i^{(1)}(\bar{D}_{\mathcal{N}i})  &= \mu_1\left(\bar{D}_{\mathcal{N}i}, X_i\right) + \varepsilon_i^{(1)}, \quad \text{ and }\\
Y_i^{(0)}(\bar{D}_{\mathcal{N}i})  &= \mu_0\left(\bar{D}_{\mathcal{N}i}, X_i\right) + \varepsilon_i^{(0)},
\end{split}
\end{equation}
where \(X_i\) denotes a vector of observed individual characteristics, \(\mu_d\left(\bar{D}_{\mathcal{N}i},X_i\right) = \mathbbm{E} \left[ Y_i^{(d)} \mid \bar{D}_{\mathcal{N}i}, X_i\right]\) is the regime-specific mean response function, and \(\varepsilon_i^{(d)}\) is the corresponding idiosyncratic disturbance for each \(d\in\{0,1\}\).

Let \(Y_i\) be the revealed outcome, which equals the treated potential outcome when \(i\) is treated (\(D_i = 1\)) and equals untreated potential outcome when \(i\) is untreated (\(D_i = 0\))
\begin{equation}
Y_i = D_i Y_i^{(1)} + (1-D_i)  Y_i^{(0)}.
\end{equation}
Combining endogenous treatment selection with spillovers operating through the specified exposure mapping, the general framework is
\begin{equation}
\label{KeyModel1}
\begin{split}
D_i &= \mathbbm{1} \{\nu(Z_i,X_i) \geq V_i\},
\\
\bar{D}_{\mathcal{N}i} &= \sum_{j=1, j\neq i}^{n} w_{ij} D_j, \quad \sum_{j=1, j\neq i}^{n} w_{ij} = 1,
\\
Y_i^{(1)}  &= \mu_1\left(\bar{D}_{\mathcal{N}i}, X_i\right) + \varepsilon_i^{(1)},
\\
Y_i^{(0)}  &= \mu_0\left(\bar{D}_{\mathcal{N}i}, X_i\right) + \varepsilon_i^{(0)},
\\
Y_i &= D_i Y_i^{(1)} + (1-D_i)  Y_i^{(0)}.
\end{split}
\end{equation}

\begin{example} 
(Place-based policies with spatial spillovers) Consider the Opportunity Zones (OZ) program, where $D_i$ indicates OZ designation of census tract $i$ and $Y_i$ denotes a local economic outcome, such as housing development. Designation may be endogenous because unobserved local characteristics, such as growth potential or political support, can affect both designation and potential outcomes. Thus, latent resistance $V_i$ may be correlated with the unobserved determinants of potential outcomes, while an excluded variable $Z_i$ provides exogenous variation in designation. OZ designation may also affect nearby tracts through housing-market, migration, or business-location responses, generating spatial spillovers through the neighborhood exposure defined above.
\end{example}

\begin{example} 
(Educational programs with network spillovers) Consider an after-school program (ASP), where $D_i$ indicates participation of student $i$ and $Y_i$ denotes an outcome such as academic performance or social-emotional development. Because participation is self-selected, latent resistance $V_i$ may be correlated with unobserved determinants of potential outcomes, generating endogenous treatment selection. An excluded cost shifter $Z_i$ provides exogenous variation in participation. Program participation may also affect students indirectly through interactions with participating peers, generating network spillovers through the neighborhood exposure defined above.
\end{example}

\subsection{Parametric Spillover Roy Model and Identification}\label{BCIES-Section2_2}

We now introduce the assumptions under which the structural objects in \eqref{KeyModel1} are identified.

\begin{assumption}{[Parametric Spillover Roy model]}
\label{Assp2}
Suppose the data are generated by
\begin{equation}
\label{KeyModel2}
\begin{split}
D_i   &= \mathbbm{1}\{Z_i\alpha + X_i\beta^{(D)} + \varepsilon_i^{(D)} > 0\},\\
\bar{D}_{\mathcal{N}i} &= \sum_{j=1, j\neq i}^{n} w_{ij} D_j, \quad \sum_{j=1, j\neq i} w_{ij} = 1,\\
Y_i^{(1)} &= \delta^{(1)}\bar{D}_{\mathcal{N}i} + X_i\beta^{(1)} + \varepsilon_i^{(1)},\\ 
Y_i^{(0)} &= \delta^{(0)}\bar{D}_{\mathcal{N}i} + X_i\beta^{(0)} + \varepsilon_i^{(0)}, \\
Y_i   &= D_i Y_i^{(1)} + (1 - D_i) Y_i^{(0)}.\\
\end{split}
\end{equation}
\end{assumption}

This specification imposes linearity in the mean response functions and the treatment index while accommodating endogenous selection through unrestricted dependence between the treatment-selection disturbance, \(\varepsilon_i^{(D)}\) and the potential-outcome disturbances, \(\bigl(\varepsilon_i^{(1)},\varepsilon_i^{(0)}\bigr)\). Substituting the potential outcomes into the switching equation yields
\[
Y_i = X_i \beta^{(0)} + D_i X_i (\beta^{(1)} - \beta^{(0)}) + \delta^{(0)} \bar{D}_{\mathcal{N}i} + D_i(\delta^{(1)} - \delta^{(0)})\bar{D}_{\mathcal{N}i} + (1-D_i)\varepsilon_i^{(0)} + D_i\varepsilon_i^{(1)},
\]
which makes clear that the model allows the individual outcome to depend on own treatment, neighborhood treatment, and their interaction.

\emph{Remark}. The specification in Assumption \ref{Assp2} extends the Generalized Roy
framework \citep{heckman2005structural} by allowing potential outcomes to depend on neighborhood treatment exposure, with potentially different spillover effects across treatment states, \(\delta^{(1)}\) and \(\delta^{(0)}\). When \(\delta^{(0)}=\delta^{(1)}=0\), the model reduces to the canonical Generalized Roy model.

\begin{assumption}{[Instrument validity]\\}
\label{Assp3}
\textnormal{(i) Instrument exogeneity.}
$$
\left(
\varepsilon_i^{(D)},
\varepsilon_i^{(1)},
\varepsilon_i^{(0)}
\right)
\perp\!\!\!\perp
(X_i,Z_i),
$$
where $\perp\!\!\!\perp$ denotes the statistical independence.

\textnormal{(ii) Instrument relevance.}
The excluded variable $Z_i$ generates nondegenerate variation in the treatment-selection index
$$
\nu_i = Z_i\alpha+X_i\beta^{(D)}.
$$
\end{assumption}

Assumption \ref{Assp3} requires the instrumental variable to satisfy both exogeneity and relevance. Part (i) states that the observed covariates and the instrumental variable are jointly independent of the latent disturbances governing treatment selection and potential outcomes. Together with the structural specification in Assumption \ref{Assp2}, this implies that the instrumental variable affects outcomes only through its effect on treatment selection. Part (ii) requires the instrument to generate sufficient variation in the latent treatment-selection index, thereby ensuring identification of the treatment-selection equation. These conditions are standard in structural latent-index models of endogenous treatment selection and heterogeneous treatment effects \citetext{\citealp[see, e.g.,][]{carneiro2011estimating}; \citealp{brinch2017beyond}; \citealp[and][]{cornelissen2018benefits}}.

\begin{assumption}{[Finite-mixture distribution and cross-unit independence]}
\label{Assp4}
Conditional on $\mathbf{W}$, the disturbance vectors are independently distributed across units and,
for $i = 1,\ldots,n$
\begin{equation}
\label{AS-finitemixture}
\begin{split}
\boldsymbol{\varepsilon}_i = \begin{bmatrix}
\varepsilon_i^{(D)} \\ \varepsilon_i^{(1)} \\ \varepsilon_i^{(0)}
\end{bmatrix} &\overset{ind}{\sim} \sum_{g=1}^G \pi_g N\left(0, \Sigma_g \right) \\ 
\end{split}
\end{equation}
$$
\begin{aligned}
\text{where} \quad \sum_{g=1}^G \pi_g = 1; 
\quad \Sigma_g = \begin{bmatrix}
1 & \sigma_{1D,g} & \sigma_{0D,g} \\
& \sigma_{1,g}^2 & \sigma_{10,g} \\
& & \sigma_{0,g}^2
\end{bmatrix}
\end{aligned}
$$
\end{assumption}

Assumption \ref{Assp4} models the joint distribution of the latent disturbances as a finite mixture of multivariate normal distributions. The first diagonal element of each component covariance matrix is normalized to one, reflecting the standard scale normalization in binary latent-index models \citep[see, e.g.,][]{cameron2005microeconometrics, chan2019bayesian}. Without this normalization, the parameters in the selection equation are identified only up to scale. This finite-mixture specification flexibly approximates the joint distribution of the latent disturbances and allows for non-Gaussian heterogeneity while preserving tractability for estimation and inference. In addition, we note that the latent resistance \(V_i\) in \eqref{KeyModel1} can now be represented as
\[
V_i = \Phi(-\varepsilon_i^{(D)}),
\]
where \(\Phi(\cdot)\) denotes the standard normal cdf. Under Assumption \(\ref{Assp4}\) and the normalization \(\mathop{\mathrm{Var}}(\varepsilon_i^{(D)})=1\), it follows that \(V_i\sim U(0,1)\).

The conditional independence across units implies that, conditional on the predetermined network structure \(\mathbf{W}\), neighborhood treatment carries no additional information about unit \(i\)'s latent disturbances beyond that contained in its own treatment decision and observed covariates. Accordingly,
\[
\mathop{\mathrm{\mathbbm{E}}}[\varepsilon_i^{(d)} \mid D_i, \bar{D}_{\mathcal{N}i}, X_i, Z_i, \mathbf{W}] = \mathop{\mathrm{\mathbbm{E}}}[\varepsilon_i^{(d)} \mid D_i, X_i, Z_i, \mathbf{W}], \quad d\in\{0,1\}.
\]

\begin{theorem}[Identification of the Spillover Roy Model]
\label{Theorem1}
Suppose Assumptions \ref{Assp1}--\ref{Assp4} hold. Then the parameters of the Spillover Roy model in \eqref{KeyModel2},
$$
\alpha,\;\beta^{(D)},\;\beta^{(1)},\;\beta^{(0)},\;\delta^{(1)},\;\delta^{(0)},
$$
and the aggregate covariance parameters
$$
\sigma_{1D}
\coloneqq
\mathop{\mathrm{Cov}}\!\left(\varepsilon_i^{(1)},\varepsilon_i^{(D)}\right),
\qquad
\sigma_{0D}
\coloneqq
\mathop{\mathrm{Cov}}\!\left(\varepsilon_i^{(0)},\varepsilon_i^{(D)}\right),
$$
are identified.
\end{theorem}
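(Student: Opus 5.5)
The argument runs in two stages, following the classical Heckman-type identification of the Generalized Roy model: first the selection equation from the observed treatment, then the outcome equations using control functions built from the selection index.

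\textbf{Step 1 (selection equation).} Under Assumption \ref{Assp3}(i) the disturbances are independent of $(X_i,Z_i)$. Writing $F$ for the marginal cdf of $\varepsilon_i^{(D)}$,
\[
P(D_i=1\mid X_i,Z_i)=F(Z_i\alpha+X_i\beta^{(D)}).
\]
Under Assumption \ref{Assp4}, $\varepsilon_i^{(D)}$ has marginal law $\sum_g\pi_g N(0,1)=N(0,1)$, because every component has unit first diagonal entry. Hence $F=\Phi$ and $P(D_i=1\mid X_i,Z_i)=\Phi(\nu_i)$, a probit. Since $\Phi$ is strictly increasing, $\nu_i=\Phi^{-1}(P(D_i=1\mid X_i,Z_i))$ is identified as a function of $(X_i,Z_i)$. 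The coefficients $(\alpha,\beta^{(D)})$ then follow from the standard full-rank condition on $(Z_i,X_i)$, which I read into the relevance requirement of Assumption \ref{Assp3}(ii).

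\textbf{Step 2 (conditional means of outcomes).} Cross-unit independence in Assumption \ref{Assp4}, together with $\mathbf W$ being predetermined, gives the displayed implication that conditioning on $\bar D_{\mathcal N i}$ adds no information about $\varepsilon_i^{(d)}$. Combined with Assumption \ref{Assp3}(i) and joint normality within each mixture component, for the treated this yields
\[
\mathbbm{E}[Y_i\mid D_i=1,\bar D_{\mathcal N i},X_i,Z_i]=\delta^{(1)}\bar D_{\mathcal N i}+X_i\beta^{(1)}+\mathbbm{E}[\varepsilon_i^{(1)}\mid \varepsilon_i^{(D)}>-\nu_i].
\]
Within component $g$, $\mathbbm{E}[\varepsilon^{(1)}\mid\varepsilon^{(D)}]=\sigma_{1D,g}\varepsilon^{(D)}$. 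Since $\varepsilon^{(D)}$ has the same $N(0,1)$ law in every component, the component label is independent of $\varepsilon^{(D)}$, and averaging over components gives
\[
\mathbbm{E}[\varepsilon_i^{(1)}\mid\varepsilon_i^{(D)}]=\sigma_{1D}\,\varepsilon_i^{(D)},\qquad \sigma_{1D}=\textstyle\sum_g\pi_g\sigma_{1D,g}.
\]
Therefore the selection term is $\sigma_{1D}\,\lambda_1(\nu_i)$ with inverse Mills ratio $\lambda_1(\nu)=\phi(\nu)/\Phi(\nu)$. Symmetrically, the untreated equation has selection term $-\sigma_{0D}\,\phi(\nu_i)/(1-\Phi(\nu_i))$. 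This mixture-averaging step is the key observation: it is why only the aggregate covariances are identified.

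\textbf{Step 3 (inverting the linear regressions).} With $\nu_i$ known from Step 1, the treated-sample conditional mean is linear in the regressors $R_i=(\bar D_{\mathcal N i},X_i,\lambda_1(\nu_i))$. The parameters $(\delta^{(1)},\beta^{(1)},\sigma_{1D})$ are therefore identified provided $\mathbbm{E}[R_i'R_i\mid D_i=1]$ is nonsingular; the untreated case is analogous.

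\textbf{Main obstacle: nonsingularity.} Two separate sources of variation are needed.
\begin{itemize}
\item[] \emph{Separating $\lambda_1(\nu_i)$ from $X_i$.} This comes from the excluded $Z_i$ shifting $\nu_i$ (Assumption \ref{Assp3}(ii)), reinforced by the nonlinearity of the Mills ratio.
\item[] \emph{Separating $\bar D_{\mathcal N i}$ from $(X_i,\lambda_1(\nu_i))$.} Here $\bar D_{\mathcal N i}=\sum_j w_{ij}D_j$ depends on the neighbors' $D_j$. Conditional on unit $i$'s own variables, these retain independent randomness from $\varepsilon_j^{(D)}$ and from neighbors' $Z_j$. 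I would argue that, given a nondegenerate neighborhood, $\mathrm{Var}(\bar D_{\mathcal N i}\mid X_i,Z_i,D_i)>0$, which rules out an exact linear relation.
\end{itemize}
Stating this as a mild rank condition on $\mathbf W$, such as excluding units with degenerate neighborhoods, is the step I expect to need the most care. I would make it explicit as part of the relevance requirement rather than derive it from the stated assumptions.
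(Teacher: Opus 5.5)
Your proposal is correct and follows the paper's proof essentially step for step. You identify the selection equation as a probit (the mixture marginal of $\varepsilon_i^{(D)}$ is $N(0,1)$), use independence of the component label from $\varepsilon_i^{(D)}$ to obtain the aggregate control functions $\sigma_{1D}\phi(\nu_i)/\Phi(\nu_i)$ and $-\sigma_{0D}\phi(\nu_i)/(1-\Phi(\nu_i))$ (the paper's auxiliary lemmas), and recover $(\delta^{(d)},\beta^{(d)},\sigma_{dD})$ from the regime-specific conditional means under a full-rank condition. The only difference is that you spell out where the variation for that rank condition must come from (the excluded $Z_i$ for the Mills ratio, neighbors' independent randomness for $\bar D_{\mathcal N i}$), whereas the paper simply invokes ``standard full-rank conditions'' for $(\bar D_{\mathcal N i}, X_i, \lambda(\nu_i))$.
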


\emph{Proof.} See Appendix \ref{BCIES-APDX_ProofProp1}.

\subsection{Causal Estimands}\label{causal-estimands}

Our targeted estimands include a hierarchy of causal parameters: marginal (structural) effects, average effects, and policy-relevant effects.

\textbf{Marginal Structural Objects}

We begin with the primitive objects that characterize heterogeneity in both selection into treatment and neighborhood exposure.

We define the \emph{Marginal Treatment Effect (MTE)} under interference as
\begin{equation}
\label{eq:MTE-bcies}
\mathop{\mathrm{MTE}}(\bar d_{\mathcal N},v,x)
\coloneqq
\mathop{\mathrm{\mathbbm{E}}}\!\left[
Y_i^{(1)}(\bar d_{\mathcal N})-Y_i^{(0)}(\bar d_{\mathcal N})
\mid V_i=v,\; X_i=x
\right].
\end{equation}
This generalizes the classical marginal treatment effect to settings with interference by allowing treatment effects to vary with both latent selection heterogeneity \(v\) and the neighborhood exposure level \(\bar d_{\mathcal N}\).

We define the \emph{Marginal Spillover Effect (MSE)} as
\begin{equation}
\label{eq:MSE-bcies}
\mathop{\mathrm{MSE}}(d,\bar d_{\mathcal N},v,x)
\coloneqq
\frac{\partial}{\partial \bar d_{\mathcal N}}
\mathop{\mathrm{\mathbbm{E}}}\!\left[
Y_i^{(d)}(\bar d_{\mathcal N})
\mid V_i=v,\; X_i=x
\right],
\qquad d\in\{0,1\},
\end{equation}
This measures the local causal response of potential outcomes to a marginal increase in neighborhood treatment exposure for individuals with treatment status \(d\) and latent resistance \(v\).

\begin{theorem}[Identification of Marginal Structural Objects]
\label{Theorem2}
Suppose Assumptions \ref{Assp1}--\ref{Assp4} hold. Then the marginal structural objects
$$
\mathop{\mathrm{MTE}}(\bar d_{\mathcal N},v,x)
\quad\text{and}\quad
\mathop{\mathrm{MSE}}(d,\bar d_{\mathcal N},v,x),\; d\in\{0,1\},
$$
are identified. In particular, 
\begin{align}
\label{eq:MTE-identified}
\mathop{\mathrm{MTE}}(\bar d_{\mathcal N},v,x)
&=
(\delta^{(1)}-\delta^{(0)})\bar d_{\mathcal N}
+
x(\beta^{(1)}-\beta^{(0)})
+
\mathop{\mathrm{\mathbbm{E}}}\!\left[
\varepsilon_i^{(1)}-\varepsilon_i^{(0)}
\mid V_i=v
\right],
\\
\label{eq:MSE-identified}
\mathop{\mathrm{MSE}}(d,\bar d_{\mathcal N},v,x)
&=\delta^{(d)}, \qquad d\in\{0,1\}.
\end{align}
Moreover, under Assumption \ref{Assp4},
\begin{equation}
\label{eq:eps-diff-generic}
\mathop{\mathrm{\mathbbm{E}}}\!\left[
\varepsilon_i^{(1)}-\varepsilon_i^{(0)}
\mid V_i=v
\right] = - (\sigma_{1D}-\sigma_{0D})\Phi^{-1}(v),
\end{equation}
is identified from the finite-mixture distribution of
$\left(\varepsilon_i^{(D)},\varepsilon_i^{(1)},\varepsilon_i^{(0)}\right)$.
Hence both $\mathop{\mathrm{MTE}}(\bar d_{\mathcal N},v,x)$ and $\mathop{\mathrm{MSE}}(d,\bar d_{\mathcal N},v,x)$ are identified functions of the model primitives.
\end{theorem}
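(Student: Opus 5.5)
The plan is to work directly from the parametric structure in Assumption \ref{Assp2} and to reduce every claimed identity to a conditional-expectation calculation for the disturbances. The model primitives $\beta^{(1)},\beta^{(0)},\delta^{(1)},\delta^{(0)},\sigma_{1D},\sigma_{0D}$ are already identified by Theorem \ref{Theorem1}. It therefore suffices to show that $\mathop{\mathrm{MTE}}$ and $\mathop{\mathrm{MSE}}$ equal the stated known functions of these primitives.

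\textbf{Step 1 (separation of the mean part).} Fix $\bar d_{\mathcal N}$ and $x$. By Assumption \ref{Assp1}, $Y_i^{(d)}(\bar d_{\mathcal N})=\delta^{(d)}\bar d_{\mathcal N}+x\beta^{(d)}+\varepsilon_i^{(d)}$ on $\{X_i=x\}$. Here $\bar d_{\mathcal N}$ is a fixed counterfactual argument, not the realized $\bar D_{\mathcal N i}$. Hence
\[
\mathop{\mathrm{\mathbbm{E}}}[Y_i^{(d)}(\bar d_{\mathcal N})\mid V_i=v,X_i=x]=\delta^{(d)}\bar d_{\mathcal N}+x\beta^{(d)}+\mathop{\mathrm{\mathbbm{E}}}[\varepsilon_i^{(d)}\mid V_i=v,X_i=x].
\]
Since $V_i=\Phi(-\varepsilon_i^{(D)})$ is a function of $\boldsymbol\varepsilon_i$, Assumption \ref{Assp3}(i) lets me drop the conditioning on $X_i$. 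Differencing over $d$ then gives \eqref{eq:MTE-identified}. Differentiating in $\bar d_{\mathcal N}$ gives \eqref{eq:MSE-identified}, because the disturbance term does not depend on $\bar d_{\mathcal N}$.

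\textbf{Step 2 (the control function under the mixture).} This is the step I expect to need the most care, because under a mixture $\mathop{\mathrm{\mathbbm{E}}}[\varepsilon^{(1)}\mid\varepsilon^{(D)}]$ is generally nonlinear. The key observation is that every component has $\mathop{\mathrm{Var}}(\varepsilon^{(D)}\mid g)=1$ and mean zero. So the marginal density of $\varepsilon_i^{(D)}$ given the latent class $g$ is $\phi(\cdot)$ for every $g$. By Bayes' rule, $P(g\mid\varepsilon_i^{(D)}=e)=\pi_g$, so the class label is independent of $\varepsilon_i^{(D)}$. Within class $g$, joint normality gives $\mathop{\mathrm{\mathbbm{E}}}[\varepsilon_i^{(d)}\mid\varepsilon_i^{(D)}=e,g]=\sigma_{dD,g}\,e$. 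Averaging over $g$ with weights $\pi_g$ then yields $\mathop{\mathrm{\mathbbm{E}}}[\varepsilon_i^{(d)}\mid\varepsilon_i^{(D)}=e]=\sigma_{dD}\,e$, where $\sigma_{dD}=\sum_g\pi_g\sigma_{dD,g}$. This equals the aggregate covariance in Theorem \ref{Theorem1}, since all components have mean zero.

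\textbf{Step 3 (change of variable).} Because $\Phi$ is strictly increasing, conditioning on $V_i=v$ is the same as conditioning on $\varepsilon_i^{(D)}=-\Phi^{-1}(v)$. Substituting into Step 2 and differencing over $d$ gives \eqref{eq:eps-diff-generic}. Combining with Step 1 expresses both objects as known functions of parameters identified in Theorem \ref{Theorem1}, which proves the theorem.
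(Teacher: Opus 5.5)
Your argument is correct and is essentially the paper's proof. You substitute the linear potential outcomes, which come from Assumption~\ref{Assp2} rather than Assumption~\ref{Assp1} as you cite. You then condition on $V_i=v$, $X_i=x$ and use $\varepsilon_i^{(D)}=-\Phi^{-1}(V_i)$ together with $\mathop{\mathrm{\mathbbm{E}}}[\varepsilon_i^{(d)}\mid\varepsilon_i^{(D)}=e]=\sigma_{dD}\,e$. Finally you differentiate to get the MSE and invoke Theorem~\ref{Theorem1} to identify the primitives. Your Step 2 simply re-derives the paper's Lemma~\ref{lemma:mix}, which the paper's proof of this theorem uses only tacitly. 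Your explicit appeal to Assumption~\ref{Assp3}(i) to drop the conditioning on $X_i$ fills in a step the paper also leaves implicit.
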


\emph{Proof.} See Appendix \ref{BCIES-APDX_ProofProp2}.

Evaluated at mean values of the covariates \(x\), MTE would exhibit heterogeneity in treatment effects due to \(\bar{d}_{\mathcal{N}}\) if \(\delta^{(1)} - \delta^{(0)} \neq 0\). Furthermore, \(\delta^{(1)} - \delta^{(0)}\) implies the patterns of interaction effects between individual treatment and neighborhood treatment: Positive interaction (\(\delta^{(1)} - \delta^{(0)} > 0\)) means the treatment is more valuable when more of neighbors are treated. In contrast, negative interaction (\(\delta^{(1)} - \delta^{(0)} < 0\)) means the treatment is more valuable when less of neighbors are treated.

\textbf{Policy-Relevant Effects}

Policy changes typically modify eligibility rules, subsidies, or program intensity, thereby shifting the probability of treatment participation. Such changes do not affect all individuals equally: they primarily induce participation among individuals who are marginal with respect to treatment choice. In contexts of social or spatial interactions, policy changes may also alter outcomes indirectly through changes in neighborhood treatment exposure. Our aim is therefore to evaluate \emph{policy-relevant average effects per induced participant}. In particular, we decompose the impact of a policy change into two components: a \emph{direct effect}, capturing the gain for individuals who are induced into treatment by the policy, and a \emph{spillover effect}, capturing the gain generated through the induced change in neighborhood treatment exposure. This framework extends the policy-rerelevant treatment effect (PRTE) concept of \citet{heckman2005structural} to settings with network or spatial interactions. The normalization by the induced participation share ensures that all policy-relevant effects are interpreted as average gains per additional participant generated by the policy change, which is the natural metric for policy evaluation.

Let \(a\) and \(a'\) denote two policy regimes, with \(a'\) representing the more generous policy. We assume that each policy modifies the treatment-selection rule through a known transformation of the identified selection equation, while the latent resistance \(V_i\) remains invariant across policy regimes Thus, for each policy regime \(r \in \{a,a'\}\),
\[
D_i^r = \mathbbm{1}\{P^r(Z_i,X_i)\ge V_i\},
\]
where \(P^r(Z_i,X_i)\) denotes the counterfactual treatment propensity under policy \(r\). The corresponding neighborhood treatment exposure is
\[
\bar D_{\mathcal N i}^r = \sum_{j\neq i} w_{ij} D^r_j,
\]
and the realized outcome is
\[
Y_i^r = Y_i^{(D_i^r)}(\bar D_{\mathcal N i}^r).
\]
We assume that the policy change is \emph{pointwise monotone}, namely,
\[
P^{a'}(z,x) \ge P^{a}(z,x), \text{ for almost every } (z,x).
\]
Thus, policy \(a'\) weakly expands participation relative to policy \(a\).

Define the conditional share of \emph{induced participants} by
\[
\Delta P(x;a,a')
\coloneqq
\mathop{\mathrm{\mathbbm{E}}}[P^{a'}(Z_i,X_i) - P^{a}(Z_i,X_i)\mid X_i=x],
\]
and assume throughout that \(\Delta P(x;a,a')>0\). This quantity represents the expected increase in treatment participation generated by the policy among individuals with covariates \(X_i=x\).

\emph{Policy-Relevant Direct Effect (PRDE)}

The Policy-Relevant Direct Effect is defined as
\[
\mathop{\mathrm{PRDE}}(x;a,a') \coloneqq \mathop{\mathrm{\mathbbm{E}}}\left[Y_i^{(1)}(\bar D_{\mathcal N i}^{a})- Y_i^{(0)}(\bar D_{\mathcal N i}^{a}) \mid P^{a}(Z_i,X_i)<V_i\le P^{a'}(Z_i,X_i), X_i=x \right].
\]
This estimand measures the average treatment gain for individuals induced into treatment by the policy, evaluated at the neighborhood treatment exposure that would prevail under the baseline policy \(a\).

\emph{Policy-Relevant Spillover Effect (PRSE)}

The Policy-Relevant Spillover Effect is defined as
\[
\mathop{\mathrm{PRSE}}(x;a,a') \coloneqq
\frac{\mathop{\mathrm{\mathbbm{E}}}\left[Y_i^{(D_i^{a'})}(\bar D_{\mathcal N i}^{a'}) - Y_i^{(D_i^{a'})}(\bar D_{\mathcal N i}^{a}) \mid X_i=x\right]}{\Delta P(x;a,a')}.
\]
This estimand isolates the contribution of the policy-induced change in neighborhood treatment exposure while holding each individual's treatment status fixed at its value under the new policy \(a'\).

\emph{Policy-Relevant Total Effect (PRTOT)}

The overall policy effect per induced participant is
\[
\mathop{\mathrm{PRTOT}}(x;a,a') \coloneqq \frac{\mathop{\mathrm{\mathbbm{E}}}[Y_i^{a'}-Y_i^{a}\mid X_i=x]}{\Delta P(x;a,a')}.
\]
Using the decomposition,
\[
Y_i^{a'}-Y_i^{a}
=
\underbrace{
\Big(
Y_i^{(D_i^{a'})}(\bar D_{\mathcal N i}^{a'})
-
Y_i^{(D_i^{a'})}(\bar D_{\mathcal N i}^{a})
\Big)
}_{\text{spillover component}}
+
\underbrace{
\Big(
Y_i^{(D_i^{a'})}(\bar D_{\mathcal N i}^{a})
-
Y_i^{(D_i^{a})}(\bar D_{\mathcal N i}^{a})
\Big)
}_{\text{direct component}},
\]
it follows that
\[
\mathop{\mathrm{PRTOT}}(x;a,a') = \mathop{\mathrm{PRDE}}(x;a,a') + \mathop{\mathrm{PRSE}}(x;a,a').
\]

To characterize the policy-relevant direct effect, define the mean neighborhood exposure under policy \(a\) among policy-induced participants with latent resistance \(v\) by
\[
\bar d_{\mathcal N}^{a,S}(v,x)
\coloneqq
\mathop{\mathrm{\mathbbm{E}}}\!\left[
\bar D_{\mathcal N i}^{a}
\;\middle|\;
P^a(Z_i,X_i)<v\le P^{a'}(Z_i,X_i),
V_i=v,
X_i=x
\right].
\]

\begin{theorem}[Identification of Policy-Relevant Direct, Spillover, and Total Effects]
\label{Theorem3}

Suppose Assumptions \ref{Assp1}--\ref{Assp4} and the policy-counterfactual conditions stated above hold. Then, 
\[
\mathop{\mathrm{PRDE}}(x;a,a')
=
\int_0^1
\mathop{\mathrm{MTE}}\!\left(
\bar d_{\mathcal N}^{a,S}(v,x),
v,
x
\right)
h_{PR}(v\mid x;a,a')\,dv,
\]
where the policy weights are
\[
h_{PR}(v\mid x;a,a')
\coloneqq
\frac{
F_{P^a\mid X}(v\mid x)
-
F_{P^{a'}\mid X}(v\mid x)
}{
\Delta P(x;a,a')
}.
\]
Moreover,
$$
\mathop{\mathrm{PRSE}}(x;a,a')
=
\frac{
\mathop{\mathrm{\mathbbm{E}}}\!\left[
\delta^{(D_i^{a'})}
\left(
\bar D_{\mathcal N i}^{a'}
-
\bar D_{\mathcal N i}^{a}
\right)
\mid X_i=x
\right]
}{
\Delta P(x;a,a')
},
$$
where $\delta^{(D_i^{a'})} \coloneqq D_i^{a'}\delta^{(1)} + \left(1-D_i^{a'}\right)\delta^{(0)}.$

Consequently,
$$
\mathop{\mathrm{PRTOT}}(x;a,a') = \mathop{\mathrm{PRDE}}(x;a,a') + \mathop{\mathrm{PRSE}}(x;a,a'), 
$$
and the policy-relevant direct, spillover, and total effects are identified.
\end{theorem}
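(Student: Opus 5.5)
The plan is to handle the three pieces in the order PRSE, PRTOT, PRDE. The spillover piece follows directly from the linear structure in Assumption \ref{Assp2}. The total-effect identity is then purely algebraic. The direct piece requires a change of variables to the latent resistance $V_i$.

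\textbf{Spillover effect.} For the PRSE, I would substitute the parametric potential outcomes of \eqref{KeyModel2} into the numerator. For fixed own status $d=D_i^{a'}$, we have $Y_i^{(d)}(\bar D^{a'}_{\mathcal N i})-Y_i^{(d)}(\bar D^{a}_{\mathcal N i})=\delta^{(d)}(\bar D^{a'}_{\mathcal N i}-\bar D^{a}_{\mathcal N i})$. Both $X_i\beta^{(d)}$ and $\varepsilon_i^{(d)}$ cancel because they do not depend on exposure, and Assumption \ref{Assp1} guarantees the outcome depends on $\mathbf D_{-i}$ only through $\bar D_{\mathcal N i}$.

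Writing this as $D_i^{a'}\delta^{(1)}+(1-D_i^{a'})\delta^{(0)}$ and taking expectations conditional on $X_i=x$ gives the stated formula. That formula is identified for three reasons:
\begin{itemize}
\item $\delta^{(0)},\delta^{(1)}$ are identified by Theorem \ref{Theorem1};
\item $P^r$ is a known transformation of the identified selection equation;
\item the joint law of $(D_i^{r},\bar D^{r}_{\mathcal N i})_{r\in\{a,a'\}}$ is determined by $\mathbf W$, the observed $(Z_j,X_j)$, and the known $U(0,1)$ law of the $V_j$, which are independent across units by Assumption \ref{Assp4}.
\end{itemize}

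\textbf{Total effect.} The PRTOT identity follows by taking conditional expectations of the displayed pathwise decomposition of $Y_i^{a'}-Y_i^a$ and dividing by $\Delta P(x;a,a')$. For this to match the definitions, I must check that the direct component divided by $\Delta P$ equals PRDE. By pointwise monotonicity, $D_i^{a'}-D_i^{a}=\mathbbm 1\{P^a<V_i\le P^{a'}\}\in\{0,1\}$. The direct component therefore equals $\mathbbm 1\{P^a<V_i\le P^{a'}\}\,(Y_i^{(1)}-Y_i^{(0)})(\bar D^{a}_{\mathcal N i})$. Its expectation is $\Pr(P^a<V_i\le P^{a'}\mid x)$ times the conditional mean in PRDE. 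Since $V_i\perp (Z_i,X_i)$ and $V_i\sim U(0,1)$, that probability is $\mathbbm E[P^{a'}-P^a\mid x]=\Delta P$.

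\textbf{Direct effect.} For the PRDE I would condition on $V_i=v$ and integrate over $v$. Independence of $V_i$ from $(Z_i,X_i)$ gives
\[
\Pr(P^a<v\le P^{a'}\mid V_i=v,X_i=x)=F_{P^{a'}\mid X}(v^-\mid x)\text{-type terms}=F_{P^a\mid X}(v\mid x)-F_{P^{a'}\mid X}(v\mid x),
\]
up to a null set. Normalizing by $\Delta P$ yields the weight $h_{PR}$, which integrates to one because $\int_0^1(F_{P^a}-F_{P^{a'}})\,dv=\mathbbm E P^{a'}-\mathbbm E P^a$.

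Within the slice $\{V_i=v,\,P^a<v\le P^{a'},\,X_i=x\}$, the gain is linear in exposure by Theorem \ref{Theorem2}. It equals $(\delta^{(1)}-\delta^{(0)})\bar D^a_{\mathcal N i}+x(\beta^{(1)}-\beta^{(0)})+(\varepsilon^{(1)}_i-\varepsilon^{(0)}_i)$. The conditional mean of the exposure term is $\bar d^{a,S}_{\mathcal N}(v,x)$ by definition. For the error term, the key step is
\[
\mathbbm E[\varepsilon^{(1)}_i-\varepsilon^{(0)}_i\mid V_i=v,Z_i,X_i,\bar D^a_{\mathcal N i}]=\mathbbm E[\varepsilon^{(1)}_i-\varepsilon^{(0)}_i\mid V_i=v].
\]
This holds because $\bar D^a_{\mathcal N i}$ depends only on $(Z_j,X_j,V_j)_{j\ne i}$ and $\mathbf W$, which are independent of $\boldsymbol\varepsilon_i$ by cross-unit independence, while $(Z_i,X_i)\perp\boldsymbol\varepsilon_i$ by Assumption \ref{Assp3}. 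The slice-conditional mean of the gain is then exactly $\mathrm{MTE}(\bar d^{a,S}_{\mathcal N}(v,x),v,x)$ via \eqref{eq:MTE-identified}, and integrating against $h_{PR}$ gives the formula. Identification follows from Theorems \ref{Theorem1}--\ref{Theorem2} together with the identified law of $\bar D^a_{\mathcal N i}$.

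\textbf{Main obstacle.} I expect the delicate step to be this conditional-independence argument. It requires that $i$'s exposure under the counterfactual policy carries no information on $\boldsymbol\varepsilon_i$ given $V_i$. Two things make this go through: $\mathbf W$ is predetermined, and the policy changes only the propensities $P^r$, not the $V_j$. A second point to verify is that linearity of the MTE in exposure is what permits replacing the random $\bar D^a_{\mathcal N i}$ by its slice mean $\bar d^{a,S}_{\mathcal N}$; without linearity the formula would fail.
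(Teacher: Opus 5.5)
Your proposal is correct, and for the direct and total effects it follows the paper's proof closely. The shared steps are the indicator $S_i(a,a')=D_i^{a'}-D_i^{a}$, the identity $\Pr(S_i=1\mid X_i=x)=\Delta P(x;a,a')$, iterated expectations over $V_i$ with weight $F_{P^a\mid X}(v\mid x)-F_{P^{a'}\mid X}(v\mid x)$, the integration-by-parts check on $h_{PR}$, and the affine form of the MTE to pass from $\bar D^a_{\mathcal N i}$ to $\bar d^{a,S}_{\mathcal N}(v,x)$. Two of your steps are slightly more careful than the paper. Your left-limit remark is right: the exact weight is $F_{P^a\mid X}(v^-\mid x)-F_{P^{a'}\mid X}(v^-\mid x)$, which differs from the paper's only on a countable set. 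You also check explicitly that the direct component divided by $\Delta P$ is the PRDE.

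The one genuine difference is the spillover piece. You substitute the linear potential outcomes pathwise, whereas the paper writes the numerator as $\mathbbm{E}[\int_{\bar D^a_{\mathcal N i}}^{\bar D^{a'}_{\mathcal N i}}\mathrm{MSE}(D_i^{a'},\bar d_{\mathcal N},V_i,X_i)\,d\bar d_{\mathcal N}\mid X_i=x]$ and then uses $\mathrm{MSE}=\delta^{(d)}$. Your route is shorter, since it avoids justifying the replacement of a pathwise difference by the integral of a derivative of a conditional mean. The paper's route instead ties the PRSE to the MSE as a structural primitive.

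Your ``main obstacle'' is misplaced. Conditioning the error on $\bar D^a_{\mathcal N i}$ is stronger than needed. It requires $\boldsymbol\varepsilon_i$ to be jointly independent of $(Z_i,X_i)$ and of all $(Z_j,X_j,V_j)_{j\neq i}$, which Assumptions \ref{Assp3}--\ref{Assp4} do not literally state; pairwise independence from each block would not suffice. You do not need it. Linearity of expectation already splits the slice mean into an exposure term, equal to $\bar d^{a,S}_{\mathcal N}(v,x)$ by definition, and an error term. The slice $\{S_i=1,V_i=v,X_i=x\}$ is a function of $(V_i,Z_i,X_i)$, so Assumption \ref{Assp3} alone gives $\mathbbm{E}[\varepsilon_i^{(1)}-\varepsilon_i^{(0)}\mid S_i=1,V_i=v,X_i=x]=\mathbbm{E}[\varepsilon_i^{(1)}-\varepsilon_i^{(0)}\mid V_i=v]$, which is exactly the paper's step. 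Cross-unit independence enters only in identifying the laws of the counterfactual exposures, that is, $\bar d^{a,S}_{\mathcal N}$ and the PRSE numerator.
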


\emph{Proof.} See Appendix \ref{BCIES-APDX_ProofProp3}.

\section{Bayesian Estimation and Inference}\label{BCIES-Section3}

\subsection{Bayesian data augmentation}\label{bayesian-data-augmentation}

We conduct Bayesian inference for the structural parameters and the causal estimands defined in Section \ref{BCIES-Section2_2}. Posterior computation involves two latent-data features. First, treatment status reveals only the sign of the latent treatment-selection index \(D_i^*\). Second, only one of the two potential outcomes is observed for each unit. We therefore employ data augmentation, treating the latent selection index and the missing potential outcome as auxiliary variables.

Define \(\mathbf{P}_i= \bigl[Z_i \ X_i \bigr]\), \(\mathbf{Q}_i=\bigl[ \bar D_{\mathcal N i} \ X_i \bigr]\), \(\boldsymbol{\gamma}=\bigl[\alpha \ \beta^{(D)}\bigr]\), \(\boldsymbol{\kappa}_1=\bigl[\delta^{(1)} \ \beta^{(1)}\bigr]\), \(\boldsymbol{\kappa}_0=\bigl[\delta^{(0)} \ \beta^{(0)}\bigr]\). The Spillover Roy model can then be written as
\begin{equation}
\label{eq:GGRM-mixture-main}
\begin{aligned}
D_i^*
    &=\mathbf P_i^\top\boldsymbol{\gamma}
      +\varepsilon_i^{(D)},\\
Y_i^{(1)}
    &=\mathbf Q_i^\top\boldsymbol{\kappa}_1
      +\varepsilon_i^{(1)},\\
Y_i^{(0)}
    &=\mathbf Q_i^\top\boldsymbol{\kappa}_0
      +\varepsilon_i^{(0)},\\
D_i&=\mathbbm{1}\{D_i^*>0\},\\
Y_i&=D_iY_i^{(1)}+(1-D_i)Y_i^{(0)}.
\end{aligned}
\end{equation}

As in Section \ref{BCIES-Section2_2}, the joint distribution of the unobservables is represented by a finite mixture of multivariate normal distributions. Let \(c_i\in\{1,\ldots,G\}\) denote the latent mixture component for unit \(i\), with
\[
\Pr(c_i=g\mid\boldsymbol{\pi})=\pi_g,\qquad
\sum_{g=1}^G\pi_g=1,
\]
and
\begin{equation}
\label{eq:mixture-main}
\boldsymbol{\varepsilon}_i\mid c_i=g
\sim
\mathcal N(\mathbf 0,\mathbf\Sigma_g),
\end{equation}
where
\begin{equation}
\label{eq:Sigma-main}
\boldsymbol{\varepsilon}_i
= \begin{bmatrix}
\varepsilon_i^{(D)}\\
\varepsilon_i^{(1)}\\
\varepsilon_i^{(0)}
\end{bmatrix},
\qquad
\mathbf\Sigma_g=
\begin{bmatrix}
1 & \sigma_{1D,g} & \sigma_{0D,g}\\
\sigma_{1D,g} & \sigma_{1,g}^2 & \sigma_{10,g}\\
\sigma_{0D,g} & \sigma_{10,g} & \sigma_{0,g}^2
\end{bmatrix}.
\end{equation}
The normalization \(\Sigma_{g,11}=1\) fixes the scale of the latent
treatment-selection equation.

Let \(Y_i^{\mathrm{mis}}\) denote the unobserved potential outcome and
define the augmented outcome vector
\begin{equation*}
\mathbf{L}_i^* \coloneqq
\begin{bmatrix}
D_i^*\\
Y_i^{(1)}\\
Y_i^{(0)}
\end{bmatrix}
=
\begin{bmatrix}
D_i^*\\
D_iY_i+(1-D_i)Y_i^{miss}\\
D_iY_i^{miss}+(1-D_i)Y_i
\end{bmatrix},
\end{equation*}
With the corresponding block-diagonal design matrix and parameter vector
\[
\mathbf R_i=
\begin{bmatrix}
\mathbf P_i^\top & 0 & 0\\
0 & \mathbf Q_i^\top & 0\\
0 & 0 & \mathbf Q_i^\top
\end{bmatrix}, \qquad
\boldsymbol{\theta}
=
\begin{bmatrix}
\boldsymbol{\gamma}\\
\boldsymbol{\kappa}_1\\
\boldsymbol{\kappa}_0
\end{bmatrix},
\]
the augmented model is
\begin{equation}
\label{eq:augmented-model-main}
\mathbf L_i^*
=
\mathbf R_i\boldsymbol{\theta}
+
\boldsymbol{\varepsilon}_i,
\qquad
\boldsymbol{\varepsilon}_i\mid c_i=g
\sim\mathcal N(\mathbf0,\mathbf\Sigma_g).
\end{equation}

Conditional on the augmented data and mixture allocations, \eqref{eq:augmented-model-main} has a Gaussian regression representation. This representation is the basis of the posterior sampler developed below. Details of the complete-data likelihood and the conditional distributions of the augmented variables are provided in Appendix \ref{BCIES-APDX_sampling}.

\subsection{Prior specification and parameter expansion}\label{prior-specification-and-parameter-expansion}

We complete the model by assigning priors to the regression parameters, mixture probabilities, and component-specific covariance matrices. Specifically,
\begin{align}
\boldsymbol{\theta}&\sim\mathcal N(\underline{\boldsymbol{\mu}}_{\theta},\underline{\mathbf V}_{\theta}),
\\
\boldsymbol{\pi}&\sim\mathcal{D}ir(\underline{\omega}_1,\ldots,\underline{\omega}_G).
\end{align}
A complication arises in posterior simulation of the covariance matrices \(\mathbf\Sigma_g\). Because treatment depends only on the sign of \(D_i^*\), the scale of the latent selection equation is not identified and its disturbance variance is normalized to one, \(\Sigma_{g,11}=1\) (see Assumption \ref{Assp4}). Directly sampling \(\mathbf\Sigma_g\) subject to this restriction complicates covariance updating.

We address this problem using parameter expansion. For each mixture component \(g\), introduce a positive expansion parameter \(\tau_g\) and define
\begin{equation}
\mathbf A_g
\coloneqq
\mathop{\mathrm{diag}}(\tau_g,1,1),
\qquad
\widetilde{\mathbf\Sigma}_g
\coloneqq
\mathbf A_g\mathbf\Sigma_g\mathbf A_g.
\end{equation}
Unlike \(\mathbf\Sigma_g\), the expanded covariance matrix \(\widetilde{\mathbf\Sigma}_g\) is unrestricted and is assigned an inverse-Wishart prior,
\begin{equation}
\label{eq:IW-expanded-main}
\widetilde{\mathbf\Sigma}_g \sim\mathcal W^{-1}(\mathbf{I}_3,\underline{\nu}).
\end{equation}
Posterior simulation proceeds on this expanded parameter space. At each covariance update, an auxiliary value of \(\tau_g^2\) is first drawn from its conditional prior implied by equation \eqref{eq:IW-expanded-main}. This auxiliary draw rescales the selection-equation residuals and enters the expanded residual cross-product matrix. Conditional on the transformed residuals, the unrestricted covariance matrix \(\widetilde{\mathbf\Sigma}_g\) is then drawn from its inverse-Wishart conditional posterior. After this update, the scale associated with the new expanded covariance draw is \(\tau_g^{2}=\widetilde{\Sigma}_{g,11},\)
and the covariance matrix in the identified parameterization is recovered as
\begin{equation}
\label{eq:recoverSigma-main}
\mathbf\Sigma_g = \mathbf A_g^{-1}\widetilde{\mathbf\Sigma}_g\mathbf A_g^{-1},\qquad \text{where }\mathbf A_g = \mathop{\mathrm{diag}}(\tau_g,1,1) 
\end{equation}
which restores the identifying normalization \(\Sigma_{g,11}=1\). The parameter expansion therefore permits standard inverse-Wishart updating on an unrestricted covariance space while enforcing the normalization through deterministic rescaling. Parameter expansion can also improve the mixing of data-augmentation algorithms for latent-variable and sample-selection models \citep{ding2014bayesian, dougan2018bayesian}. The induced prior on \((\tau_g^2,\mathbf\Sigma_g)\) and the corresponding derivations are provided in Appendix \ref{BCIES-APDX_PX}.

\subsection{Posterior computation}\label{posterior-computation}

Let \(\Theta=\left(\boldsymbol{\theta},\boldsymbol{\pi},\{\mathbf\Sigma_g\}_{g=1}^G\right)\) denote the model parameters. Augmenting the observed data with \(\mathbf D^*\), \(\mathbf Y^{\mathrm{mis}}\), and the mixture allocations
\(\mathbf c\), the posterior distribution is proportional to
\begin{equation}
\label{eq:posterior-mixture-main}
p(\Theta,\mathbf D^*,\mathbf Y^{\mathrm{mis}},\mathbf c
   \mid\mathbf Y,\mathbf D) \propto
p(\mathbf Y,\mathbf D,\mathbf D^*,\mathbf Y^{\mathrm{mis}},\mathbf c
   \mid\Theta)
p(\boldsymbol{\theta})
p(\boldsymbol{\pi})
\prod_{g=1}^G p(\widetilde{\mathbf\Sigma}_g).
\end{equation}
We construct a parameter-expanded Gibbs sampler that alternates between latent-data augmentation and parameter updating. Given the current parameter values, the missing potential outcomes are sampled from Gaussian conditional distributions, the latent treatment indices from Gaussian distributions truncated according to observed treatment status, and the component allocations from multinomial distributions. Conditional on the resulting complete data, the regression parameters have a Gaussian posterior and the mixture probabilities have a Dirichlet posterior. Component-specific covariance matrices are updated on the expanded scale and subsequently transformed back to the identified parameterization using \eqref{eq:recoverSigma-main}. Algorithm \ref{alg:GGRM_mixture_short} summarizes the resulting sampler. Closed-form expressions for all conditional posterior distributions and further implementation details are provided in Appendix \ref{BCIES-APDX_MCMC}.

\begin{algorithm}[H]
\caption{Parameter-expanded Gibbs sampler}
\label{alg:GGRM_mixture_short}

Initialize
$\boldsymbol{\theta}$,
$\boldsymbol{\pi}$,
$\{\mathbf\Sigma_g\}_{g=1}^G$,
and $\mathbf c$.

\For{$s=1,\ldots,S$}{

\textbf{1. Latent-data augmentation}

\Indp
Sample missing potential outcomes
$\mathbf Y^{\mathrm{mis}}$;

Sample latent treatment indices
$\mathbf D^*$ subject to
$D_i=\mathbbm 1\{D_i^*>0\}$;

Sample mixture allocations $\mathbf c$.

\Indm

\textbf{2. Parameter updates}

\Indp
Sample regression parameters $\boldsymbol{\theta}$;

Sample mixture probabilities $\boldsymbol{\pi}$;

For each $g$, sample the auxiliary expansion scale $\tau_g^2$,
update $\widetilde{\boldsymbol{\Sigma}}_g$ on the expanded scale,
and normalize to recover $\boldsymbol{\Sigma}_g$.

\Indm
}

\Return retained posterior draws of
$\boldsymbol{\theta}$,
$\boldsymbol{\pi}$,
and $\{\mathbf\Sigma_g\}_{g=1}^G$.
\end{algorithm}

\subsection{Posterior inference for causal and policy-relevant effects}\label{posterior-inference-for-causal-and-policy-relevant-effects}

The structural and policy-relevant causal quantities introduced in Section \ref{BCIES-Section2_2} are functions of the model parameters and, for policy counterfactuals, of the treatment and exposure distributions induced by alternative policy regimes. Bayesian inference for these quantities follows directly from the retained posterior draws. For each posterior draw \(\Theta^{[s]}\), we evaluate the corresponding marginal treatment and
spillover effects,
\[
{\mathop{\mathrm{MTE}}}^{[s]}(\bar d_{\mathcal N},v,x),
\qquad
{\mathop{\mathrm{MSE}}}^{[s]}(d,\bar d_{\mathcal N},x),
\]
using the expressions derived in Section \ref{BCIES-Section2_2}. Posterior means are used as point estimates, and posterior quantiles provide credible intervals. Thus, posterior uncertainty about treatment selection, outcome responses, and their dependence is propagated jointly to the heterogeneous causal effects. For a counterfactual policy regime \(\mathcal P\), we additionally construct the treatment decisions and neighborhood exposures implied by the policy at each posterior draw. Comparing these quantities across the baseline and counterfactual regimes yields posterior draws of the policy-relevant direct and spillover effects,
\[
{\mathop{\mathrm{PRDE}}}^{[s]},
\qquad
{\mathop{\mathrm{PRSE}}}^{[s]},
\qquad
{\mathop{\mathrm{PRTOT}}}^{[s]}
=
{\mathop{\mathrm{PRDE}}}^{[s]}+{\mathop{\mathrm{PRSE}}}^{[s]}.
\]
Hence, uncertainty in the structural parameters is propagated through both endogenous treatment participation and the neighborhood exposure generated by counterfactual policies.

\section{Simulation Study}\label{BCIES-Section4}

We assess the finite-sample performance of the proposed framework through a series of Monte Carlo experiments. The baseline design incorporates both selection on unobservables and spillovers, the two features that motivate the Spillover Roy model. We evaluate recovery of the structural parameters and heterogeneous marginal treatment effects and compare the correctly specified \emph{Spillover Roy model (SRM)} with a misspecified \emph{Non-Spillover Roy model (NSRM)} that omits neighborhood exposure. Additional robustness designs are reported in Appendix \ref{BCIES-APDX_Simulations}.

\subsection{Data Generating Processes}\label{BCIES-Section4_1}

For each replication, we generate five exogenous variables \(\widetilde{\mathbf X}_k\), \(k=1,\ldots,5\) independently from the standard normal distribution and set \(\mathbf X=\left[\boldsymbol{\iota}_n^\top,\widetilde{\mathbf X}^\top\right]^\top.\). The instrumental variable \(\mathbf Z\) is also generated independently from the same distribution. We construct the interaction matrix \(\mathbf W\) following the interaction design described in \citet{liu2010gmm}. The matrix is block diagonal, with each block representing a group-specific interaction network. The sample is partitioned into \(G = 30\) groups. Group sizes \(m_g\) are allowed to vary around \(n/G\): for the first \(G-1\) groups, \(m_g\) is drawn from
\[
\left\{
\lfloor n/G\rfloor-2,\ldots,\lfloor n/G\rfloor+3
\right\},
\]
and the size of the final group is chosen so that \(\sum_{g=1}^G m_g=n\). Within group \(g\), the interaction matrix \(\mathbf W_g\) is generated as follows. For each row \(i=1,\ldots,m_g\), we draw \(\tau_{ig}\) uniformly from \(\{1,2,3,4\}\) and connect unit \(i\) to the subsequent \(\tau_{ig}\) units, wrapping around the group boundary when necessary. We then symmetrize the group-specific matrices and construct
\[
\mathbf{W} \coloneqq \mathop{\mathrm{diag}}(\bf{W}_1^\top+\bf{W}_1,\ldots,\bf{W}_g^\top+\bf{W}_g)
\]
After row normalization, neighborhood exposure is given by \(\bar{\mathbf{D}}_{\mathcal{N}} =\mathbf{W}\mathbf{D}\). Treatment selection and potential outcomes follow the Spillover Roy model described in \eqref{KeyModel2}. We set
\[
\beta^{(D)} = \left[0,0.5,-0.5,0.3,-0.3,-0.2\right]^\top, 
\]
\[
\beta^{(1)} = \left[2,0.4,-0.4,0.3,-0.3,0.2\right]^\top, \quad \beta^{(0)} = \left[1,0.4,-0.4,0.3,-0.3,0.2\right]^\top, 
\]
with instrument strength \(\alpha = 1.5\). Spillovers are present in both potential-outcome regimes, with \(\delta^{(1)} = 1.5\) and \(\delta^{(0)} = 0.5\). The disturbance vector
\(\epsilon_i = \left[\epsilon_i^{(D)} , \epsilon_i^{(1)} , \epsilon_i^{(0)}\right]^\top\) is independently distributed across units as
\[
\begin{aligned}
\epsilon_i\overset{iid}{\sim}\mathcal N(0,\Sigma),
\qquad
\Sigma=
\begin{bmatrix}
1 & 0.9 & 0.7\\
0.9 & 1 & 0.6\\
0.7 & 0.6 & 1
\end{bmatrix},
\end{aligned}
\]
Thus, \(\sigma_{D}^2 = \sigma_{1}^2 = \sigma_{0}^2 = 1\) and \((\rho_{1D},\rho_{0D},\rho_{10}) = (0.9,0.7,0.6)\). The nonzero correlations between the selection disturbance and the potential-outcome disturbances generate selection on unobservables, while \(\rho_{10}>0\) induces positive cross-regime dependence between \(\bigl(Y^{(1)},Y^{(0)}\bigr)\).

We consider sample sizes \(n\in\{500,1{,}000,2{,}000\}\). For each simulated sample, we estimate two specifications. The \emph{SRM} includes neighborhood exposure and corresponds to the correctly specified model. The \emph{NSRM} omits \(\bar D_{\mathcal Ni}\) and therefore provides a benchmark for assessing the consequences of ignoring spillovers. For each specification, the MCMC sampler is run for \(11{,}000\) iterations, with the first \(1{,}000\) draws discarded as burn-in. The prior hyperparameters are
\[\underline{\boldsymbol{\mu}}_{\theta} = \mathbf{0}_{21};\quad \underline{\mathbf V}_{\theta} = 10^2\mathbf{I}_{21};\quad \underline{\nu} = 4;\quad   \underline{\omega}_1 = \ldots = \underline{\omega}_G = 1/G.\]
Across \(N_{\mathrm{sim}}=1{,}000\) Monte Carlo replications, posterior means are used as point estimates and 95\% posterior credible intervals are used for interval estimation. We report Monte Carlo bias, root mean squared error (RMSE), and empirical coverage of the 95\% credible intervals.

\subsection{Simulation Results}\label{BCIESsection4.2}

Table \ref{tab:tab-SMCresults-Mar26-DGP2-SI} reports finite-sample performance for the structural parameters. Under the correctly specified SRM, biases are generally small, RMSEs decline with the sample size, and empirical coverage is close to the nominal 95\% level for most parameters. The results indicate that the proposed Bayesian procedure accurately recovers the principal features of the Spillover Roy model in samples of the sizes considered. The NSRM produces a markedly different pattern. By construction, it sets the spillover coefficients \(\delta^{(1)}\) and \(\delta^{(0)}\) to zero even though both are nonzero in the data-generating process. More importantly, this omission contaminates estimation of other structural parameters. In particular, the outcome coefficients and several covariance parameters exhibit persistent bias and substantial coverage distortions. These discrepancies do not disappear as \(n\) increases, consistent with misspecification bias rather than finite-sample variability.

\begin{spacing}{1.5}
\begingroup
\setlength{\tabcolsep}{3.5pt}
\renewcommand{\arraystretch}{0.88}
\begin{table}[H]
\centering
\caption{\label{tab:tab-SMCresults-Mar26-DGP2-SI}Simulation Results for Model Parameters}
\centering
\resizebox{\ifdim\width>\linewidth\linewidth\else\width\fi}{!}{
\begin{threeparttable}
\begin{tabular}[t]{cccccccccccccccc}
\toprule
\multicolumn{3}{c}{ } & \multicolumn{4}{c}{Quantities of Interest} & \multicolumn{9}{c}{Other Parameters} \\
\cmidrule(l{3pt}r{3pt}){4-7} \cmidrule(l{3pt}r{3pt}){8-16}
Model & Metric & n & $\delta^{(1)}$ & $\delta^{(0)}$ & $\delta^{(1)}-\delta^{(0)}$ & $\sigma_{1D}-\sigma_{0D}$ & $\alpha$ & $\beta^{(D)}$ & $\beta_1^{(1)}$ & $\beta_1^{(0)}$ & $\sigma_1^2$ & $\sigma_0^2$ & $\rho_{1D}$ & $\rho_{0D}$ & $\rho_{10}$\\
\midrule
 &  & True Value & 1.500 & 0.500 & 1.000 & 0.200 & 1.500 & 0.000 & 2.000 & 1.000 & 1.000 & 1.000 & 0.900 & 0.700 & 0.600\\
\cmidrule{1-16}
 &  & 500 & 0.003 & -0.002 & 0.006 & -0.040 & 0.064 & -0.020 & 0.014 & -0.001 & 0.009 & 0.007 & -0.046 & -0.006 & 0.105\\

 &  & 1000 & 0.007 & 0.000 & 0.007 & -0.024 & 0.028 & -0.012 & 0.005 & 0.000 & 0.002 & 0.002 & -0.024 & -0.001 & 0.110\\

 & \multirow{-3}{*}{\centering\arraybackslash Bias} & 2000 & -0.006 & -0.001 & -0.006 & -0.013 & 0.017 & -0.005 & 0.010 & 0.000 & 0.000 & 0.001 & -0.012 & 0.000 & 0.110\\

 &  & 500 & 0.220 & 0.240 & 0.325 & 0.119 & 0.142 & 0.076 & 0.130 & 0.148 & 0.105 & 0.098 & 0.061 & 0.082 & 0.126\\

 &  & 1000 & 0.152 & 0.176 & 0.233 & 0.087 & 0.085 & 0.054 & 0.089 & 0.108 & 0.067 & 0.070 & 0.035 & 0.058 & 0.122\\

 & \multirow{-3}{*}{\centering\arraybackslash RMSE} & 2000 & 0.108 & 0.119 & 0.163 & 0.063 & 0.059 & 0.037 & 0.064 & 0.072 & 0.048 & 0.049 & 0.023 & 0.042 & 0.119\\

 &  & 500 & 0.968 & 0.961 & 0.954 & 0.956 & 0.900 & 0.928 & 0.964 & 0.960 & 0.966 & 0.954 & 0.929 & 0.961 & 0.972\\

 &  & 1000 & 0.948 & 0.937 & 0.953 & 0.950 & 0.920 & 0.941 & 0.954 & 0.941 & 0.976 & 0.953 & 0.943 & 0.961 & 0.967\\

\multirow{-9}{*}[1\dimexpr\aboverulesep+\belowrulesep+\cmidrulewidth]{\centering\arraybackslash SRM} & \multirow{-3}{*}{\centering\arraybackslash Coverage} & 2000 & 0.942 & 0.948 & 0.956 & 0.952 & 0.921 & 0.941 & 0.954 & 0.952 & 0.979 & 0.957 & 0.958 & 0.960 & 0.960\\
\cmidrule{1-16}
 &  & 500 & -1.500 & -0.500 & -1.000 & -0.029 & 0.060 & -0.013 & 0.757 & 0.249 & 0.138 & 0.021 & -0.089 & -0.014 & 0.114\\

 &  & 1000 & -1.500 & -0.500 & -1.000 & -0.015 & 0.027 & -0.007 & 0.754 & 0.249 & 0.131 & 0.016 & -0.070 & -0.008 & 0.124\\

 & \multirow{-3}{*}{\centering\arraybackslash Bias} & 2000 & -1.500 & -0.500 & -1.000 & -0.011 & 0.016 & -0.002 & 0.755 & 0.251 & 0.124 & 0.016 & -0.061 & -0.005 & 0.130\\

 &  & 500 & 1.500 & 0.500 & 1.000 & 0.124 & 0.144 & 0.075 & 0.763 & 0.264 & 0.180 & 0.102 & 0.103 & 0.084 & 0.132\\

 &  & 1000 & 1.500 & 0.500 & 1.000 & 0.091 & 0.087 & 0.054 & 0.757 & 0.257 & 0.152 & 0.073 & 0.078 & 0.058 & 0.135\\

 & \multirow{-3}{*}{\centering\arraybackslash RMSE} & 2000 & 1.500 & 0.500 & 1.000 & 0.067 & 0.061 & 0.037 & 0.756 & 0.254 & 0.136 & 0.053 & 0.066 & 0.042 & 0.138\\

 &  & 500 & 0.000 & 0.000 & 0.000 & 0.963 & 0.891 & 0.932 & 0.000 & 0.173 & 0.805 & 0.962 & 0.676 & 0.964 & 0.987\\

 &  & 1000 & 0.000 & 0.000 & 0.000 & 0.957 & 0.924 & 0.941 & 0.000 & 0.016 & 0.656 & 0.944 & 0.508 & 0.969 & 0.986\\

\multirow{-9}{*}[1\dimexpr\aboverulesep+\belowrulesep+\cmidrulewidth]{\centering\arraybackslash NSRM} & \multirow{-3}{*}{\centering\arraybackslash Coverage} & 2000 & 0.000 & 0.000 & 0.000 & 0.957 & 0.922 & 0.935 & 0.000 & 0.000 & 0.422 & 0.946 & 0.296 & 0.965 & 0.963\\
\bottomrule
\end{tabular}
\begin{tablenotes}[para]
\item \textit{Notes:} This table displays the average bias (Bias), the Root Mean Squared Error (RMSE), and the coverage rate (Coverage) across $R=100$ replicates; where $\text{Bias}=R^{-1}\sum_{r=1}^R (\hat{\alpha}_r-\alpha),\text{ RMSE}=\sqrt{R^{-1}\sum_{r=1}^R (\hat{\alpha}_r-\alpha)^2},$ and $\text{ Coverage}=R^{-1}\sum_{r=1}^R \mathbbm{1}\{\alpha\in \widehat{CI}_{0.95,r}\}$. The rows contain results for models with/without spatial interference and for various sample size $n$.
\end{tablenotes}
\end{threeparttable}}
\end{table}

\endgroup
\end{spacing}

We next evaluate estimation of the marginal treatment effect over both latent resistance to treatment and neighborhood exposure. To keep the main presentation concise, Table \ref{tab:tab-SMCresults-DGP2-MTE-combined} reports results at three representative resistance values, \(v\in\{0.1,0.5,0.9\}\), for low, medium, and high neighborhood exposure, \(\bar d_{\mathcal N}\in\{0.1,0.5,0.9\}\). Complete results over \(v\in\{0.1,\ldots,0.9\}\) are reported in Supplementary Tables \ref{tab:tab-SMCresults-Mar26-DGP2-MTE1s}---\ref{tab:tab-SMCresults-Mar26-DGP2-MTE3s}.

\begin{table}[!ht]
\centering
\caption{\label{tab:tab-SMCresults-DGP2-MTE-combined}Finite-sample performance of MTE estimation across neighborhood exposure regimes.}
\centering
\fontsize{9}{11}\selectfont
\begin{threeparttable}
\begin{tabular}[t]{>{\raggedright\arraybackslash}p{1.1cm}>{\centering\arraybackslash}p{0.7cm}rrrrrrrrr}
\toprule
\multicolumn{2}{c}{ } & \multicolumn{3}{c}{$n=500$} & \multicolumn{3}{c}{$n=1000$} & \multicolumn{3}{c}{$n=2000$} \\
\cmidrule(l{3pt}r{3pt}){3-5} \cmidrule(l{3pt}r{3pt}){6-8} \cmidrule(l{3pt}r{3pt}){9-11}
Model & $v$ & Bias & RMSE & Coverage & Bias & RMSE & Coverage & Bias & RMSE & Coverage\\
\midrule
\addlinespace[0.3em]
\hline
\multicolumn{11}{l}{\textbf{Panel A. Low exposure}}\\
 & 0.1 & 0.372 & 0.429 & 0.603 & 0.386 & 0.416 & 0.267 & 0.390 & 0.406 & 0.044\\

 & 0.5 & 0.409 & 0.428 & 0.103 & 0.405 & 0.414 & 0.006 & 0.404 & 0.408 & 0.000\\

\multirow{-3}{*}{\raggedright\arraybackslash \hspace{1em}NSRM} & 0.9 & 0.446 & 0.482 & 0.452 & 0.424 & 0.443 & 0.154 & 0.418 & 0.428 & 0.012\\
\cmidrule{1-11}
 & 0.1 & -0.036 & 0.237 & 0.961 & -0.024 & 0.176 & 0.958 & -0.008 & 0.124 & 0.951\\

 & 0.5 & 0.016 & 0.168 & 0.963 & 0.006 & 0.118 & 0.961 & 0.009 & 0.085 & 0.964\\

\multirow{-3}{*}{\raggedright\arraybackslash \hspace{1em}SRM} & 0.9 & 0.068 & 0.217 & 0.974 & 0.037 & 0.149 & 0.969 & 0.026 & 0.110 & 0.953\\
\cmidrule{1-11}
\addlinespace[0.3em]
\hline
\multicolumn{11}{l}{\textbf{Panel B. Medium exposure}}\\
 & 0.1 & -0.028 & 0.217 & 0.958 & -0.014 & 0.155 & 0.948 & -0.010 & 0.113 & 0.947\\

 & 0.5 & 0.009 & 0.126 & 0.946 & 0.005 & 0.084 & 0.955 & 0.004 & 0.061 & 0.952\\

\multirow{-3}{*}{\raggedright\arraybackslash \hspace{1em}NSRM} & 0.9 & 0.046 & 0.187 & 0.974 & 0.024 & 0.131 & 0.970 & 0.018 & 0.097 & 0.959\\
\cmidrule{1-11}
 & 0.1 & -0.034 & 0.204 & 0.963 & -0.021 & 0.148 & 0.951 & -0.010 & 0.108 & 0.949\\

 & 0.5 & 0.018 & 0.114 & 0.958 & 0.009 & 0.076 & 0.961 & 0.007 & 0.055 & 0.960\\

\multirow{-3}{*}{\raggedright\arraybackslash \hspace{1em}SRM} & 0.9 & 0.070 & 0.175 & 0.968 & 0.040 & 0.121 & 0.962 & 0.024 & 0.086 & 0.961\\
\cmidrule{1-11}
\addlinespace[0.3em]
\hline
\multicolumn{11}{l}{\textbf{Panel C. High exposure}}\\
 & 0.1 & -0.428 & 0.479 & 0.514 & -0.414 & 0.442 & 0.276 & -0.410 & 0.425 & 0.053\\

 & 0.5 & -0.391 & 0.410 & 0.117 & -0.395 & 0.404 & 0.004 & -0.396 & 0.401 & 0.000\\

\multirow{-3}{*}{\raggedright\arraybackslash \hspace{1em}NSRM} & 0.9 & -0.354 & 0.398 & 0.610 & -0.376 & 0.397 & 0.237 & -0.382 & 0.394 & 0.030\\
\cmidrule{1-11}
 & 0.1 & -0.032 & 0.247 & 0.955 & -0.018 & 0.175 & 0.960 & -0.012 & 0.129 & 0.931\\

 & 0.5 & 0.020 & 0.177 & 0.948 & 0.012 & 0.122 & 0.954 & 0.005 & 0.086 & 0.951\\

\multirow{-3}{*}{\raggedright\arraybackslash \hspace{1em}SRM} & 0.9 & 0.072 & 0.220 & 0.959 & 0.043 & 0.156 & 0.960 & 0.022 & 0.106 & 0.961\\
\bottomrule
\end{tabular}
\begin{tablenotes}[para]
\item \textit{Notes:}  Panels A--C report posterior bias, RMSE, and 95\% credible interval coverage for representative values of the latent resistance to treatment ($v = 0.1, 0.5, 0.9$). Results are based on $R=1{,}000$ Monte Carlo replications.
\end{tablenotes}
\end{threeparttable}
\end{table}

\vspace{-0.5\baselineskip}

The SRM performs well across all three exposure levels. Bias is small, RMSE generally declines with \(n\), and empirical coverage remains close to 95\% throughout most of the resistance distribution. The NSRM behaves differently. At low and high exposure, its MTE estimates exhibit substantial and persistent bias, and coverage deteriorates sharply as the sample size increases. At medium exposure, the misspecification is less consequential because the omitted exposure component happens to generate much smaller distortion under this design. The contrast across exposure levels illustrates an important feature of the problem: a model that ignores spillovers may appear adequate at particular exposure values while failing severely elsewhere. Increasing the sample size therefore improves precision under the correctly specified model but does not eliminate the distortions generated by omitting neighborhood exposure. The resulting undercoverage is particularly pronounced in regions of the exposure space where the omitted spillover component is economically important. These results demonstrate that correctly modeling interference is necessary for reliable inference on heterogeneous treatment effects.

This finding is also relevant for the policy analysis in our proposed framework. Policy-relevant direct effects are constructed by averaging MTEs over individuals induced into treatment under a policy change and over the neighborhood exposures generated by the baseline policy. Reliable policy evaluation therefore requires accurate estimation of the MTE over both the resistance and exposure dimensions. The simulation evidence shows that the SRM provides such recovery in the baseline design, whereas an analysis that omits spillovers can substantially distort the MTE surface. Accordingly, the results support using the estimated SRM as an input to the policy counterfactual analysis considered later in the paper.
\section{Empirical Application}\label{BCIES-Section5}

\subsection{Institutional Context and Empirical Design}\label{BCIES-Section5_1}

To demonstrate the empirical relevance of the proposed framework, we investigate the effects of the Opportunity Zones (OZ) program, a major U.S. place-based tax incentive introduced by the Tax Cuts and Jobs Act of 2017. This program offers preferential tax treatment for investments in designated census tracts with the objective of stimulating local economic activity. The OZ setting is particularly well suited to our framework because designation was potentially endogenous and its effects may extend beyond designated tracts. Eligibility for OZ designation was determined primarily using pre-program socioeconomic conditions from the 2011--2015 American Community Survey (ACS). Census tracts generally qualified if their poverty rate exceeded \(20\%\) or their median family income was below \(80\%\) of the area median income. Approximately \(40\%\) of U.S. census tracts were eligible. Importantly, eligibility did not imply designation. State governors were given substantial discretion to nominate up to \(25\%\) of eligible tracts within their states, after which the nominations were certified by the U.S. Treasury. This two-stage process-rule-based eligibility followed by discretionary selection among eligible tracts creates scope for endogenous selection into OZ designation. Our outcome of interest is growth in the number of housing units at the census-tract level. Tax incentives may stimulate construction and other investment within designated tracts, but the resulting effects need not stop at tract boundaries. Designation may generate positive spillovers if investment in an OZ raises demand for development in nearby areas, or negative spillovers if investment is reallocated from neighboring tracts toward tax-advantaged locations. Existing empirical studies report mixed evidence on the effects of the OZ program \citep{corinth2024opportunity, freedman2023jue, chen2023jue, wheeler2022locally}. These features motivate an empirical specification that allows both endogenous selection into designation and spatial spillovers.

We apply the Spillover Roy model in Section \ref{BCIES-Section2} to the OZ setting by letting \(D_i = QOZ_i, Z_i = Political_i, X_i = Demographic_i,\) and \(\bar{D}_{\mathcal{N}i} = \overline{QOZ}_i\), with the disturbance vector following the finite-mixture specification in \eqref{eq:mixture-main}. The outcome \(Y_i\) is the housing-unit growth between 2017 and 2022. Individual treatment \(QOZ_i\) is an indicator variable equal to one if the tract was designated as a Qualified Opportunity Zone and zero if it was eligible but not designated, so the analysis focuses on the designation margin among OZ-eligible tracts. Neighborhood treatment \(\overline{QOZ} = \sum_{j \neq i}QOZ_i\) is the share of neighboring tracts designated as QOZs, based on a row-normalized spatial adjacency matrix \(\mathbf{W} = (w_{ij})\). OZ eligibility and designation are obtained from the Urban Institute, and tract boundaries used to construct \(\mathbf{W}\) are obtained from the U.S. Census Bureau's TIGER/Line Shapefiles. The demographic covariates include the poverty rate, median earnings, and employment rate constructed from the ACS 2013--2017 five-year estimates and enter both the selection and outcome equations to account for observed characteristics associated with OZ designation and housing development.

We use partisan alignment between a tract's state legislative representative and the governor as the excluded variable in the treatment-selection equation. Specifically, \(Political_i\) equals one if the representative of tract \(i\) in the state's lower legislative chamber and the governor belong to the same political party, and zero otherwise. Previous studies document that political alignment is associated with the likelihood of OZ designation \citep{alm2021land, frank2022determines, eldar2022does}, supporting instrument relevance. The identifying restriction is that, conditional on the included pre-treatment tract characteristics, partisan alignment affects the housing-unit growth between 2017 and 2022 only through OZ designation but does directly affects the potential outcomes. Appendix \ref{BCIES-APDX_EmpiricalApp} provides additional details on instrument construction and sensitivity analyses.

We focus on California, for which we can assemble comprehensive tract-level data on OZ designation, housing outcomes, demographic characteristics, political affiliation, and spatial linkages. The final sample contains \(3{,}699\) OZ-eligible census tracts, comprising \(727\) designated QOZs and \(2{,}972\) eligible but non-designated tracts (Non-QOZs). Supplementary Figure \ref{fig:fig-OZ-California} displays their spatial distribution illustrating the close geographic proximity of designated and non-designated tracts. Detailed variable definitions and data sources are provided in Supplementary Tables \ref{tab-definition}--\ref{tab-source}. Designated tracts are systematically more disadvantaged along several pre-treatment socioeconomic dimensions; detailed summary statistics are reported in Supplementary Table \ref{tab:tab-OZ-summarystats}. This reinforces the importance of accounting for nonrandom selection into OZ designation.

\subsection{Estimation Results}\label{BCIES-Section5_2}

\begin{table}[!ht]
\centering
\caption{\label{tab:tab-OZ-est-results}Posterior Estimates of Key Model Parameters}
\centering
\fontsize{9}{11.5}\selectfont
\begin{threeparttable}
\begin{tabular}[t]{>{\raggedright\arraybackslash}p{5.2cm}>{\centering\arraybackslash}p{2.6cm}>{\centering\arraybackslash}p{1.8cm}>{\centering\arraybackslash}p{3.4cm}}
\toprule
 & Posterior Mean & SD & 90\% Credible Interval\\
\midrule
\addlinespace[0.3em]
\multicolumn{4}{l}{\textbf{Treatment Selection}}\\
\hspace{1em}Partisan alignment ($\alpha$) & 0.162 & 0.070 & {}[0.049, 0.277]\\
\addlinespace[0.3em]
\multicolumn{4}{l}{\textbf{Neighborhood Exposure}}\\
\hspace{1em}QOZs ($\delta^{(1)}$) & 0.032 & 0.014 & {}[0.009, 0.055]\\
\hspace{1em}Non-QOZs ($\delta^{(0)}$) & 0.009 & 0.009 & {}[-0.006, 0.024]\\
\hspace{1em}$\delta^{(1)}-\delta^{(0)}$ & 0.023 & 0.016 & {}[-0.004, 0.051]\\
\addlinespace[0.3em]
\multicolumn{4}{l}{\textbf{Endogenous Selection}}\\
\hspace{1em}$\rho_{1D}$ & 0.182 & 0.134 & {}[-0.063, 0.412]\\
\hspace{1em}$\rho_{0D}$ & -0.130 & 0.052 & {}[-0.218, -0.045]\\
\addlinespace[0.3em]
\multicolumn{4}{l}{\textbf{Selection on Unobserved Gains}}\\
\hspace{1em}$\sigma_{1D}-\sigma_{0D}$ & 0.052 & 0.022 & {}[0.018, 0.092]\\
Observations & 3,699 &  & \\
\bottomrule
\end{tabular}
\begin{tablenotes}[para]
\item \textit{Notes:} Posterior means, standard deviations, and 90\% credible intervals are reported for the preferred specification with pre-treatment demographic controls. $\rho_{dD}$ denotes the correlation between the treatment-selection disturbance and the potential-outcome disturbance under treatment state $d$. $\sigma_{1D}-\sigma_{0D}$ measures selection on unobserved treatment gains. Full parameter estimates and alternative specifications are reported in Supplementary Table S.X.
\end{tablenotes}
\end{threeparttable}
\end{table}

Table \ref{tab:tab-OZ-est-results} reports posterior estimates of the key parameters from our preferred specification with pre-treatment demographic controls; full parameter estimates and alternative specifications are reported in Appendix \ref{BCIES-APDX_EmpiricalApp-estimates}. Partisan alignment is positively associated with OZ designation, with a 90\% credible interval excluding zero, supporting instrument relevance. The neighborhood-treatment coefficient is positive for QOZs but smaller and imprecisely estimated for non-QOZs. The estimated dependence between the treatment-selection and potential-outcome disturbances provides evidence of endogenous selection, while the positive estimate of \(\sigma_{1D}-\sigma_{0D}\) implies indicates selection on treatment gains. We therefore next examine how the MTE varies jointly with latent resistance and neighborhood exposure.

\begin{figure}[H]

{\centering \includegraphics[width=0.8\linewidth]{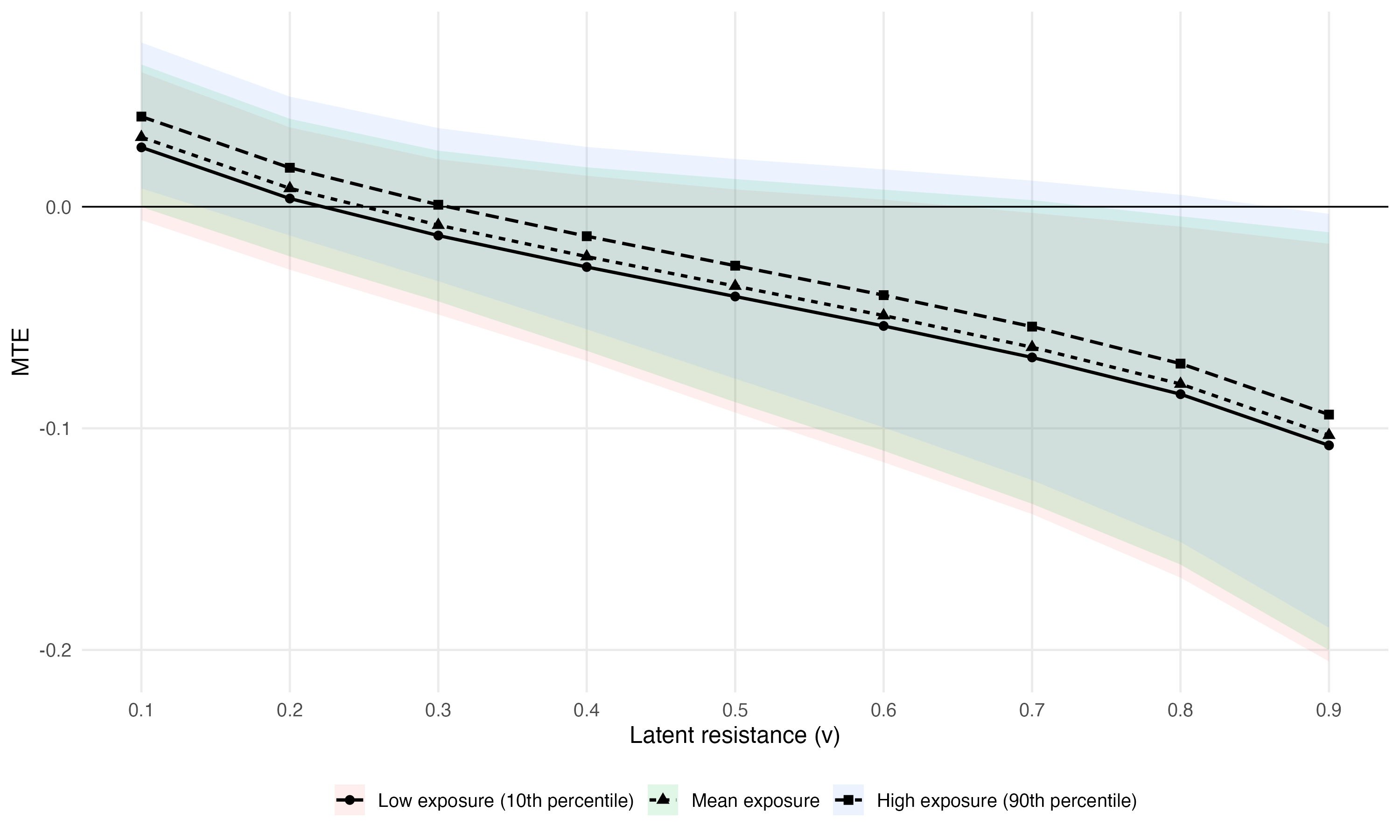} 

}

\caption[Marginal treatment effects by latent resistance and exposure.]{Marginal Treatment Effects by Latent Resistance and Exposure. The estimated MTE declines with latent resistance $v$ at all exposure levels, indicating negative selection on gains. Higher exposure shifts the MTE upward, consistent with positive spillover effects, although the shift is modest. Shaded bands show 90\% credible intervals.}\label{fig:fig-OZ-MTE-combined}
\end{figure}

\vspace{-0.75\baselineskip}

Figure \ref{fig:fig-OZ-MTE-combined} reveals substantial heterogeneity in the effect of OZ designation. The MTE declines with latent resistance \(v\), implying that tracts more likely to be designated experience larger gains, whereas effects become negative toward the upper end of the resistance distribution. Higher neighborhood OZ exposure shifts the MTE upward, although the magnitude of this exposure-related heterogeneity is modest. Thus, treatment gains vary systematically with both endogenous selection and the surrounding treatment environment. Detailed posterior estimates of the MTE are reported in Supplementary Table \ref{tab:tab-OZ-est-MTE}.

Under the realized OZ assignment, the average direct effect on treated tracts is approximately \(4.5\) percentage points, (\(90\% \text{CI}: [1.4, 7.8]\)), while neighborhood spillovers add approximately \(1.4\) percentage points, yielding a positive average total effect of \(5.9\) percentage points. By contrast, the average spillover effect on untreated tracts is small and statistically insignificant. Full estimates of average causal effects are reported in Supplementary Table \ref{tab:tab-OZ-est-CP}. Thus, the estimated gains are concentrated primarily among designated tracts, with comparatively limited spillover benefits to non-QOZs.

\subsection{Policy Counterfactual Analysis}\label{BCIES-Section5_3}

We next evaluate counterfactual expansions of OZ designation using the policy-relevant effects defined in Section \ref{BCIES-Section2}. We consider policies that increase the baseline treatment probability \(P_a\) according to
\[
P_a^{\tau} = P_a + \tau (1 - P_a), \quad \tau \in [0,1],
\]
where \(\tau\) closes a fraction of the remaining gap between the baseline treatment probability and one. Thus, larger values of \(\tau\) induce progressively broader expansions while preserving treatment probabilities within the unit interval. Because an expansion not only changes census tracts' own designation status but also neighborhood OZ exposure, its total effect reflects both direct gains for newly treated units and indirect gains arising from changes in the surrounding treatment environment.

Figure \ref{fig:fig-OZ-PRTE-shifta} reports the policy-relevant effects across counterfactual expansions, showing pronounced diminishing returns to OZ expansion. The PRDE declines steadily with \(\tau\), becoming negative under sufficiently large expansions. This pattern follows from the declining MTE profile: broader policies induce tracts farther along the latent-resistance margin, for which expected gains from designation are progressively smaller. By contrast, the PRSE increases with policy intensity as additional designations raise neighborhood OZ exposure. These spillover gains only partially offset the declining direct gains, however, so the PRTOT also falls and eventually becomes negative. Thus, the estimated benefits of expanding OZ designation depend importantly on the scale of expansion. Posterior estimates and the corresponding shares of induced tracts are reported in Supplementary Table \ref{tab:tab-OZ-PRTE-shifta}.

\begin{figure}[H]

{\centering \includegraphics[width=0.8\linewidth]{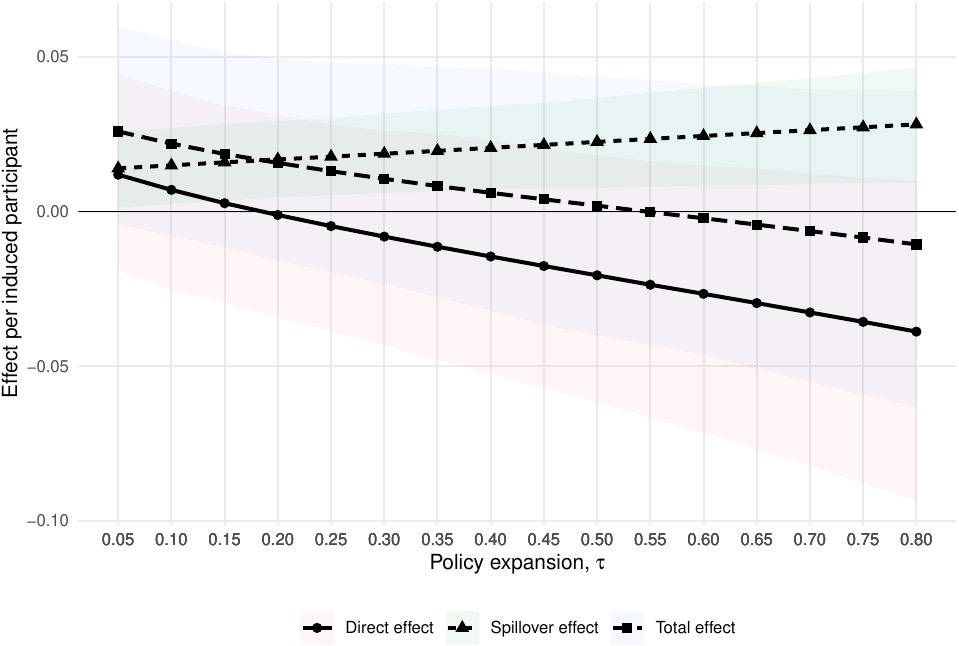} 

}

\caption{Policy-Relevant Effects under OZ Expansion. The figure reports posterior means of the policy-relevant direct (PRDE), spillover (PRSE), and total (PRTOT) effects per induced tract across counterfactual policy expansions. Shaded regions denote 90\% credible intervals.}\label{fig:fig-OZ-PRTE-shifta}
\end{figure}

\vspace{-0.75\baselineskip}

Figure \ref{fig:fig-OZ-PRSE-shifta-bygroup} further decomposes the spillover channel by treatment response type. Spillover gains are largest for policy-induced tracts and also increase for always-treated tracts as expansion raises neighborhood exposure. By contrast, estimated spillover effects for never-treated tracts remain small and imprecisely, with credible intervals including zero across the expansions considered. The estimated spillover benefits therefore accrue primarily to induced and already-treated tracts rather than broadly to tracts that remain untreated. Corresponding estimates are reported in Supplementary Table \ref{tab:tab-OZ-PRSE-shifta-bygroup}.

\begin{figure}[H]

{\centering \includegraphics[width=0.8\linewidth]{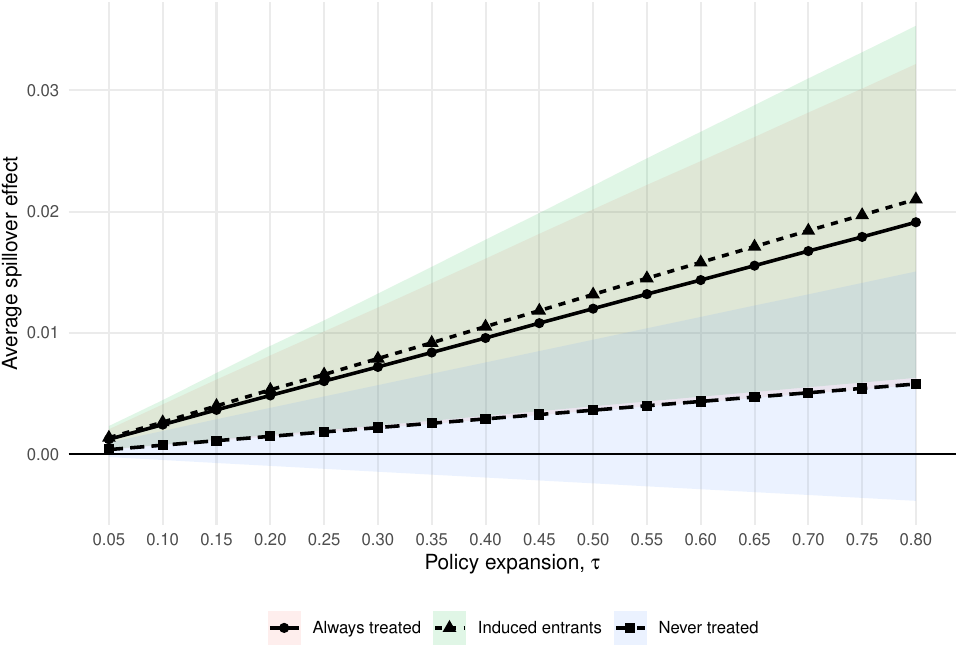} 

}

\caption{Group-Specific Spillover Effects under OZ Expansion. The figure reports posterior mean spillover effects by policy shift for always-treated units, induced entrants, and never-treated units. Shaded regions denote 90\% credible intervals.}\label{fig:fig-OZ-PRSE-shifta-bygroup}
\end{figure}

\vspace{-0.75\baselineskip}

Taken together, the counterfactual results are informative for recurring OZ designation decisions. The recent permanent extension of the program introduces new rounds of tract designation beginning in 2027, requiring states to select among eligible low-income communities. Our counterfactuals do not evaluate the new designation rules directly, but they illustrate an important trade-off relevant to such decisions. As designation expands, additional neighborhood spillovers coexist with diminishing direct gains as the policy reaches tracts with greater latent resistance; under sufficiently large expansions, the spillover gains are insufficient to offset the declining direct returns. Thus, the consequences of expanding a place-based program depend not only on how many additional areas are designated, but also on which areas are induced into treatment and how those designations alter surrounding treatment exposure.

\section{Conclusion}\label{BCIES-Section6}

This paper develops a framework for policy-relevant causal inference when treatment is endogenously selected and outcomes are subject to spillovers in a large network or spatial setting. The proposed Spillover Roy model extends the Generalized Roy framework by allowing potential outcomes to depend on both own treatment and neighborhood treatment exposure. This structure accommodates heterogeneity along the latent resistance-to-treatment margin and across exposure levels. We characterize the consequences of feasible policy changes that jointly alter treatment participation and neighborhood exposure. The resulting total policy effect decomposes into a direct effect operating through induced participation and a spillover effect operating through policy-induced changes in neighborhood exposure.

We develop a Bayesian data-augmentation approach for estimation and inference, using parameter expansion to accommodate the normalization of the latent selection equation and facilitate posterior computation. Simulations demonstrate reliable recovery of structural and heterogeneous causal effects and show that ignoring spillovers can substantially distort inference. In the application to the U.S. Opportunity Zones program, we find positive direct effects of designation on housing growth and heterogeneous treatment gains consistent with selection on gains. Spillover benefits are concentrated among designated and policy-induced tracts, whereas we find little evidence of benefits for neighboring tracts that remain untreated. Counterfactual policy experiments further indicate diminishing direct returns to program expansion, with spillover gains insufficient to offset these declines under large expansions. 

Several extensions merit further study. One is to allow treatment choices themselves to interact strategically, so that policy interventions propagate through equilibrium participation responses as well as outcome spillovers. A second direction is to relax the parametric structure and develop semiparametric or nonparametric identification and inference for policy-relevant effects under endogenous selection and interference, thereby broadening the robustness and applicability of the approach.

\clearpage

\bibliography{BCIES.bib}

\clearpage
\appendix

\renewcommand{\thesection}{S\arabic{section}}
\renewcommand{\thesubsection}{S\arabic{section}.\arabic{subsection}}
\renewcommand{\theequation}{S\arabic{equation}}
\renewcommand{\thefigure}{S\arabic{figure}}
\renewcommand{\thetable}{S\arabic{table}}
\renewcommand{\thealgocf}{S\arabic{algocf}}
\setcounter{section}{0}
\setcounter{subsection}{0}
\setcounter{equation}{0}
\setcounter{figure}{0}
\setcounter{table}{0}
\setcounter{algocf}{0}

\section*{Supplementary Appendices}\label{supplementary-appendices}
\addcontentsline{toc}{section}{Supplementary Appendices}

\section{Proofs for the Identification Results}\label{BCIES-APDX_Proofs}

\subsection{Proof of Theorem 1}\label{BCIES-APDX_ProofProp1}

We use the following auxiliary results in proving Theorem \ref{Theorem1}:

\begin{lemma}
\label{lemma:IMR}
Let $\varepsilon \sim \mathcal{N}(0,1)$ and $D = \mathbbm{1}\{\nu + \varepsilon > 0\}$. Then
$$
\mathop{\mathrm{\mathbbm{E}}}[\varepsilon \mid D=1] = \lambda(\nu) := \frac{\phi(\nu)}{\Phi(\nu)}, \quad
\mathop{\mathrm{\mathbbm{E}}}[\varepsilon \mid D=0] = -\lambda(-\nu) := -\frac{\phi(\nu)}{1-\Phi(\nu)}.
$$
where $\phi$ and $\Phi$ are the standard normal pdf and cdf.
The function $\lambda(\cdot)$ is called the inverse Mills ratio.
\end{lemma}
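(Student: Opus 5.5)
The plan is a direct computation from the density, using the identity $\phi'(x) = -x\phi(x)$ and the symmetry of the standard normal. Throughout, $\nu$ is a fixed real number; in the applications it is $\nu_i = Z_i\alpha + X_i\beta^{(D)}$, treated as fixed by conditioning on $(Z_i,X_i)$, which Assumption \ref{Assp3}(i) permits.

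\textbf{Step 1: rewrite the events.} First note that
\[
\{D=1\} = \{\varepsilon > -\nu\}, \qquad \{D=0\} = \{\varepsilon \le -\nu\}.
\]
By symmetry of the standard normal distribution,
\[
\Pr(D=1) = 1-\Phi(-\nu) = \Phi(\nu), \qquad \Pr(D=0) = \Phi(-\nu) = 1-\Phi(\nu).
\]
Both probabilities are strictly positive for every finite $\nu$, so the conditional expectations are well defined as ratios $\mathbbm{E}[\varepsilon\,\mathbbm{1}\{\cdot\}]/\Pr(\cdot)$.

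\textbf{Step 2: compute the truncated first moments.} Since $\frac{d}{dx}\phi(x) = -x\phi(x)$, an antiderivative of $x\phi(x)$ is $-\phi(x)$. Because $\phi(x)\to 0$ as $|x|\to\infty$ and $\phi(-\nu)=\phi(\nu)$, this gives
\[
\mathbbm{E}[\varepsilon\,\mathbbm{1}\{\varepsilon>-\nu\}] = \int_{-\nu}^{\infty} x\phi(x)\,dx = \phi(-\nu) = \phi(\nu),
\]
\[
\mathbbm{E}[\varepsilon\,\mathbbm{1}\{\varepsilon\le-\nu\}] = \int_{-\infty}^{-\nu} x\phi(x)\,dx = -\phi(\nu).
\]
As a check, the two pieces sum to $\mathbbm{E}[\varepsilon]=0$.

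\textbf{Step 3: divide.} Dividing the moments in Step 2 by the probabilities in Step 1 yields
\[
\mathbbm{E}[\varepsilon\mid D=1] = \frac{\phi(\nu)}{\Phi(\nu)} = \lambda(\nu), \qquad \mathbbm{E}[\varepsilon\mid D=0] = -\frac{\phi(\nu)}{1-\Phi(\nu)}.
\]
It remains to identify the second expression with $-\lambda(-\nu)$. This follows from
\[
\lambda(-\nu) = \frac{\phi(-\nu)}{\Phi(-\nu)} = \frac{\phi(\nu)}{1-\Phi(\nu)}.
\]

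There is no real obstacle here. The only care needed is the sign bookkeeping: the treatment event is $\varepsilon > -\nu$ rather than $\varepsilon > \nu$, and the symmetry identities $\phi(-\nu)=\phi(\nu)$ and $\Phi(-\nu)=1-\Phi(\nu)$ must be applied consistently so that the two stated forms of the $D=0$ case match.
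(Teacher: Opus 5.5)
Your proof is correct: rewriting $\{D=1\}$ as $\{\varepsilon>-\nu\}$, using the antiderivative $-\phi$ of $x\phi(x)$, and applying the symmetry identities $\phi(-\nu)=\phi(\nu)$ and $\Phi(-\nu)=1-\Phi(\nu)$ give both formulas, and they also confirm that the paper's two expressions for the $D=0$ case agree. The paper states this lemma as a standard auxiliary fact without proof, and your direct truncated-first-moment computation is the standard argument it implicitly relies on.
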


\begin{lemma}
\label{lemma:mix}
Under Assumption \ref{Assp4}, for each mixture component $g$,
$$
\mathop{\mathrm{\mathbbm{E}}}[\varepsilon^{(1)}_i \mid \varepsilon^{(D)}_i = \varepsilon, g] = \sigma_{1D,g}\varepsilon
\quad \text{and} \quad
\mathop{\mathrm{\mathbbm{E}}}[\varepsilon^{(0)}_i \mid \varepsilon^{(D)}_i = \varepsilon, g] = \sigma_{0D,g}\varepsilon.
$$
Since $\varepsilon^{(D)}_i$ is standard normal in every mixture component, $\Pr(g \mid \varepsilon^{(D)}_i = \varepsilon) = \pi_g$. Thus, marginalizing over $g$ using the law of total expectation,
$$
\mathop{\mathrm{\mathbbm{E}}}[\varepsilon^{(1)}_i \mid \varepsilon^{(D)}_i = \varepsilon] = \sigma_{1D}\varepsilon
\quad \text{and} \quad
\mathop{\mathrm{\mathbbm{E}}}[\varepsilon^{(0)}_i \mid \varepsilon^{(D)}_i = \varepsilon] = \sigma_{0D}\varepsilon,
$$
where
$\sigma_{1D} = \sum_{g=1}^{G} \pi_g \sigma_{1D,g}$ and
$\sigma_{0D} = \sum_{g=1}^{G} \pi_g \sigma_{0D,g}$.
\end{lemma}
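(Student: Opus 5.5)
The statement to prove is Lemma \ref{lemma:mix}. The plan is to work component by component and then marginalize over the component label. Throughout, I use the latent allocation $c_i$ introduced in Section \ref{bayesian-data-augmentation}, so that conditioning on ``$g$'' means conditioning on $c_i=g$.

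\textbf{Within-component conditional mean.} Fix a component $g$. Conditional on $c_i=g$, the vector $\boldsymbol{\varepsilon}_i$ is trivariate normal with mean zero and covariance $\Sigma_g$. The standard Gaussian conditioning formula then gives
$$\mathbbm{E}[\varepsilon_i^{(1)}\mid \varepsilon_i^{(D)}=\varepsilon, c_i=g]=\frac{\sigma_{1D,g}}{\Sigma_{g,11}}\,\varepsilon .$$
Because of the normalization $\Sigma_{g,11}=1$ in Assumption \ref{Assp4}, this equals $\sigma_{1D,g}\,\varepsilon$. The same formula yields $\sigma_{0D,g}\,\varepsilon$ for $\varepsilon_i^{(0)}$. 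This proves the first display.

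\textbf{Posterior component weights.} Next I compute $\Pr(c_i=g\mid \varepsilon_i^{(D)}=\varepsilon)$ by Bayes' rule. The marginal density of $\varepsilon_i^{(D)}$ given $c_i=g$ is $\phi(\varepsilon)$ for every $g$, again because $\Sigma_{g,11}=1$. Hence
$$\Pr(c_i=g\mid \varepsilon_i^{(D)}=\varepsilon)=\frac{\pi_g\phi(\varepsilon)}{\sum_h \pi_h\phi(\varepsilon)}=\pi_g .$$
In particular, $c_i$ is independent of $\varepsilon_i^{(D)}$. This step is the crux of the lemma: it works only because the selection disturbance has an identical $N(0,1)$ marginal in every component. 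Without the scale normalization the weights would depend on $\varepsilon$, and the conditional mean would be nonlinear in $\varepsilon$.

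\textbf{Marginalization.} Finally, apply the law of total expectation over $c_i$ conditional on $\varepsilon_i^{(D)}=\varepsilon$:
$$\mathbbm{E}[\varepsilon_i^{(1)}\mid\varepsilon_i^{(D)}=\varepsilon]=\sum_g \pi_g\,\sigma_{1D,g}\,\varepsilon=\sigma_{1D}\,\varepsilon ,$$
and likewise for $\varepsilon_i^{(0)}$. To confirm that this $\sigma_{1D}$ is the aggregate covariance named in Theorem \ref{Theorem1}, I would compute $\mathrm{Cov}(\varepsilon_i^{(1)},\varepsilon_i^{(D)})=\sum_g\pi_g\sigma_{1D,g}$ by iterated expectations over $c_i$, using that all component means are zero. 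This confirms that the symbol in the lemma coincides with the one used in Theorem \ref{Theorem1}.
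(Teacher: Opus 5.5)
Your proposal is correct and follows essentially the same route as the paper, which places its argument inside the lemma statement itself: Gaussian conditioning within each component using $\Sigma_{g,11}=1$, the observation that the common $N(0,1)$ marginal of $\varepsilon_i^{(D)}$ makes the posterior component weights equal to $\pi_g$, and the law of total expectation. Your extra check that $\sum_g \pi_g\sigma_{1D,g}=\mathrm{Cov}(\varepsilon_i^{(1)},\varepsilon_i^{(D)})$ is a worthwhile addition, because the paper uses the same symbol in Theorem \ref{Theorem1} without explicitly reconciling the two definitions.
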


\subsection*{\texorpdfstring{Proof of Theorem \ref{Theorem1}}{Proof of Theorem }}\label{proof-of-theorem}
\addcontentsline{toc}{subsection}{Proof of Theorem \ref{Theorem1}}

\subsubsection*{(i) Identification of the treatment selection equation}\label{i-identification-of-the-treatment-selection-equation}
\addcontentsline{toc}{subsubsection}{(i) Identification of the treatment selection equation}

The scale normalization \(\mathop{\mathrm{Var}}(\varepsilon^{(D)}_i) = 1\) in Assumption \ref{Assp4} removes the scale indeterminacy of the binary latent-index model. Combined with the instrument relevance condition in Assumption \ref{Assp3}(ii), the conditional treatment probability
\[
\Pr(D_i=1 \mid X_i, Z_i) = \Phi(\nu(X_i, Z_i)) = \Phi(Z_i \alpha + X_i \beta^{(D)})
\]
identifies the treatment-selection coefficients \((\alpha, \beta^{(D)})\).

It follows that the treatment-selection index \(\nu_i \coloneqq Z_i\alpha+X_i\beta^{(D)}\) and the corresponding inverse Mills ratios \(\lambda(\nu_i)\) and \(\lambda(-\nu_i)\) are identified.

\subsubsection*{(ii) Identification of the treated-regime parameters}\label{ii-identification-of-the-treated-regime-parameters}
\addcontentsline{toc}{subsubsection}{(ii) Identification of the treated-regime parameters}

For the treated regime,
\[
D_i=1
\quad\Longleftrightarrow\quad
\varepsilon_i^{(D)}>-\nu_i,
\qquad
\nu_i\coloneqq Z_i\alpha+X_i\beta^{(D)}.
\]
By Assumption \ref{Assp4},
\[
\mathop{\mathrm{\mathbbm{E}}}[\varepsilon^{(1)}_i \mid D_i=1, \bar{D}_{\mathcal{N}i}, X_i, Z_i, \mathbf{W}] 
= \mathop{\mathrm{\mathbbm{E}}}[\varepsilon^{(1)}_i \mid D_i=1, X_i, Z_i, \mathbf{W}]
\]
Using the law of iterated expectations,
\begin{align*}
\mathop{\mathrm{\mathbbm{E}}}[\varepsilon^{(1)}_i \mid D_i=1, X_i, Z_i, \mathbf{W}]
&= \mathop{\mathrm{\mathbbm{E}}}\left[\mathop{\mathrm{\mathbbm{E}}}[\varepsilon^{(1)}_i \mid \varepsilon^{(D)}_i, D_i=1, X_i, Z_i, \mathbf{W}] \mid D_i=1, X_i, Z_i, \mathbf{W}\right]\\
&= \mathop{\mathrm{\mathbbm{E}}}\left[\sigma_{1D} \varepsilon^{(D)}_i \mid D_i=1, X_i, Z_i, \mathbf{W}\right]\\
&= \sigma_{1D} \mathop{\mathrm{\mathbbm{E}}}\left[\varepsilon^{(D)}_i \mid D_i=1, X_i, Z_i, \mathbf{W}\right]\\
&= \sigma_{1D} \mathop{\mathrm{\mathbbm{E}}}\left[\varepsilon^{(D)}_i \mid \varepsilon^{(D)}_i > -\nu_i\right]\\
&= \sigma_{1D} \lambda(\nu_i),
\end{align*}
where the second equality follows from Assumption \ref{Assp3}(i) and Lemma \ref{lemma:mix}, and the final equality follows from Lemma \ref{lemma:IMR}.

Therefore,
\[
\begin{aligned}
\mathop{\mathrm{\mathbbm{E}}}[Y_i \mid D_i=1, \bar{D}_{\mathcal{N}i}, X_i, Z_i, \mathbf{W}]
&= \mathop{\mathrm{\mathbbm{E}}}[Y_i^{(1)} \mid D_i=1, \bar{D}_{\mathcal{N}i}, X_i, Z_i, \mathbf{W}] \\
&= \delta^{(1)} \bar{D}_{\mathcal{N}i} + X_i \beta^{(1)} + \mathop{\mathrm{\mathbbm{E}}}[\varepsilon^{(1)}_i \mid D_i=1, \bar{D}_{\mathcal{N}i}, X_i, Z_i, \mathbf{W}] \\
&= \delta^{(1)} \bar{D}_{\mathcal{N}i} + X_i \beta^{(1)} + \sigma_{1D}\lambda(\nu_i).
\end{aligned}
\]
Under standard full-rank conditions for \(\left(\bar D_{\mathcal N i}, X_i, \lambda(\nu_i)\right)\), the conditional mean uniquely identifies \(\delta^{(1)}, \beta^{(1)},\sigma_{1D}.\)

\subsubsection*{(iii) Identification of the untreated-regime parameters}\label{iii-identification-of-the-untreated-regime-parameters}
\addcontentsline{toc}{subsubsection}{(iii) Identification of the untreated-regime parameters}

For untreated units,
\[
\mathop{\mathrm{\mathbbm{E}}}[\epsilon_i^{(0)} \mid D_i=0, \bar{D}_{\mathcal{N}i}, X_i, Z_i, \mathbf{W}] = -\sigma_{0D}\lambda(-\nu_i),
\]
which yields
\[
\mathop{\mathrm{\mathbbm{E}}}[Y_i \mid D_i=0, \bar{D}_{\mathcal{N}i}, X_i, Z_i, \mathbf{W}] = \delta^{(0)} \bar{D}_{\mathcal{N}i} + X_i \beta^{(0)} - \sigma_{0D}\lambda(-\nu_i).
\]
Under the corresponding full-rank condition, this identifies \(\delta^{(0)}, \beta^{(0)},\sigma_{0D}.\)

This completes the proof.

\subsection{Proof of Theorem 2}\label{BCIES-APDX_ProofProp2}

Under Assumption \ref{Assp2},
\[
Y_i^{(1)}(\bar d_{\mathcal N}) 
= \delta^{(1)} \bar d_{\mathcal N} 
+ X_i \beta^{(1)} 
+ \varepsilon_i^{(1)},
\qquad
Y_i^{(0)}(\bar d_{\mathcal N}) 
= \delta^{(0)} \bar d_{\mathcal N} 
+ X_i \beta^{(0)} 
+ \varepsilon_i^{(0)} .
\]
Hence,
\[
Y_i^{(1)}(\bar d_{\mathcal N})
-
Y_i^{(0)}(\bar d_{\mathcal N})
=
(\delta^{(1)}-\delta^{(0)}) \bar d_{\mathcal N}
+
X_i(\beta^{(1)}-\beta^{(0)})
+
(\varepsilon_i^{(1)}-\varepsilon_i^{(0)}).
\]
Taking conditional expectations given \(V_i=v\) and \(X_i=x\) yields
\[
\mathop{\mathrm{MTE}}(\bar d_{\mathcal N}, v, x)
=
(\delta^{(1)}-\delta^{(0)}) \bar d_{\mathcal N}
+
x(\beta^{(1)}-\beta^{(0)})
+
\mathbb{E}
\!\left[
\varepsilon_i^{(1)}-\varepsilon_i^{(0)}
\mid
V_i=v
\right].
\]
Further, under Assumption \ref{Assp4},
\[
\varepsilon_i^{(D)} \sim \mathcal{N}(0,1),
\qquad
V_i = \Phi(-\varepsilon_i^{(D)}),
\qquad
\varepsilon_i^{(D)} = -\Phi^{-1}(V_i).
\]
Therefore,
\[
\mathbb{E}
\!\left[
\varepsilon_i^{(1)}
\mid
V_i=v
\right]
=
-
\sum_{g=1}^{G}
\pi_g
\sigma_{1D,g}
\Phi^{-1}(v),
\qquad
\mathbb{E}
\!\left[
\varepsilon_i^{(0)}
\mid
V_i=v
\right]
=
-
\sum_{g=1}^{G}
\pi_g
\sigma_{0D,g}
\Phi^{-1}(v),
\]
so that
\begin{align*}
\mathop{\mathrm{MTE}}(\bar d_{\mathcal N}, v, x)
&= (\delta^{(1)}-\delta^{(0)}) \bar d_{\mathcal N}
+ x(\beta^{(1)}-\beta^{(0)})
- \sum_{g=1}^{G}\pi_g(\sigma_{1D,g}-\sigma_{0D,g})\Phi^{-1}(v)\\
&= (\delta^{(1)}-\delta^{(0)}) \bar d_{\mathcal N}
+ x(\beta^{(1)}-\beta^{(0)})
- (\sigma_{1D}-\sigma_{0D})\Phi^{-1}(v)
\end{align*}
Since Theorem \ref{Theorem1} establishes identification of the parameters
\(\delta^{(1)}\), \(\delta^{(0)}\), \(\beta^{(1)}\), \(\beta^{(0)}\), \(\sigma_{1D}\), and \(\sigma_{0D}\), the marginal treatment effect \(\mathop{\mathrm{MTE}}(\bar d_{\mathcal N}, v, x)\) is identified.

Next, for \(d \in \{0,1\}\),
\[
\mathbb{E}
\!\left[
Y_i^{(d)}(\bar d_{\mathcal N})
\mid
V_i=v,
X_i=x
\right]
=
\delta^{(d)} \bar d_{\mathcal N}
+
x \beta^{(d)}
+
\mathbb{E}
\!\left[
\varepsilon_i^{(d)}
\mid
V_i=v
\right].
\]
Differentiating with respect to \(\bar d_{\mathcal N}\) gives
\[
\mathop{\mathrm{MSE}}(d,\bar d_{\mathcal N}, v, x)
=
\delta^{(d)}.
\]
Hence, the marginal spillover effect \(\mathop{\mathrm{MSE}}(d,\bar d_{\mathcal N}, v, x)\) is also identified.

\subsection{Proof of Theorem 3}\label{BCIES-APDX_ProofProp3}

\textbf{Direct component}

Define the indicator for policy-induced participants by
\[
S_i(a,a') \coloneqq \mathbbm{1}\!\left\{P^a(Z_i,X_i)<V_i\le P^{a'}(Z_i,X_i)\right\}.
\]
Because the same latent resistance \(V_i\) governs treatment decisions under both policies,
\[
D_i^r = \mathbbm{1}\!\left\{ P^r(Z_i,X_i)\ge V_i\right\},\qquad r\in\{a,a'\}.
\]
Pointwise policy monotonicity implies
\[
D_i^{a'}\ge D_i^a \qquad\text{and}\qquad D_i^{a'} - D_i^a = S_i(a,a').
\]
By definition,
\[
\mathop{\mathrm{PRDE}}(x;a,a')
=
\mathop{\mathrm{\mathbbm{E}}}\left[
Y_i^{(1)}(\bar D_{\mathcal Ni}^{a})
-
Y_i^{(0)}(\bar D_{\mathcal Ni}^{a})
\mid
S_i(a,a')=1,
X_i=x
\right].
\]
Equivalently,
\[
\mathop{\mathrm{PRDE}}(x;a,a') =\frac{\mathop{\mathrm{\mathbbm{E}}}\left[\left\{Y_i^{(1)}(\bar D_{\mathcal Ni}^{a})-Y_i^{(0)}(\bar D_{\mathcal Ni}^{a})\right\}S_i(a,a')\mid X_i=x\right]}{\Pr(S_i(a,a')=1\mid X_i=x)}.
\]
Since \(V_i\mid X_i=x\sim U(0,1)\) and \(V_i\perp Z_i\mid X_i\), the denominator satisfies
\begin{align*}
\Pr(S_i(a,a')=1\mid X_i=x)
&=
\mathop{\mathrm{\mathbbm{E}}}\!\left[
\Pr\!\left(
P^a(Z_i,X_i)<V_i\le P^{a'}(Z_i,X_i)
\mid Z_i,X_i=x
\right)
\mid X_i=x
\right]
\\
&=
\mathop{\mathrm{\mathbbm{E}}}\!\left[
P^{a'}(Z_i,X_i)-P^a(Z_i,X_i)
\mid X_i=x
\right]
\\
&=
\Delta P(x;a,a').
\end{align*}
Next, applying the law of iterated expectations with respect to
\(V_i\) gives
\begin{align*}
&\mathop{\mathrm{\mathbbm{E}}}\!\left[
\left\{
Y_i^{(1)}(\bar D_{\mathcal N i}^{a})
-
Y_i^{(0)}(\bar D_{\mathcal N i}^{a})
\right\}
S_i(a,a')
\mid X_i=x
\right]
\\
&\quad=
\int_0^1
\mathop{\mathrm{\mathbbm{E}}}\!\left[
Y_i^{(1)}(\bar D_{\mathcal N i}^{a})
-
Y_i^{(0)}(\bar D_{\mathcal N i}^{a})
\mid
S_i(a,a')=1,
V_i=v,
X_i=x
\right] \times
\Pr(S_i(a,a')=1\mid V_i=v,X_i=x)
\,dv.
\end{align*}

Under Assumption \ref{Assp2},
\begin{align*}
Y_i^{(1)}(\bar D_{\mathcal N i}^{a}) - Y_i^{(0)}(\bar D_{\mathcal N i}^{a}) = (\delta^{(1)}-\delta^{(0)})\bar D_{\mathcal N i}^{a} + X_i(\beta^{(1)}-\beta^{(0)})
+ \varepsilon_i^{(1)}-\varepsilon_i^{(0)}.
\end{align*}
Therefore,
\begin{align*}
&\mathop{\mathrm{\mathbbm{E}}}\!\left[Y_i^{(1)}(\bar D_{\mathcal N i}^{a}) - Y_i^{(0)}(\bar D_{\mathcal N i}^{a}) \mid S_i(a,a')=1, V_i=v, X_i=x\right]
\\
&\quad=
(\delta^{(1)}-\delta^{(0)})\mathop{\mathrm{\mathbbm{E}}}\!\left[\bar D_{\mathcal N i}^{a} \mid S_i(a,a')=1, V_i=v, X_i=x \right]
\\
&\qquad+ x(\beta^{(1)}-\beta^{(0)}) + \mathop{\mathrm{\mathbbm{E}}}\!\left[\varepsilon_i^{(1)}-\varepsilon_i^{(0)}\mid S_i(a,a')=1, V_i=v, X_i=x \right].
\end{align*}

Conditional on \((V_i,X_i)\), the event \(S_i(a,a')=1\) depends only on the excluded variation entering the policy propensity scores. By Assumption \ref{Assp3},
\[
\mathop{\mathrm{\mathbbm{E}}}\!\left[\varepsilon_i^{(1)}-\varepsilon_i^{(0)}\mid S_i(a,a')=1, V_i=v, X_i=x \right]
= \mathop{\mathrm{\mathbbm{E}}}\!\left[\varepsilon_i^{(1)}-\varepsilon_i^{(0)} \mid V_i=v\right].
\]
Using the definition
\[
\bar d_{\mathcal N}^{a,S}(v,x) = \mathop{\mathrm{\mathbbm{E}}}\!\left[\bar D_{\mathcal N i}^{a} \mid S_i(a,a')=1, V_i=v, X_i=x\right],
\]
and the affine form of the MTE established in Theorem \ref{Theorem2}, it follows that
\[
\mathop{\mathrm{\mathbbm{E}}}\!\left[Y_i^{(1)}(\bar D_{\mathcal N i}^{a}) - Y_i^{(0)}(\bar D_{\mathcal N i}^{a})\mid S_i(a,a')=1, V_i=v, X_i=x \right] = \mathop{\mathrm{MTE}}\!\left(\bar d_{\mathcal N}^{a,S}(v,x), v, x\right).
\]
Moreover, because \(V_i\perp Z_i\mid X_i\),
\begin{align*}
\Pr(S_i(a,a')=1\mid V_i=v,X_i=x) 
&= \Pr\!\left(P^a(Z_i,X_i)<v\le P^{a'}(Z_i,X_i) \mid X_i=x\right)\\
&=\Pr\!\left(P^a(Z_i,X_i)<v\mid X_i=x\right) -\Pr\!\left(P^{a'}(Z_i,X_i)<v\mid X_i=x\right)\\
&=F_{P^a\mid X}(v\mid x)-F_{P^{a'}\mid X}(v\mid x),
\end{align*}
where the second equality follows from pointwise policy monotonicity.

Combining the preceding expressions yields
\[
\mathop{\mathrm{PRDE}}(x;a,a') =
\int_0^1 \mathop{\mathrm{MTE}}\!\left(
\bar d_{\mathcal N}^{a,S}(v,x), v, x \right) h_{PR}(v\mid x;a,a')\,dv,
\]
where
\[
h_{PR}(v\mid x;a,a') =
\frac{F_{P^a\mid X}(v\mid x)-F_{P^{a'}\mid X}(v\mid x)}{\Delta P(x;a,a')}.
\]

It remains to verify that \(h_{PR}\) is a proper weight function. Indeed, pointwise policy monotonicity implies
\[
F_{P^a\mid X}(v\mid x) \ge F_{P^{a'}\mid X}(v\mid x),\]
so \(h_{PR}(v\mid x;a,a')\ge0\). Furthermore, for any random
variable \(Q\in[0,1]\),
\[
\int_0^1 F_{Q\mid X}(v\mid x)\,dv
=
1-\mathop{\mathrm{\mathbbm{E}}}[Q\mid X_i=x] \quad (\text{integration by parts}). 
\]
Applying this identity with \(Q = P^{a}(Z_i)\) and \(Q = P^{a'}(Z_i)\),
\begin{align*}
\int_0^1 h_{PR}(v\mid x;a,a')\,dv 
= \frac{\mathop{\mathrm{\mathbbm{E}}}[P^{a'}(Z_i,X_i)\mid X_i=x]-\mathop{\mathrm{\mathbbm{E}}}[P^a(Z_i,X_i)\mid X_i=x]}{\Delta P(x;a,a')}
= 1,
\end{align*}
hence the weight function integrates to one.

By Theorem \ref{Theorem2}, the marginal treatment effect is identified. The identified treatment-selection model, together with the policy-counterfactual conditions, determines the counterfactual policy propensity scores and hence the induced participation share \(\Delta P(x;a,a')\) and the policy weight function \(h_{PR}(v\mid x;a,a')\). Since the interaction matrix \(\mathbf W\) is known, the exposure mapping identifies the mean neighborhood exposure of policy-induced participants \(\bar d_{\mathcal N}^{a,S}(v,x)\). Therefore, \(\mathop{\mathrm{PRDE}}(x;a,a')\) is identified.

\medskip

\textbf{Spillover component}

By definition,
\[
\mathop{\mathrm{PRSE}}(x;a,a') = \frac{
\mathop{\mathrm{\mathbbm{E}}}\!\left[
Y_i^{(D_i^{a'})}(\bar D_{\mathcal N i}^{a'}) - Y_i^{(D_i^{a'})}(\bar D_{\mathcal N i}^{a})
\mid X_i=x \right]
}{\Delta P(x;a,a')}.
\]

Using the marginal spillover effect,
\[
\mathop{\mathrm{MSE}}(d,\bar d_{\mathcal N},v,x)
=
\frac{\partial}{\partial\bar d_{\mathcal N}}
\mathop{\mathrm{\mathbbm{E}}}\!\left[
Y_i^{(d)}(\bar d_{\mathcal N})
\mid V_i=v,X_i=x
\right],
\]
the spillover component can be represented as
\[
\mathop{\mathrm{PRSE}}(x;a,a')
=
\frac{
\mathop{\mathrm{\mathbbm{E}}}\!\left[
\int_{\bar D_{\mathcal N i}^{a}}^{\bar D_{\mathcal N i}^{a'}}
\mathop{\mathrm{MSE}}\!\left(
D_i^{a'},
\bar d_{\mathcal N},
V_i,
X_i
\right)
d\bar d_{\mathcal N}
\mid X_i=x
\right]
}{
\Delta P(x;a,a')
}.
\]

By Theorem \ref{Theorem2},
\[
\mathop{\mathrm{MSE}}(d,\bar d_{\mathcal N},v,x)
=
\delta^{(d)},
\]
which does not depend on \(\bar d_{\mathcal N}\). Therefore,
\[
\int_{\bar D_{\mathcal N i}^{a}}^{\bar D_{\mathcal N i}^{a'}}
\mathop{\mathrm{MSE}}\!\left(
D_i^{a'},
\bar d_{\mathcal N},
V_i,
X_i
\right)
d\bar d_{\mathcal N}
=
\delta^{(D_i^{a'})}
\left(
\bar D_{\mathcal N i}^{a'}
-
\bar D_{\mathcal N i}^{a}
\right),
\]
where
\[
\delta^{(D_i^{a'})}
=
D_i^{a'}\delta^{(1)}
+
(1-D_i^{a'})\delta^{(0)}.
\]
Thus,
\[
\mathop{\mathrm{PRSE}}(x;a,a')
=
\frac{
\mathop{\mathrm{\mathbbm{E}}}\!\left[
\delta^{(D_i^{a'})}
\left(
\bar D_{\mathcal N i}^{a'}
-
\bar D_{\mathcal N i}^{a}
\right)
\mid X_i=x
\right]
}{
\Delta P(x;a,a')
}.
\]
The parameters \(\delta^{(1)}\) and \(\delta^{(0)}\) are identified by Theorem \ref{Theorem1}. Under the policy-counterfactual conditions, the identified policy propensity scores determine the counterfactual
treatment assignments under policies \(a\) and \(a'\). Since the interaction matrix \(\mathbf W\) is known, the corresponding counterfactual neighborhood exposures \(\bar D_{\mathcal N i}^{a}\) and \(\bar D_{\mathcal N i}^{a'}\) are determined by the exposure mapping. Therefore, the expectation
\[
\mathop{\mathrm{\mathbbm{E}}}\!\left[
\delta^{(D_i^{a'})}
\left(
\bar D_{\mathcal N i}^{a'}
-
\bar D_{\mathcal N i}^{a}
\right)
\mid X_i=x
\right]
\]
is identified, and hence so is \(\mathop{\mathrm{PRSE}}(x;a,a')\).

\medskip

\textbf{Total Effect}

Finally, using the decomposition
\[ 
Y_i^{a'}-Y_i^a = \Big( Y_i^{(D_i^{a'})}(\bar D_{\mathcal N i}^{a'}) - Y_i^{(D_i^{a'})}(\bar D_{\mathcal N i}^{a}) \Big) + \Big( Y_i^{(D_i^{a'})}(\bar D_{\mathcal N i}^{a}) - Y_i^{(D_i^{a})}(\bar D_{\mathcal N i}^{a}) \Big), 
\]
taking conditional expectations given \(X_i = x\), and dividing both sides by \(\Delta P(x;a,a')\) yields
\[ 
\mathop{\mathrm{PRTOT}}(x;a,a') = \mathop{\mathrm{PRSE}}(x;a,a') + \mathop{\mathrm{PRDE}}(x;a,a'). 
\]
Since both \(\mathop{\mathrm{PRDE}}(x;a,a')\) and \(\mathop{\mathrm{PRSE}}(x;a,a')\) are identified, it follows that \(\mathop{\mathrm{PRTOT}}(x;a,a')\) is also identified.
\qed

\section{Implementation Details}\label{BCIES-APDX_sampling}

For completeness, this appendix provides the likelihood and full conditional distributions underlying the parameter-expanded Gibbs sampler described in Section \ref{BCIES-Section3}.

\subsection{Complete-data likelihood}\label{complete-data-likelihood}

Let
\[
\mathbf L^*
\coloneqq
\begin{bmatrix}
\mathbf D^*\\
\mathbf Y^{(1)}\\
\mathbf Y^{(0)}
\end{bmatrix},
\qquad
\mathbf R \coloneqq
\begin{bmatrix}
\mathbf P&0&0\\
0&\mathbf Q&0\\
0&0&\mathbf Q
\end{bmatrix},
\]
so that
\[
\mathbf L^*=\mathbf R\boldsymbol{\theta}
+\boldsymbol{\varepsilon}.
\]

Conditional on the mixture allocations, \(\mathbf{c}\), the error vector \(\boldsymbol{\varepsilon}\) has mean zero and covariance matrix \(\boldsymbol{\Omega}(\mathbf{c})\). Since each individual \(i\) belongs to exactly one component \(g\), i.e.~\(\sum_{g=1}^G c_{ig}=1\), the covariance matrix of \(\boldsymbol{\varepsilon}\) is a block-diagonal matrix satisfying
\begin{equation}
\label{eq:Omega_c}
\boldsymbol{\Omega}(\mathbf{c}) = \sum_{g=1}^G \text{diag}(c_{1g},\ldots,c_{ng}) \otimes \boldsymbol{\Sigma}_g.
\end{equation}
The complete-data likelihood with the augmented latent outcome \(\mathbf{L}^*\) and mixture indicators \(\mathbf{c}\) is
\[
p(\mathbf Y,\mathbf D,\mathbf L^*,\mathbf c\mid \boldsymbol{\theta},\{\mathbf\Sigma_g\},\boldsymbol{\pi})
= p(\mathbf Y,\mathbf D\mid \mathbf L^*) \cdot p(\mathbf L^* \mid \boldsymbol{\theta},\{\mathbf\Sigma_g\},\mathbf c) \cdot p(\mathbf c \mid \boldsymbol{\pi}).
\]
From noting that knowing \((Y^{(1)}_{i},Y^{(0)}_{i})\) as well as the sign of \(D_i^\ast\) perfectly predicts the value of \((Y_i,D_i)\), we can derive the conditional likelihood function as follows
\begin{equation}
\begin{split}
p(\mathbf{Y}, \mathbf{D}&\mid \theta, \mathbf{\Sigma}, \mathbf{L}^\ast) \\
&= \prod_{i=1}^n \left[\mathbbm{1}(D_i^\ast>0)\mathbbm{1}(D_i = 1)\mathbbm{1}(Y_i = Y^{(1)}_{i}) + \mathbbm{1}(D_i^*\leq0)\mathbbm{1}(D_i = 0)\mathbbm{1}(Y_i = Y^{(0)}_{i})\right].\\
&= \prod_{\{i: D_i = 1\}} p\left(Y^{(1)}_{i}, D_i^* >0\right) \prod_{\{i: D_i = 0\}} p\left(Y^{(0)}_{i}, D_i^\ast \leq 0\right) \\
&= \prod_{\{i: D_i = 1\}} \int_{0}^{\infty} p\left(Y^{(1)}_{i},D_i^\ast\right)dD_i^\ast \prod_{\{i: D_i = 0\}} \int_{-\infty}^0 p\left(Y^{(0)}_{i}, D_i^\ast\right)  dD_i^\ast\\
&= \prod_{\{i: D_i = 1\}} \int_{0}^{\infty} p\left(D_i^\ast\mid Y^{(1)}_{i}\right)p\left(Y^{(1)}_{i}\right)dD_i^\ast \prod_{\{i: D_i = 0\}} \int_{-\infty}^0 p\left(D_i^\ast\mid Y^{(0)}_{i}\right)p\left(Y^{(0)}_{i}\right)dD_i^\ast\\
&= \prod_{\{i: D_i = 1\}} \Phi \left(\frac{u_{Di} + \rho_{1D} u_{1i}}{(1-\rho_{1D}^2)^{-1/2}}\right) \frac{1}{\sigma_1} \phi(u_{1i}) \prod_{\{i: D_i = 0\}} \left [ 1 - \Phi \left(\frac{u_{Di} + \rho_{0D} u_{0i}}{(1-\rho_{0D}^2)^{-1/2}}\right) \right] \frac{1}{\sigma_0} \phi(u_{0i}),
\end{split}
\end{equation}
where
\[
u_{Di} \coloneqq \mathbf{P}^\top_i\mathbf{\gamma}; \quad  u_{1i} \coloneqq \frac{Y^{(1)}_{i}-\mathbf{Q}^\top_i \mathbf{\kappa}_1}{\sigma_1}; \quad u_{0i} \coloneqq \frac{Y^{(0)}_{i}-\mathbf{Q}^\top_i \mathbf{\kappa}_0}{\sigma_0}.
\]
We can derive the second term from the representation of latent variables in
\[
p(\mathbf L^* \mid \boldsymbol{\theta},\{\mathbf\Sigma_g\},\mathbf c) 
= \prod_{i=1}^n\prod_{g=1}^G \left[ 2\pi \left\vert\mathbf{\Sigma}\right\vert^{-1/2}\exp\left\{ -\frac{1}{2} (\mathbf{L}^* - \mathbf{R}\mathbf{\theta})^\top\mathbf{\Sigma}_g^{-1}(\mathbf{L}^* - \mathbf{R}\mathbf{\theta}) \right\} \right]^{c_{ig}}.
\]
By definition, \(c_i \mid \boldsymbol{\pi} \overset{iid}{\sim} \mathcal{M}ult(1, \boldsymbol{\pi})\), thus
\[
p(\mathbf c \mid \boldsymbol{\pi}) = \prod_{i=1}^n\prod_{g=1}^G \pi_g^{c_{ig}}.
\]

\subsection{Priors and parameter expansion}\label{BCIES-APDX_PX}

This section derives the auxiliary distribution used in the parameter-expansion covariance update. For notational simplicity, we suppress the mixture-component index \(g\); the derivation applies separately to each component.

Recall that the covariance matrix in the identified parameterization is
\[
\mathbf\Sigma
=
\begin{bmatrix}
1
&
\rho_{1D}\sigma_1
&
\rho_{0D}\sigma_0
\\
\rho_{1D}\sigma_1
&
\sigma_1^2
&
\rho_{10}\sigma_1\sigma_0
\\
\rho_{0D}\sigma_0
&
\rho_{10}\sigma_1\sigma_0
&
\sigma_0^2
\end{bmatrix},
\]
where the normalization \(\Sigma_{11}=1\) fixes the scale of the latent treatment-selection equation. Introduce the positive expansion parameter \(\tau>0\) and define
\[
\mathbf A
\coloneqq
\mathop{\mathrm{diag}}(\tau,1,1),
\qquad
\widetilde{\mathbf\Sigma}
\coloneqq
\mathbf A\mathbf\Sigma\mathbf A.
\]
Hence,
\[
\widetilde{\mathbf\Sigma}
=
\begin{bmatrix}
\tau^2
&
\tau\rho_{1D}\sigma_1
&
\tau\rho_{0D}\sigma_0
\\
\tau\rho_{1D}\sigma_1
&
\sigma_1^2
&
\rho_{10}\sigma_1\sigma_0
\\
\tau\rho_{0D}\sigma_0
&
\rho_{10}\sigma_1\sigma_0
&
\sigma_0^2
\end{bmatrix}.
\]

Let \(\Sigma_{ij}\) and \(\widetilde{\Sigma}_{ij}\) denote the \((i,j)\)th elements of \(\mathbf\Sigma\) and \(\widetilde{\mathbf\Sigma}\), respectively. Because \(\Sigma_{11}=1\), the transformation from
\[
(\Sigma_{12},\Sigma_{13},\Sigma_{22},
\Sigma_{23},\Sigma_{33},\tau^2)
\]
to
\[
(\widetilde{\Sigma}_{11},\widetilde{\Sigma}_{12},
\widetilde{\Sigma}_{13},\widetilde{\Sigma}_{22},
\widetilde{\Sigma}_{23},\widetilde{\Sigma}_{33})
\]
has Jacobian matrix
\[
\mathcal J
=
\frac{
\partial(
\widetilde{\Sigma}_{11},
\widetilde{\Sigma}_{12},
\widetilde{\Sigma}_{13},
\widetilde{\Sigma}_{22},
\widetilde{\Sigma}_{23},
\widetilde{\Sigma}_{33})
}{
\partial(
\Sigma_{12},
\Sigma_{13},
\Sigma_{22},
\Sigma_{23},
\Sigma_{33},
\tau^2)
}
=
\begin{bmatrix}
0&0&0&0&0&1\\
\tau&0&0&0&0&\Sigma_{12}/(2\tau)\\
0&\tau&0&0&0&\Sigma_{13}/(2\tau)\\
0&0&1&0&0&0\\
0&0&0&1&0&0\\
0&0&0&0&1&0
\end{bmatrix}.
\]
Therefore,
\begin{equation}
\label{eq:jacobian-PX}
|\mathcal J|
=
\tau^2.
\end{equation}

We assign the expanded covariance matrix the inverse-Wishart prior
\begin{equation}
\label{eq:IW-PX-apdx}
\widetilde{\mathbf\Sigma}
\sim
\mathcal W^{-1}(\mathbf I_3,\nu_0),
\end{equation}
with density
\[
p(\widetilde{\mathbf\Sigma})
\propto
|\widetilde{\mathbf\Sigma}|^{-(\nu_0+4)/2}
\exp
\left\{
-\frac12
\operatorname{tr}
(\widetilde{\mathbf\Sigma}^{-1})
\right\}.
\]

Because
\[
|\widetilde{\mathbf\Sigma}|
=
|\mathbf A|^2|\mathbf\Sigma|
=
\tau^2|\mathbf\Sigma|,
\]
and
\[
\widetilde{\mathbf\Sigma}^{-1}
=
\mathbf A^{-1}
\mathbf\Sigma^{-1}
\mathbf A^{-1},
\]
the induced joint density of
\((\mathbf\Sigma,\tau^2)\) follows from the change-of-variables formula
\begin{equation}
\label{eq:joint-prior-PX-1}
p(\mathbf\Sigma,\tau^2)
=
p(\widetilde{\mathbf\Sigma})|\mathcal J| \propto\;
(\tau^2)^{-\left(\nu_0/2+1\right)}
|\mathbf\Sigma|^{-(\nu_0+4)/2} 
\exp
\left\{
-\frac12\mathop{\mathrm{tr}}(\mathbf A^{-1}\mathbf\Sigma^{-1}\mathbf A^{-1})
\right\}.
\end{equation}

To express this density compactly, define
\begin{equation}
\label{eq:Delta-PX}
\Delta
\coloneqq
1
+2\rho_{10}\rho_{1D}\rho_{0D}
-\rho_{10}^2
-\rho_{1D}^2
-\rho_{0D}^2.
\end{equation}
Positive definiteness of \(\mathbf\Sigma\) requires \(\Delta>0\). The relevant diagonal elements of \(\mathbf\Sigma^{-1}\) are
\begin{align*}
(\mathbf\Sigma^{-1})_{11}
&=
\frac{1-\rho_{10}^2}{\Delta},
\\
(\mathbf\Sigma^{-1})_{22}
&=
\frac{1-\rho_{0D}^2}
{\sigma_1^2\Delta},
\\
(\mathbf\Sigma^{-1})_{33}
&=
\frac{1-\rho_{1D}^2}
{\sigma_0^2\Delta}.
\end{align*}
Consequently,
\[
\mathop{\mathrm{tr}}(\mathbf A^{-1}\mathbf\Sigma^{-1}\mathbf A^{-1})
=
\frac{1}{\tau^2}
\frac{1-\rho_{10}^2}{\Delta}
+
\frac{1-\rho_{0D}^2}{\sigma_1^2\Delta}
+
\frac{1-\rho_{1D}^2}{\sigma_0^2\Delta}.
\]

Combining these expressions with
\eqref{eq:joint-prior-PX-1}, the induced joint prior is
\begin{equation}
\label{eq:joint-prior-PX-2}
\begin{split}
p(\mathbf\Sigma,\tau^2)
\propto\;&
(\tau^2)^{-\left(\nu_0/2+1\right)}
|\mathbf\Sigma|^{-(\nu_0+4)/2}
\\
&\times
\exp
\left\{
-\frac{1}{2\tau^2}
\frac{1-\rho_{10}^2}{\Delta}
\right\}\times
\exp
\left\{
-\frac12
\left[
\frac{1-\rho_{0D}^2}{\sigma_1^2\Delta}
+
\frac{1-\rho_{1D}^2}{\sigma_0^2\Delta}
\right]
\right\}.
\end{split}
\end{equation}

It follows from \eqref{eq:joint-prior-PX-2} that, conditional on
\(\mathbf\Sigma\),
\begin{equation}
\label{APDX-tausq}
\tau^2\mid\mathbf\Sigma
\sim
\mathcal{I}\mathcal{G}
\left(
\frac{\nu_0}{2},
\frac{1-\rho_{10}^2}{2\Delta}
\right),
\end{equation}
where the inverse-Gamma distribution is parameterized by shape and scale. Equivalently,
\begin{equation}
\label{APDX-tausq-chisq}
\tau^2\mid\mathbf\Sigma
\overset{d}{=}
\frac{1-\rho_{10}^2}
{\Delta\,\chi_{\nu_0}^2}.
\end{equation}

For completeness, integrating \(\tau^2\) out of
\eqref{eq:joint-prior-PX-2} yields the induced marginal prior
\begin{equation}
\label{APDX-Sigma}
\begin{split}
p(\mathbf\Sigma)
\propto\;&
|\mathbf\Sigma|^{-(\nu_0+4)/2}
\left(
\frac{1-\rho_{10}^2}{\Delta}
\right)^{-\nu_0/2} 
\times
\exp
\left\{
-\frac12
\left[
\frac{1-\rho_{0D}^2}{\sigma_1^2\Delta}
+
\frac{1-\rho_{1D}^2}{\sigma_0^2\Delta}
\right]
\right\},
\end{split}
\end{equation}
over the positive-definite covariance matrices satisfying
\(\Sigma_{11}=1\).

Equation \eqref{APDX-tausq} provides the auxiliary distribution used in the parameter-expansion step. For each mixture component \(g\), an auxiliary draw of \(\tau_g^2\) is used only to construct the transformed residual cross-product matrix entering the expanded covariance update. After drawing \(\widetilde{\mathbf\Sigma}_g\), the normalization scale is reset as
\[
\tau_g^2
=
\widetilde{\Sigma}_{g,11},
\]
and \(\mathbf\Sigma_g\) is recovered by rescaling. Thus, the auxiliary draw used to form the expanded update and the scale implied by the subsequent covariance draw play distinct sequential roles in the parameter-expanded Gibbs sampler.

\subsection{Full conditional distributions}\label{BCIES-APDX_MCMC}

Posterior simulation alternates between imputing the latent quantities and updating the model parameters conditional on the resulting complete data.

\subsubsection{Latent-data augmentation}\label{latent-data-augmentation}

(a1) Missing potential outcome \(\mathbf{Y}^{miss}\)
\begin{equation}
\label{missingY}
Y_i^{miss} \mid \Theta_{-Y_i^{miss}}, \mathbf{Y},\mathbf{D} \overset{ind}{\sim} \mathcal{N} \left( (1-D_i)\mu_{1i,g} + D_i\mu_{0i,g}, (1-D_i)V_{1i,g} + D_iV_{0i,g}  \right),
\end{equation}

\vspace{-2mm}

where \(g\) is the current component assgined to \(i\), and
\begin{align*}
\mu_{1i,g} &\coloneqq \mathbf{Q}^\top_i \mathbf{\kappa}_1 + (D_i^* - \mathbf{P}^\top_i\mathbf{\gamma}) \left[ \frac{\sigma_{0,g}^2\sigma_{1D,g} - \sigma_{10,g}\sigma_{0D,g} }{\sigma_{0,g}^2 - \sigma_{0D,g}^2} \right] + (Y_i - \mathbf{Q}^\top_i \mathbf{\kappa}_0) \left[ \frac{\sigma_{10,g} - \sigma_{0D,g}\sigma_{1D,g} }{\sigma_{0,g}^2 - \sigma_{0D,g}^2} \right],\\
\mu_{0i,g} &\coloneqq \mathbf{Q}^\top_i \mathbf{\kappa}_0 + (D_i^* - \mathbf{P}^\top_i\mathbf{\gamma}) \left[ \frac{\sigma_{1,g}^2\sigma_{0D,g} - \sigma_{10,g}\sigma_{1D,g} }{\sigma_{1,g}^2 - \sigma_{1D,g}^2} \right] + (Y_i - \mathbf{Q}^\top_i \mathbf{\kappa}_1) \left[ \frac{\sigma_{10,g} - \sigma_{0D,g}\sigma_{1D,g} }{\sigma_{1,g}^2 - \sigma_{1D,g}^2} \right],\\
V_{1i,g} &\coloneqq \sigma_{1,g}^2 - \frac{\sigma_{1D,g}^2 \sigma_{0,g}^2 - 2\sigma_{10,g}\sigma_{0D,g}\sigma_{1D,g} + \sigma_{10,g}^2}{\sigma_{0,g}^2 - \sigma_{0D,g}^2},\\
V_{0i,g} &\coloneqq \sigma_{0,g}^2 - \frac{\sigma_{0D,g}^2 \sigma_{1,g}^2 - 2\sigma_{10,g}\sigma_{0D,g}\sigma_{1D,g} + \sigma_{10,g}^2}{\sigma_{1,g}^2 - \sigma_{1D,g}^2}.
\end{align*}

(a2) Latent utility \(\mathbf{D}^*\)
\begin{equation}
\label{latentD}
D_i^* \mid \Theta_{-D_i^*}, \mathbf{Y},\mathbf{D} \overset{ind}{\sim} \left\{\begin{array}{c}
T\mathcal{N}_{(0,+\infty)} (\mu_{Di,g}, V_{Di,g}) \quad \text{if } D_i = 1
\\
T\mathcal{N}_{(-\infty,0]} (\mu_{Di,g}, V_{Di,g}) \quad \text{if } D_i = 0
\end{array}\right.,
\end{equation}
where \(g\) is the current component assgined to \(i\), and
\begin{align*}
\mu_{Di,g} &\coloneqq \mathbf{P}^\top_i\mathbf{\gamma} + \left[D_i Y_i + (1-D_i)Y_i^{miss} - \mathbf{Q}^\top_i \mathbf{\kappa}_1 \right] \left[ \frac{\sigma_{0,g}^2 \sigma_{1D,g} - \sigma_{10,g} \sigma_{0D,g}}{\sigma_{1,g}^2\sigma_{0,g}^2 -\sigma_{10,g}^2}\right],
\\
&\quad + \left[D_i Y_i^{miss} + (1-D_i)Y_i - \mathbf{Q}^\top_i \mathbf{\kappa}_0 \right] \left[ \frac{\sigma_{1,g}^2 \sigma_{0D,g} - \sigma_{10,g} \sigma_{1D,g}}{\sigma_{1,g}^2\sigma_{0,g}^2 -\sigma_{10,g}^2}\right],
\\
V_{Di,g} &\coloneqq 1 - \frac{\sigma_{1D,g}^2\sigma_{0,g}^2 - 2\sigma_{10,g}\sigma_{0D,g}\sigma_{1D,g} + \sigma_{1,g}^2\sigma_{0D,g}^2}{\sigma_{1,g}^2\sigma_{0,g}^2 - \sigma_{10,g}^2}.
\end{align*}

(a3) Component indicators \(\mathbf{c}\)
\begin{align}
\label{componentid}
c_i \mid \Theta_{-c_i}, \mathbf{Y}, \mathbf{D} &\overset{ind}{\sim} \mathcal{M}ult \left(1,\left[\omega_{i1},\ldots,\omega_{iG} \right]^\top\right),
\end{align}
where for \(i = 1,\ldots,N\) and \(g = 1,\ldots,G\)
\begin{align*}
\omega_{ig} \coloneqq \frac{\pi_g  \phi(\mathbf{L}_i^*; \mathbf{R}_i\mathbf{\theta}, \mathbf{\Sigma}_g)}{\sum_{h=1}^{G}\pi_h\phi(\mathbf{L}_i^\ast; \mathbf{R}_i\mathbf{\theta}, \mathbf{\Sigma}_h)}.
\end{align*}

\subsubsection{Parameter updates}\label{parameter-updates}

After imputing the missing data, we can infer the posterior distribution of remaining parameters, conditioning on the complete data. Given the component indicators \(\mathbf{c}\), we denote \(\mathcal{I}_g\) as the set of observations belonging to component \(g\) and \(n_g\) as the corresponding cardinality:
\[
\mathcal{I}_g\coloneqq \{i:c_{ig}=1\}, \quad n_g \coloneqq \vert\mathcal{I}_g\vert = \sum_{i=1}^n \mathbbm{1}(c_{ig}=1).
\]

(b1) Regression coefficients \(\mathbf{\theta}\)
\begin{equation}
\label{posttheta}
\mathbf{\theta} \mid \Theta_{-\mathbf{\theta}}, \mathbf{Y},\mathbf{D} \sim \mathcal{N}(\bar{\mathbf{\mu}}_\theta, \overline{\mathbf{V}}_\theta),
\end{equation}
where
\[
\bar{\mathbf{\mu}}_\theta \coloneqq \overline{\mathbf{V}}_\theta [\mathbf{R}^\top\boldsymbol{\Omega}(\mathbf{c})^{-1}\mathbf{L}^* + \underline{\mathbf{V}}_{\theta}^{-1}\underline{\mathbf{\mu}}_{\theta}, \quad
\overline{\mathbf{V}}_\theta \coloneqq [\mathbf{R}^\top\boldsymbol{\Omega}(\mathbf{c})^{-1}\mathbf{R} + \underline{\mathbf{V}}_{\theta}^{-1}]^{-1}.
\]

(b2) The component weights \(\boldsymbol{\pi}\)
\begin{align}
\boldsymbol{\pi} \mid \Theta_{-\mathbf{\pi}}, \mathbf{Y}, \mathbf{D} &\sim \mathcal{D}ir(\underline{\omega}_1 + n_1, \ldots, \underline{\omega}_G + n_G),
\end{align}

(b3) Update component-specific covariance matrices \(\{\boldsymbol{\Sigma}_g\}\) via parameter expansion.

For each component \(g = 1,\ldots,G\), we update the component-specific covariance matrix using only those observations currently assigned to component \(g\). For \(i\in \mathcal{I}_g\), collect the corresponding latent and observed quantities into component-specific subvectors and submatrices:
\[
\mathbf D_g^*= \{D_i^*\}_{i\in\mathcal I_g},
\qquad \mathbf Y_g^{(1)}=\{Y_i^{(1)}\}_{i\in\mathcal I_g},
\qquad \mathbf Y_g^{(0)}=\{Y_i^{(0)}\}_{i\in\mathcal I_g},
\]
\[
\mathbf P_g=\{\mathbf P_i^\top\}_{i\in\mathcal I_g},
\qquad \mathbf Q_g=\{\mathbf Q_i^\top\}_{i\in\mathcal I_g}.
\]
Define the corresponding component-specific residual vectors
\[
  \boldsymbol{\varepsilon}_{D,g}=\mathbf D_g^*-\mathbf P_g\boldsymbol{\gamma},
  \qquad
  \boldsymbol{\varepsilon}_{1,g}=\mathbf Y_g^{(1)}-\mathbf Q_g\boldsymbol{\kappa}_1,
  \qquad
  \boldsymbol{\varepsilon}_{0,g}=\mathbf Y_g^{(0)}-\mathbf Q_g\boldsymbol{\kappa}_0.
\]
Utilizing the parameter expansion, let \(\tau_g>0\) denote the component-specific expansion parameter randomly draw from its prior according to \eqref{APDX-tausq-chisq}. The associated transformed residual cross-product matrix is
\[
\widetilde{\mathbf{M}}_g = \begin{bmatrix}
\tau_g^2\mathbf{\varepsilon}_{D,g}^\top\mathbf{\varepsilon}_{D,g} & \tau_g\mathbf{\varepsilon}_{D,g}^\top\mathbf{\varepsilon}_{1,g}   & \tau_g\mathbf{\varepsilon}_{D,g}^\top\mathbf{\varepsilon}_{0,g} \\
\tau_g\mathbf{\varepsilon}_{1,g}^\top\mathbf{\varepsilon}_{D,g}   & \mathbf{\varepsilon}_{1,g}^\top\mathbf{\varepsilon}_{1,g} & \mathbf{\varepsilon}_{1,g}^\top\mathbf{\varepsilon}_{0,g} \\
\tau_g\mathbf{\varepsilon}_{0,g}^\top\mathbf{\varepsilon}_{D,g}   & \mathbf{\varepsilon}_{0,g}^\top\mathbf{\varepsilon}_{1,g}   & \mathbf{\varepsilon}_{0,g}^\top\mathbf{\varepsilon}_{0,g}
\end{bmatrix}
\]

Given the inverse-Wishart prior \(\widetilde{\mathbf\Sigma}_g\sim \mathcal W^{-1}(\mathbf{I}_3,\underline{\nu})\), the conditional posterior distribution of the unconstrained covariance matrix \(\widetilde{\boldsymbol{\Sigma}}_g\) is of the form
\begin{align}
\widetilde{\boldsymbol{\Sigma}}_g \mid \mathbf{\theta},\mathbf{L}^*,\mathbf{c},\mathbf{Y},\mathbf{D} \sim \mathcal{W}^{-1}(\widetilde{\mathbf{M}}_g + \mathbf{I}_3, n_g+\underline{\nu}).
\end{align}

After drawing \(\widetilde{\mathbf{\Sigma}}_g\), we set \(\tau_g^2 = \widetilde{\mathbf{\Sigma}}_{g,11}\) and recover the normalized covariance matrix \(\mathbf{\Sigma}_g\) as
\begin{equation}
\label{recoverSigma}
\mathbf{\Sigma}_g = 
\begin{bmatrix}
1/\tau_g &0 &0
\\
0 &1 &0
\\
0 &0 &1
\end{bmatrix}
\times \widetilde{\mathbf{\Sigma}}_g \times
\begin{bmatrix}
1/\tau_g &0 &0
\\
0 &1 &0
\\
0 &0 &1
\end{bmatrix}.
\end{equation}

This covariance update is performed separately for each \(g=1,\ldots,G\).

Algorithm \ref{alg:GGRM_mixture_long} collects these conditional updates in implementation order. It is the detailed counterpart of the parameter-expanded Gibbs sampler summarized in the main text. Posterior inference for the causal estimands is obtained by evaluating each estimand at every retained posterior draw.

\medskip

\begin{spacing}{1}
\begin{algorithm*}[H]
\caption{Implementation of the parameter-expanded Gibbs sampler}
\label{alg:GGRM_mixture_long}
Initialize: $\boldsymbol{\beta}^{[0]}, \{\boldsymbol{\Sigma}_g^{[0]}\}_{g=1}^G, \boldsymbol{\pi}^{[0]}, \boldsymbol{c}^{[0]}$, set $s = 0$.

\While{$s < S$} {
\medskip
  \begin{itemize}
    \item[a.] \textbf{Latent-data augmentation}
    \begin{itemize}
      \item[(a1)] Sample $Y_i^{\text{miss}} \mid \mathbin{\vcenter{\hbox{\scalebox{.5}{$\bullet$}}}}\sim \mathcal{N}(\overline{\mu}_{Y_i},\overline{V}_{Y_i})$;
      \item[(a2)] Sample $D_i^{*} \mid \mathbin{\vcenter{\hbox{\scalebox{.5}{$\bullet$}}}}\sim T\mathcal{N}_{\mathcal{S}}(\overline{\mu}_{D_i},\overline{V}_{D_i})$;
      \item[(a3)] Sample the mixture indicator  $c_i \mid \mathbin{\vcenter{\hbox{\scalebox{.5}{$\bullet$}}}}\sim \mathcal{M}ult \left(1,\left[\omega_{i1},\ldots,\omega_{iG} \right]^\top\right)$.
    \end{itemize}
    \item[b.] \normalsize\textbf{Parameter updates}
    \begin{itemize}
      \item[(b1)] Sample $\boldsymbol{\beta}\mid \mathbin{\vcenter{\hbox{\scalebox{.5}{$\bullet$}}}}\sim \mathcal{N}(\overline{\boldsymbol{\mu}}_\beta,\overline{\mathbf{V}}_\beta)$;
      \item[(b2)] Sample mixture weights $\boldsymbol{\pi}\mid \mathbin{\vcenter{\hbox{\scalebox{.5}{$\bullet$}}}}\sim \mathcal{D}ir(\underline{\omega}_1 + n_1, \ldots, \underline{\omega}_G + n_G)$;
      \item[(b3)] For each $g$:
      \begin{itemize}\small
          \item[$\bullet$] sample expansion scale $\tau_g^2$ from auxiliary prior,
          \item[$\bullet$] sample $\widetilde{\boldsymbol{\Sigma}}_g$ given transformed residuals in component $g$,
          \item[$\bullet$] set $\boldsymbol{\Sigma}_g$ using  $\widetilde{\boldsymbol{\Sigma}}_g$ and $\tau_g^2$.
      \end{itemize}
    \end{itemize}
  \end{itemize}
  $s \leftarrow s+1$
}
\Return retained posterior draws $\{\boldsymbol{\beta}^{[s]}, \boldsymbol{\pi}^{[s]},\{\boldsymbol{\Sigma}_g^{[s]}\}_{g=1}^G\}_{s=nb+1}^S$.
\end{algorithm*}
\end{spacing}

\section{Additional Simulation Results}\label{BCIES-APDX_Simulations}

\subsection{Complete MTE Results}\label{complete-mte-results}

\vspace{-0.25\baselineskip}

\begin{table}[H]
\centering
\caption{\label{tab:tab-SMCresults-Mar26-DGP2-MTE1s}Simulation Results for Marginal Treatment Effects at Low Exposure}
\centering
\resizebox{\ifdim\width>\linewidth\linewidth\else\width\fi}{!}{
\fontsize{8}{10}\selectfont
\begin{threeparttable}
\begin{tabular}[t]{ccccccccccc}
\toprule
\multicolumn{2}{c}{ } & \multicolumn{3}{c}{n = 500} & \multicolumn{3}{c}{n = 1000} & \multicolumn{3}{c}{n = 2000} \\
\cmidrule(l{3pt}r{3pt}){3-5} \cmidrule(l{3pt}r{3pt}){6-8} \cmidrule(l{3pt}r{3pt}){9-11}
Model & Grid & Bias & RMSE & Coverage & Bias & RMSE & Coverage & Bias & RMSE & Coverage\\
\midrule
 & 0.1 & 0.372 & 0.429 & 0.603 & 0.386 & 0.416 & 0.267 & 0.390 & 0.406 & 0.044\\

 & 0.2 & 0.384 & 0.422 & 0.396 & 0.392 & 0.411 & 0.101 & 0.395 & 0.405 & 0.001\\

 & 0.3 & 0.394 & 0.421 & 0.233 & 0.397 & 0.410 & 0.028 & 0.398 & 0.405 & 0.000\\

 & 0.4 & 0.402 & 0.423 & 0.140 & 0.401 & 0.411 & 0.008 & 0.401 & 0.407 & 0.000\\

 & 0.5 & 0.409 & 0.428 & 0.103 & 0.405 & 0.414 & 0.006 & 0.404 & 0.408 & 0.000\\

 & 0.6 & 0.416 & 0.434 & 0.104 & 0.409 & 0.417 & 0.005 & 0.407 & 0.411 & 0.000\\

 & 0.7 & 0.424 & 0.444 & 0.132 & 0.413 & 0.422 & 0.011 & 0.410 & 0.415 & 0.000\\

 & 0.8 & 0.434 & 0.458 & 0.236 & 0.418 & 0.430 & 0.032 & 0.413 & 0.420 & 0.001\\

\multirow{-9}{*}{\centering\arraybackslash NSRM} & 0.9 & 0.446 & 0.482 & 0.452 & 0.424 & 0.443 & 0.154 & 0.418 & 0.428 & 0.012\\
\cmidrule{1-11}
 & 0.1 & -0.036 & 0.237 & 0.961 & -0.024 & 0.176 & 0.958 & -0.008 & 0.124 & 0.951\\

 & 0.2 & -0.018 & 0.203 & 0.964 & -0.014 & 0.149 & 0.960 & -0.002 & 0.105 & 0.955\\

 & 0.3 & -0.005 & 0.185 & 0.962 & -0.006 & 0.134 & 0.960 & 0.002 & 0.094 & 0.958\\

 & 0.4 & 0.005 & 0.174 & 0.967 & 0.000 & 0.124 & 0.959 & 0.006 & 0.088 & 0.963\\

 & 0.5 & 0.016 & 0.168 & 0.963 & 0.006 & 0.118 & 0.961 & 0.009 & 0.085 & 0.964\\

 & 0.6 & 0.026 & 0.168 & 0.969 & 0.012 & 0.117 & 0.966 & 0.013 & 0.084 & 0.961\\

 & 0.7 & 0.037 & 0.174 & 0.969 & 0.019 & 0.120 & 0.969 & 0.016 & 0.087 & 0.964\\

 & 0.8 & 0.050 & 0.188 & 0.975 & 0.026 & 0.129 & 0.971 & 0.021 & 0.094 & 0.961\\

\multirow{-9}{*}{\centering\arraybackslash SRM} & 0.9 & 0.068 & 0.217 & 0.974 & 0.037 & 0.149 & 0.969 & 0.026 & 0.110 & 0.953\\
\bottomrule
\end{tabular}
\begin{tablenotes}[para]
\item \scriptsize \textit{Notes:} This table displays the average bias (Bias), the Root Mean Squared Error (RMSE), and the coverage rate (Coverage) across $R=1000$ replicates. The rows contain results for models with/without spatial interference and for various sample size $n$.
\end{tablenotes}
\end{threeparttable}}
\end{table}

\vspace{-0.75\baselineskip}

\begin{table}[H]
\centering
\caption{\label{tab:tab-SMCresults-Mar26-DGP2-MTE2s}Simulation Results for Marginal Treatment Effects at Medium Exposure}
\centering
\resizebox{\ifdim\width>\linewidth\linewidth\else\width\fi}{!}{
\fontsize{8}{10}\selectfont
\begin{threeparttable}
\begin{tabular}[t]{ccccccccccc}
\toprule
\multicolumn{2}{c}{ } & \multicolumn{3}{c}{n = 500} & \multicolumn{3}{c}{n = 1000} & \multicolumn{3}{c}{n = 2000} \\
\cmidrule(l{3pt}r{3pt}){3-5} \cmidrule(l{3pt}r{3pt}){6-8} \cmidrule(l{3pt}r{3pt}){9-11}
Model & Grid & Bias & RMSE & Coverage & Bias & RMSE & Coverage & Bias & RMSE & Coverage\\
\midrule
 & 0.1 & -0.028 & 0.217 & 0.958 & -0.014 & 0.155 & 0.948 & -0.010 & 0.113 & 0.947\\

 & 0.2 & -0.016 & 0.175 & 0.944 & -0.008 & 0.123 & 0.943 & -0.005 & 0.089 & 0.944\\

 & 0.3 & -0.006 & 0.150 & 0.947 & -0.003 & 0.104 & 0.945 & -0.002 & 0.075 & 0.942\\

 & 0.4 & 0.002 & 0.134 & 0.941 & 0.001 & 0.091 & 0.947 & 0.001 & 0.066 & 0.946\\

 & 0.5 & 0.009 & 0.126 & 0.946 & 0.005 & 0.084 & 0.955 & 0.004 & 0.061 & 0.952\\

 & 0.6 & 0.016 & 0.125 & 0.954 & 0.009 & 0.083 & 0.965 & 0.007 & 0.060 & 0.953\\

 & 0.7 & 0.024 & 0.133 & 0.963 & 0.013 & 0.089 & 0.961 & 0.010 & 0.065 & 0.959\\

 & 0.8 & 0.034 & 0.151 & 0.970 & 0.018 & 0.103 & 0.971 & 0.013 & 0.076 & 0.960\\

\multirow{-9}{*}{\centering\arraybackslash NSRM} & 0.9 & 0.046 & 0.187 & 0.974 & 0.024 & 0.131 & 0.970 & 0.018 & 0.097 & 0.959\\
\cmidrule{1-11}
 & 0.1 & -0.034 & 0.204 & 0.963 & -0.021 & 0.148 & 0.951 & -0.010 & 0.108 & 0.949\\

 & 0.2 & -0.016 & 0.163 & 0.964 & -0.011 & 0.117 & 0.953 & -0.004 & 0.085 & 0.953\\

 & 0.3 & -0.003 & 0.138 & 0.961 & -0.003 & 0.097 & 0.954 & 0.000 & 0.071 & 0.956\\

 & 0.4 & 0.008 & 0.122 & 0.963 & 0.003 & 0.084 & 0.962 & 0.004 & 0.061 & 0.957\\

 & 0.5 & 0.018 & 0.114 & 0.958 & 0.009 & 0.076 & 0.961 & 0.007 & 0.055 & 0.960\\

 & 0.6 & 0.028 & 0.113 & 0.959 & 0.015 & 0.075 & 0.963 & 0.010 & 0.053 & 0.970\\

 & 0.7 & 0.039 & 0.121 & 0.964 & 0.022 & 0.080 & 0.965 & 0.014 & 0.057 & 0.968\\

 & 0.8 & 0.052 & 0.139 & 0.970 & 0.029 & 0.094 & 0.964 & 0.018 & 0.066 & 0.973\\

\multirow{-9}{*}{\centering\arraybackslash SRM} & 0.9 & 0.070 & 0.175 & 0.968 & 0.040 & 0.121 & 0.962 & 0.024 & 0.086 & 0.961\\
\bottomrule
\end{tabular}
\begin{tablenotes}[para]
\item \scriptsize \textit{Notes:} This table displays the average bias (Bias), the Root Mean Squared Error (RMSE), and the coverage rate (Coverage) across $R=1000$ replicates. The rows contain results for models with/without spatial interference and for various sample size $n$.
\end{tablenotes}
\end{threeparttable}}
\end{table}

\vspace{-0.75\baselineskip}

\begin{table}[H]
\centering
\caption{\label{tab:tab-SMCresults-Mar26-DGP2-MTE3s}Simulation Results for Marginal Treatment Effects at High Exposure}
\centering
\fontsize{8}{10}\selectfont
\begin{threeparttable}
\begin{tabular}[t]{ccccccccccc}
\toprule
\multicolumn{2}{c}{ } & \multicolumn{3}{c}{n = 500} & \multicolumn{3}{c}{n = 1000} & \multicolumn{3}{c}{n = 2000} \\
\cmidrule(l{3pt}r{3pt}){3-5} \cmidrule(l{3pt}r{3pt}){6-8} \cmidrule(l{3pt}r{3pt}){9-11}
Model & Grid & Bias & RMSE & Coverage & Bias & RMSE & Coverage & Bias & RMSE & Coverage\\
\midrule
 & 0.1 & -0.428 & 0.479 & 0.514 & -0.414 & 0.442 & 0.276 & -0.410 & 0.425 & 0.053\\

 & 0.2 & -0.416 & 0.450 & 0.357 & -0.408 & 0.426 & 0.123 & -0.405 & 0.415 & 0.007\\

 & 0.3 & -0.406 & 0.433 & 0.229 & -0.403 & 0.416 & 0.042 & -0.402 & 0.408 & 0.001\\

 & 0.4 & -0.398 & 0.420 & 0.159 & -0.399 & 0.409 & 0.014 & -0.399 & 0.404 & 0.000\\

 & 0.5 & -0.391 & 0.410 & 0.117 & -0.395 & 0.404 & 0.004 & -0.396 & 0.401 & 0.000\\

 & 0.6 & -0.384 & 0.403 & 0.140 & -0.391 & 0.400 & 0.004 & -0.393 & 0.398 & 0.000\\

 & 0.7 & -0.376 & 0.398 & 0.220 & -0.387 & 0.397 & 0.013 & -0.390 & 0.396 & 0.000\\

 & 0.8 & -0.366 & 0.395 & 0.381 & -0.382 & 0.396 & 0.069 & -0.387 & 0.394 & 0.000\\

\multirow{-9}{*}{\centering\arraybackslash NSRM} & 0.9 & -0.354 & 0.398 & 0.610 & -0.376 & 0.397 & 0.237 & -0.382 & 0.394 & 0.030\\
\cmidrule{1-11}
 & 0.1 & -0.032 & 0.247 & 0.955 & -0.018 & 0.175 & 0.960 & -0.012 & 0.129 & 0.931\\

 & 0.2 & -0.014 & 0.214 & 0.954 & -0.008 & 0.150 & 0.953 & -0.006 & 0.109 & 0.938\\

 & 0.3 & -0.001 & 0.195 & 0.950 & 0.000 & 0.136 & 0.950 & -0.002 & 0.098 & 0.942\\

 & 0.4 & 0.010 & 0.183 & 0.954 & 0.006 & 0.127 & 0.949 & 0.001 & 0.090 & 0.945\\

 & 0.5 & 0.020 & 0.177 & 0.948 & 0.012 & 0.122 & 0.954 & 0.005 & 0.086 & 0.951\\

 & 0.6 & 0.030 & 0.176 & 0.955 & 0.018 & 0.122 & 0.955 & 0.008 & 0.084 & 0.954\\

 & 0.7 & 0.041 & 0.181 & 0.956 & 0.025 & 0.126 & 0.957 & 0.012 & 0.086 & 0.954\\

 & 0.8 & 0.054 & 0.193 & 0.955 & 0.032 & 0.135 & 0.956 & 0.016 & 0.092 & 0.959\\

\multirow{-9}{*}{\centering\arraybackslash SRM} & 0.9 & 0.072 & 0.220 & 0.959 & 0.043 & 0.156 & 0.960 & 0.022 & 0.106 & 0.961\\
\bottomrule
\end{tabular}
\begin{tablenotes}[para]
\item \scriptsize \textit{Notes:} This table displays the average bias (Bias), the Root Mean Squared Error (RMSE), and the coverage rate (Coverage) across $R=1000$ replicates. The rows contain results for models with/without spatial interference and for various sample size $n$.
\end{tablenotes}
\end{threeparttable}
\end{table}

\vspace{-0.25\baselineskip}

Tables \ref{tab:tab-SMCresults-Mar26-DGP2-MTE1s}---\ref{tab:tab-SMCresults-Mar26-DGP2-MTE3s} report the complete Monte Carlo results for the marginal treatment effect over the full resistance grid \(v\in \{0.1,\ldots,0.9\}\) at low, medium, and high neighborhood exposure, respectively. These tables reinforce the representative results reported in the main text. Under the correctly specified SRM, MTE bias remains small over most of the resistance and exposure space, RMSE decreases as the sample size increases, and empirical coverage of the 95\% posterior credible intervals is generally close to the nominal level. The largest finite-sample deviations occur near the boundaries of the resistance distribution, where estimation is naturally less precise. In contrast, the NSRM exhibits persistent misspecification bias whenever neighborhood exposure differs materially from the level at which the omitted spillover component has little effect. The distortion is especially pronounced at low and high exposure. In these regions, increasing \(n\) reduces sampling uncertainty around an incorrectly specified estimand, causing empirical coverage to deteriorate rather than improve. At medium exposure, the NSRM performs comparatively well in this particular design because the omitted spillover component produces little discrepancy at that exposure level. Thus, satisfactory performance at a single exposure value should not be interpreted as robustness to interference.

\subsection{Additional Robustness Designs}\label{additional-robustness-designs}

We consider three additional designs that depart from the baseline simulation. The first removes spillovers altogether, the second introduces latent heterogeneity through a finite mixture of Gaussian disturbances, and the third replaces Gaussian disturbances with a heavy-tailed multivariate Student-\(t\) distribution. These exercises assess whether the proposed estimator remains well behaved when the spillover channel is inactive and when the latent disturbance distribution is more heterogeneous or heavy-tailed than in the baseline design.

\subsubsection{Scenario (i): No Spillovers}\label{scenario-i-no-spillovers}

We first suppress neighborhood spillovers by setting the exposure parameters to zero
\[
\delta^{(1)} = \delta^{(0)} = 0,
\]
while leaving all other elements of the baseline data-generating process in subsection \ref{BCIES-Section4_1} unchanged. Under this design, neighborhood exposure has no effect on either potential outcome, meaning interference channel is inactive.

The corresponding Monte Carlo results in Table \ref{tab:tab-SMCresults-Mar26-DGP1-SI} show that the SRM continues to recover the principal structural parameters accurately even though the true spillover coefficients are zero. Bias is generally small, RMSE decreases with \(n\), and empirical coverage remains close to the nominal level for the principal parameters. In particular, estimates of \(\delta^{(1)}\) and \(\delta^{(0)}\) concentrate around zero as the sample size increases.

The MTE results in \ref{tab:tab-SMCresults-Mar26-DGP1-MTEs} exhibit a similar pattern. Because spillovers are absent, the true MTE does not vary with neighborhood exposure, and the estimated MTE surfaces remain close to their true values across the exposure grid. Thus, allowing for spillovers in estimation does not by itself generate economically meaningful spillover effects when the data-generating process contains none. This exercise shows that the more general SRM remains well behaved when the no-spillover model is nested within it.

\subsection{Scenario (ii). Finite-Mixture Latent Heterogeneity}\label{scenario-ii.-finite-mixture-latent-heterogeneity}

The second robustness design introduces latent heterogeneity through a two-component mixture of multivariate Gaussian disturbances
\[
\begin{aligned}
\epsilon_i = \begin{bmatrix}
\epsilon_i^{(D)} , \epsilon_i^{(1)} , \epsilon_i^{(0)}
\end{bmatrix}^\top &\overset{ind}{\sim} \frac{1}{3} \mathcal{N}\left(0, \Sigma_1 \right) + \frac{2}{3} \mathcal{N}\left(0, \Sigma_2 \right),\\ 
\end{aligned}
\]
where the two covariance matrices differ substantially in scale, creating heterogeneous dispersion across individuals
\[
\Sigma_1 = \begin{bmatrix}  
1.00 & 0.85 & 0.20 \\
& 1.20& 0.40 \\
& & 1.00
\end{bmatrix} \quad \text{and} \quad \Sigma_2 = \begin{bmatrix}  
1.00 & 0.25 & 0.75 \\
& 0.90& 0.30 \\
& & 1.10
\end{bmatrix}.
\]
This specification captures environments in which unobserved productivity or selection costs vary across latent subpopulations--for example, high- and low-ability individuals responding differently to the same policy incentives.

We estimate the SRM using the corresponding two-component mixture specification, \(G=2\). In Table \ref{tab:tab-SMCresults-Mar26-DGP3-SI}, the spillover coefficients \(\delta^{(1)}\) and \(\delta^{(0)}\), together with their difference, exhibit small bias across sample sizes, while RMSE generally declines as \(n\) increases. Coverage for most identified structural parameters remains reasonably close to the nominal level.

The MTE results in Table \ref{tab:tab-SMCresults-Mar26-DGP3-MTEs} show that the heterogeneous treatment-effect surface is also recovered well over the resistance and exposure grids. Bias is small relative to the magnitude of the true effects, precision improves with the sample size, and coverage is generally close to 95\%. These results indicate that the proposed mixture specification can accommodate substantial latent heterogeneity without materially degrading estimation of the principal causal objects.

\subsubsection{Scenario (iii). Heavy-Tailed Non-Normal Disturbances}\label{scenario-iii.-heavy-tailed-non-normal-disturbances}

The third robustness design introduces heavy-tailed non-normality by drawing disturbances from a multivariate Student-\(t\) distribution,
\[
\epsilon_i \sim t_\nu(0,\Sigma), \quad \nu = 5.
\]
This specification generates occasional extreme realizations of the latent selection and outcome shocks while preserving the same dependence structure as in the baseline Gaussian design.

Estimation continues to use the mixture-normal SRM. This experiment therefore evaluates performance under distributional misspecification rather than simply under a richer correctly specified mixture model. Overall, the estimator remains reasonably stable under heavy-tailed disturbances. As reported in Table \ref{tab:tab-SMCresults-Mar26-DGP4-SI}, bias for the principal spillover and treatment-response parameters remains limited, and RMSE declines with the sample size. Coverage is somewhat less accurate than in the baseline Gaussian design for some parameters, reflecting the more demanding distributional environment.

The MTE results in Table \ref{tab:tab-SMCresults-Mar26-DGP4-MTEs} show a similar pattern. Estimated effects remain close to the true MTE surface across low, medium, and high exposure, although finite-sample uncertainty is larger than under Gaussian disturbances. The deterioration is therefore primarily one of precision and interval calibration rather than a systematic failure to recover the heterogeneous treatment-effect pattern.

Across all robustness designed, the proposed framework consistently recovers the structure of heterogeneous treatment effects and spillover responses, supporting the reliability of subsequent estimation of policy-relevant effects.

\newpage

\begin{table}[H]
\centering
\caption{\label{tab:tab-SMCresults-Mar26-DGP1-SI}Simulation Results for Scenario (i)}
\centering
\resizebox{\ifdim\width>\linewidth\linewidth\else\width\fi}{!}{
\begin{threeparttable}
\begin{tabular}[t]{lcccccccccccccl}
\toprule
\multicolumn{2}{c}{ } & \multicolumn{4}{c}{Quantities of Interest} & \multicolumn{9}{c}{Other Parameters} \\
\cmidrule(l{3pt}r{3pt}){3-6} \cmidrule(l{3pt}r{3pt}){7-15}
Metric & n & $\delta^{(1)}$ & $\delta^{(0)}$ & $\delta^{(1)}-\delta^{(0)}$ & $\sigma_{1D}-\sigma_{0D}$ & $\alpha$ & $\beta^{(D)}$ & $\beta_1^{(1)}$ & $\beta_1^{(0)}$ & $\sigma_1^2$ & $\sigma_0^2$ & $\rho_{1D}$ & $\rho_{0D}$ & $\rho_{10}$\\
\midrule
 & True Value & 0.000 & 0.000 & 0.000 & 0.200 & 1.500 & 0.000 & 2.000 & 1.000 & 1.000 & 1.000 & 0.900 & 0.700 & 0.600\\
\cmidrule{1-15}
 & 500 & 0.005 & -0.002 & 0.007 & -0.041 & 0.064 & -0.020 & 0.013 & -0.001 & 0.010 & 0.007 & -0.046 & -0.006 & 0.105\\

 & 1000 & 0.008 & 0.000 & 0.008 & -0.024 & 0.029 & -0.012 & 0.005 & 0.000 & 0.002 & 0.002 & -0.024 & -0.001 & 0.110\\

\multirow{-3}{*}[1\dimexpr\aboverulesep+\belowrulesep+\cmidrulewidth]{\raggedright\arraybackslash Bias} & 2000 & -0.006 & -0.001 & -0.005 & -0.014 & 0.017 & -0.005 & 0.010 & 0.000 & 0.000 & 0.002 & -0.012 & 0.000 & 0.107\\
\cmidrule{1-15}
 & 500 & 0.219 & 0.240 & 0.325 & 0.119 & 0.142 & 0.076 & 0.130 & 0.148 & 0.105 & 0.098 & 0.061 & 0.082 & 0.126\\

 & 1000 & 0.152 & 0.176 & 0.234 & 0.087 & 0.085 & 0.054 & 0.089 & 0.108 & 0.067 & 0.070 & 0.035 & 0.058 & 0.123\\

\multirow{-3}{*}[1\dimexpr\aboverulesep+\belowrulesep+\cmidrulewidth]{\raggedright\arraybackslash RMSE} & 2000 & 0.108 & 0.119 & 0.163 & 0.063 & 0.059 & 0.037 & 0.064 & 0.072 & 0.048 & 0.049 & 0.023 & 0.042 & 0.117\\
\cmidrule{1-15}
 & 500 & 0.965 & 0.957 & 0.952 & 0.960 & 0.893 & 0.929 & 0.965 & 0.956 & 0.969 & 0.957 & 0.919 & 0.960 & 0.979\\

 & 1000 & 0.942 & 0.941 & 0.952 & 0.948 & 0.920 & 0.942 & 0.954 & 0.936 & 0.977 & 0.953 & 0.934 & 0.961 & 0.963\\

\multirow{-3}{*}[1\dimexpr\aboverulesep+\belowrulesep+\cmidrulewidth]{\raggedright\arraybackslash Coverage} & 2000 & 0.943 & 0.951 & 0.952 & 0.949 & 0.922 & 0.943 & 0.957 & 0.953 & 0.982 & 0.954 & 0.952 & 0.961 & 0.967\\
\bottomrule
\end{tabular}
\begin{tablenotes}[para]
\item \textit{Notes:} This table displays the average bias (Bias), the Root Mean Squared Error (RMSE), and the coverage rate (Coverage) across $R=1000$ replicates. The rows contain results for various sample size $N$.
\end{tablenotes}
\end{threeparttable}}
\end{table}

\begin{table}[!h]
\centering
\caption{\label{tab:tab-SMCresults-Mar26-DGP1-MTEs}Simulation Results for Marginal Treatment Effects}
\centering
\resizebox{\ifdim\width>\linewidth\linewidth\else\width\fi}{!}{
\fontsize{8}{10}\selectfont
\begin{threeparttable}
\begin{tabular}[t]{cccccccccccc}
\toprule
\multicolumn{3}{c}{ } & \multicolumn{3}{c}{n = 500} & \multicolumn{3}{c}{n = 1000} & \multicolumn{3}{c}{n = 2000} \\
\cmidrule(l{3pt}r{3pt}){4-6} \cmidrule(l{3pt}r{3pt}){7-9} \cmidrule(l{3pt}r{3pt}){10-12}
Exposure & Grid $v$ & True Value & Bias & RMSE & Coverage & Bias & RMSE & Coverage & Bias & RMSE & Coverage\\
\midrule
 & 0.1 & 1.256 & -0.038 & 0.237 & 0.968 & -0.024 & 0.176 & 0.954 & -0.009 & 0.124 & 0.948\\

 & 0.2 & 1.168 & -0.020 & 0.204 & 0.964 & -0.014 & 0.150 & 0.959 & -0.002 & 0.105 & 0.954\\

 & 0.3 & 1.105 & -0.007 & 0.185 & 0.964 & -0.006 & 0.134 & 0.958 & 0.002 & 0.094 & 0.958\\

 & 0.4 & 1.051 & 0.004 & 0.174 & 0.969 & 0.000 & 0.125 & 0.956 & 0.006 & 0.088 & 0.960\\

 & 0.5 & 1.000 & 0.015 & 0.168 & 0.964 & 0.006 & 0.119 & 0.958 & 0.009 & 0.085 & 0.962\\

 & 0.6 & 0.949 & 0.025 & 0.168 & 0.965 & 0.012 & 0.117 & 0.962 & 0.013 & 0.084 & 0.961\\

 & 0.7 & 0.895 & 0.036 & 0.174 & 0.970 & 0.018 & 0.120 & 0.968 & 0.016 & 0.087 & 0.962\\

 & 0.8 & 0.832 & 0.050 & 0.188 & 0.967 & 0.026 & 0.129 & 0.969 & 0.020 & 0.095 & 0.959\\

\multirow{-9}{*}{\centering\arraybackslash Low} & 0.9 & 0.744 & 0.068 & 0.217 & 0.968 & 0.036 & 0.149 & 0.967 & 0.027 & 0.110 & 0.953\\
\cmidrule{1-12}
 & 0.1 & 1.256 & -0.035 & 0.204 & 0.969 & -0.021 & 0.149 & 0.950 & -0.010 & 0.108 & 0.947\\

 & 0.2 & 1.168 & -0.017 & 0.163 & 0.967 & -0.011 & 0.117 & 0.953 & -0.004 & 0.085 & 0.953\\

 & 0.3 & 1.105 & -0.004 & 0.138 & 0.963 & -0.003 & 0.098 & 0.952 & 0.000 & 0.071 & 0.960\\

 & 0.4 & 1.051 & 0.007 & 0.122 & 0.965 & 0.003 & 0.084 & 0.961 & 0.004 & 0.061 & 0.957\\

 & 0.5 & 1.000 & 0.018 & 0.114 & 0.956 & 0.009 & 0.077 & 0.961 & 0.007 & 0.055 & 0.964\\

 & 0.6 & 0.949 & 0.028 & 0.113 & 0.960 & 0.015 & 0.075 & 0.961 & 0.011 & 0.053 & 0.964\\

 & 0.7 & 0.895 & 0.039 & 0.121 & 0.958 & 0.022 & 0.080 & 0.957 & 0.014 & 0.057 & 0.968\\

 & 0.8 & 0.832 & 0.052 & 0.140 & 0.965 & 0.029 & 0.093 & 0.961 & 0.019 & 0.066 & 0.968\\

\multirow{-9}{*}{\centering\arraybackslash Medium} & 0.9 & 0.744 & 0.070 & 0.176 & 0.968 & 0.040 & 0.120 & 0.966 & 0.025 & 0.086 & 0.959\\
\cmidrule{1-12}
 & 0.1 & 1.256 & -0.032 & 0.247 & 0.955 & -0.018 & 0.175 & 0.958 & -0.012 & 0.129 & 0.941\\

 & 0.2 & 1.168 & -0.014 & 0.214 & 0.955 & -0.008 & 0.150 & 0.958 & -0.006 & 0.109 & 0.940\\

 & 0.3 & 1.105 & -0.001 & 0.195 & 0.954 & 0.000 & 0.136 & 0.949 & -0.002 & 0.098 & 0.941\\

 & 0.4 & 1.051 & 0.010 & 0.183 & 0.952 & 0.006 & 0.127 & 0.947 & 0.002 & 0.090 & 0.944\\

 & 0.5 & 1.000 & 0.020 & 0.177 & 0.949 & 0.012 & 0.123 & 0.951 & 0.005 & 0.085 & 0.952\\

 & 0.6 & 0.949 & 0.031 & 0.176 & 0.951 & 0.018 & 0.122 & 0.954 & 0.009 & 0.084 & 0.956\\

 & 0.7 & 0.895 & 0.042 & 0.181 & 0.957 & 0.025 & 0.126 & 0.957 & 0.012 & 0.085 & 0.960\\

 & 0.8 & 0.832 & 0.055 & 0.193 & 0.961 & 0.032 & 0.135 & 0.956 & 0.017 & 0.092 & 0.960\\

\multirow{-9}{*}{\centering\arraybackslash High} & 0.9 & 0.744 & 0.073 & 0.220 & 0.958 & 0.043 & 0.156 & 0.958 & 0.023 & 0.106 & 0.963\\
\bottomrule
\end{tabular}
\begin{tablenotes}[para]
\item \textit{Notes:} This table displays the average bias (Bias), the Root Mean Squared Error (RMSE), and the coverage rate (Coverage) across $R=1000$ replicates.
\end{tablenotes}
\end{threeparttable}}
\end{table}

\newpage

\begin{table}[H]
\centering
\caption{\label{tab:tab-SMCresults-Mar26-DGP3-SI}Simulation Results for Scenario (ii)}
\centering
\resizebox{\ifdim\width>\linewidth\linewidth\else\width\fi}{!}{
\begin{threeparttable}
\begin{tabular}[t]{lcccccccccccccl}
\toprule
\multicolumn{2}{c}{ } & \multicolumn{4}{c}{Quantities of Interest} & \multicolumn{9}{c}{Other Parameters} \\
\cmidrule(l{3pt}r{3pt}){3-6} \cmidrule(l{3pt}r{3pt}){7-15}
Metric & n & $\delta^{(1)}$ & $\delta^{(0)}$ & $\delta^{(1)}-\delta^{(0)}$ & $\sigma_{1D}-\sigma_{0D}$ & $\alpha$ & $\beta^{(D)}$ & $\beta_1^{(1)}$ & $\beta_1^{(0)}$ & $\sigma_1^2$ & $\sigma_0^2$ & $\rho_{1D}$ & $\rho_{0D}$ & $\rho_{10}$\\
\midrule
 & True Value & 1.500 & 0.500 & 1.000 & 0.200 & 1.500 & 0.000 & 2.000 & 1.000 & 1.000 & 1.000 & 0.900 & 0.700 & 0.600\\
\cmidrule{1-15}
 & 500 & -0.001 & 0.010 & -0.011 & 0.001 & 0.056 & 0.001 & 0.006 & -0.012 & 0.014 & 0.016 & -0.027 & -0.028 & 0.214\\

 & 1000 & -0.001 & 0.014 & -0.015 & 0.008 & 0.030 & 0.000 & 0.005 & -0.016 & 0.003 & 0.005 & -0.016 & -0.024 & 0.274\\

\multirow{-3}{*}[1\dimexpr\aboverulesep+\belowrulesep+\cmidrulewidth]{\raggedright\arraybackslash Bias} & 2000 & 0.001 & 0.003 & -0.002 & -0.003 & 0.018 & 0.002 & 0.010 & -0.007 & -0.003 & 0.000 & -0.019 & -0.016 & 0.300\\
\cmidrule{1-15}
 & 500 & 0.264 & 0.264 & 0.366 & 0.190 & 0.150 & 0.080 & 0.157 & 0.157 & 0.101 & 0.107 & 0.132 & 0.122 & 0.257\\

 & 1000 & 0.182 & 0.189 & 0.260 & 0.133 & 0.096 & 0.052 & 0.111 & 0.117 & 0.067 & 0.071 & 0.093 & 0.088 & 0.296\\

\multirow{-3}{*}[1\dimexpr\aboverulesep+\belowrulesep+\cmidrulewidth]{\raggedright\arraybackslash RMSE} & 2000 & 0.136 & 0.133 & 0.194 & 0.098 & 0.066 & 0.038 & 0.082 & 0.080 & 0.049 & 0.052 & 0.069 & 0.060 & 0.327\\
\cmidrule{1-15}
 & 500 & 0.957 & 0.952 & 0.958 & 0.950 & 0.904 & 0.934 & 0.965 & 0.963 & 0.951 & 0.962 & 0.966 & 0.953 & 0.992\\

 & 1000 & 0.957 & 0.956 & 0.955 & 0.944 & 0.929 & 0.959 & 0.963 & 0.938 & 0.954 & 0.951 & 0.949 & 0.938 & 0.956\\

\multirow{-3}{*}[1\dimexpr\aboverulesep+\belowrulesep+\cmidrulewidth]{\raggedright\arraybackslash Coverage} & 2000 & 0.938 & 0.948 & 0.941 & 0.937 & 0.941 & 0.954 & 0.940 & 0.950 & 0.943 & 0.960 & 0.940 & 0.944 & 0.825\\
\bottomrule
\end{tabular}
\begin{tablenotes}[para]
\item \textit{Notes:} This table displays the average bias (Bias), the Root Mean Squared Error (RMSE), and the coverage rate (Coverage) across $R=1000$ replicates. The rows contain results for various sample size $n$.
\end{tablenotes}
\end{threeparttable}}
\end{table}

\begin{table}[!h]
\centering
\caption{\label{tab:tab-SMCresults-Mar26-DGP3-MTEs}Simulation Results for Marginal Treatment Effects}
\centering
\resizebox{\ifdim\width>\linewidth\linewidth\else\width\fi}{!}{
\fontsize{8}{10}\selectfont
\begin{threeparttable}
\begin{tabular}[t]{cccccccccccc}
\toprule
\multicolumn{3}{c}{ } & \multicolumn{3}{c}{n = 500} & \multicolumn{3}{c}{n = 1000} & \multicolumn{3}{c}{n = 2000} \\
\cmidrule(l{3pt}r{3pt}){4-6} \cmidrule(l{3pt}r{3pt}){7-9} \cmidrule(l{3pt}r{3pt}){10-12}
Exposure & Grid $v$ & True Value & Bias & RMSE & Coverage & Bias & RMSE & Coverage & Bias & RMSE & Coverage\\
\midrule
 & 0.1 & 0.951 & 0.017 & 0.306 & 0.964 & 0.029 & 0.216 & 0.956 & 0.013 & 0.159 & 0.929\\

 & 0.2 & 1.002 & 0.017 & 0.248 & 0.966 & 0.026 & 0.176 & 0.957 & 0.014 & 0.129 & 0.943\\

 & 0.3 & 1.039 & 0.017 & 0.216 & 0.966 & 0.024 & 0.154 & 0.952 & 0.015 & 0.112 & 0.942\\

 & 0.4 & 1.070 & 0.016 & 0.200 & 0.965 & 0.021 & 0.143 & 0.952 & 0.016 & 0.104 & 0.943\\

 & 0.5 & 1.100 & 0.016 & 0.196 & 0.967 & 0.019 & 0.141 & 0.953 & 0.017 & 0.102 & 0.942\\

 & 0.6 & 1.130 & 0.016 & 0.203 & 0.965 & 0.017 & 0.146 & 0.948 & 0.017 & 0.105 & 0.947\\

 & 0.7 & 1.161 & 0.016 & 0.223 & 0.965 & 0.015 & 0.160 & 0.944 & 0.018 & 0.115 & 0.947\\

 & 0.8 & 1.198 & 0.015 & 0.257 & 0.964 & 0.013 & 0.184 & 0.936 & 0.019 & 0.133 & 0.948\\

\multirow{-9}{*}{\centering\arraybackslash Low} & 0.9 & 1.250 & 0.015 & 0.318 & 0.960 & 0.009 & 0.226 & 0.943 & 0.020 & 0.164 & 0.942\\
\cmidrule{1-12}
 & 0.1 & 1.350 & 0.013 & 0.274 & 0.953 & 0.024 & 0.190 & 0.954 & 0.012 & 0.140 & 0.944\\

 & 0.2 & 1.402 & 0.013 & 0.205 & 0.952 & 0.020 & 0.142 & 0.955 & 0.013 & 0.104 & 0.942\\

 & 0.3 & 1.439 & 0.012 & 0.165 & 0.955 & 0.018 & 0.113 & 0.955 & 0.014 & 0.082 & 0.953\\

 & 0.4 & 1.470 & 0.012 & 0.141 & 0.957 & 0.015 & 0.096 & 0.960 & 0.015 & 0.070 & 0.954\\

 & 0.5 & 1.500 & 0.012 & 0.134 & 0.957 & 0.013 & 0.092 & 0.959 & 0.016 & 0.066 & 0.955\\

 & 0.6 & 1.530 & 0.012 & 0.144 & 0.955 & 0.011 & 0.099 & 0.955 & 0.017 & 0.071 & 0.953\\

 & 0.7 & 1.561 & 0.011 & 0.169 & 0.962 & 0.009 & 0.118 & 0.949 & 0.017 & 0.085 & 0.950\\

 & 0.8 & 1.598 & 0.011 & 0.211 & 0.959 & 0.007 & 0.148 & 0.949 & 0.018 & 0.107 & 0.948\\

\multirow{-9}{*}{\centering\arraybackslash Medium} & 0.9 & 1.650 & 0.011 & 0.281 & 0.956 & 0.003 & 0.198 & 0.950 & 0.020 & 0.144 & 0.939\\
\cmidrule{1-12}
 & 0.1 & 1.750 & 0.009 & 0.315 & 0.945 & 0.018 & 0.218 & 0.945 & 0.011 & 0.160 & 0.938\\

 & 0.2 & 1.802 & 0.008 & 0.257 & 0.949 & 0.014 & 0.176 & 0.950 & 0.012 & 0.130 & 0.936\\

 & 0.3 & 1.839 & 0.008 & 0.224 & 0.954 & 0.012 & 0.153 & 0.956 & 0.013 & 0.113 & 0.940\\

 & 0.4 & 1.870 & 0.007 & 0.207 & 0.955 & 0.009 & 0.141 & 0.958 & 0.014 & 0.104 & 0.939\\

 & 0.5 & 1.900 & 0.007 & 0.202 & 0.956 & 0.007 & 0.137 & 0.953 & 0.015 & 0.102 & 0.936\\

 & 0.6 & 1.930 & 0.007 & 0.207 & 0.960 & 0.005 & 0.142 & 0.954 & 0.016 & 0.105 & 0.946\\

 & 0.7 & 1.961 & 0.007 & 0.225 & 0.956 & 0.003 & 0.155 & 0.955 & 0.016 & 0.114 & 0.943\\

 & 0.8 & 1.998 & 0.006 & 0.257 & 0.957 & 0.001 & 0.178 & 0.955 & 0.018 & 0.132 & 0.935\\

\multirow{-9}{*}{\centering\arraybackslash High} & 0.9 & 2.050 & 0.006 & 0.316 & 0.950 & -0.003 & 0.220 & 0.952 & 0.019 & 0.163 & 0.936\\
\bottomrule
\end{tabular}
\begin{tablenotes}[para]
\item \textit{Notes:} This table displays the average bias (Bias), the Root Mean Squared Error (RMSE), and the coverage rate (Coverage) across $R=1000$ replicates.
\end{tablenotes}
\end{threeparttable}}
\end{table}

\newpage

\begin{table}[H]
\centering
\caption{\label{tab:tab-SMCresults-Mar26-DGP4-SI}Simulation Results for Scenario (iii)}
\centering
\resizebox{\ifdim\width>\linewidth\linewidth\else\width\fi}{!}{
\begin{threeparttable}
\begin{tabular}[t]{lcccccccccccccl}
\toprule
\multicolumn{2}{c}{ } & \multicolumn{4}{c}{Quantities of Interest} & \multicolumn{9}{c}{Other Parameters} \\
\cmidrule(l{3pt}r{3pt}){3-6} \cmidrule(l{3pt}r{3pt}){7-15}
Metric & n & $\delta^{(1)}$ & $\delta^{(0)}$ & $\delta^{(1)}-\delta^{(0)}$ & $\sigma_{1D}-\sigma_{0D}$ & $\alpha$ & $\beta^{(D)}$ & $\beta_1^{(1)}$ & $\beta_1^{(0)}$ & $\sigma_1^2$ & $\sigma_0^2$ & $\rho_{1D}$ & $\rho_{0D}$ & $\rho_{10}$\\
\midrule
 & True Value & 1.500 & 0.500 & 1.000 & 0.200 & 1.500 & 0.000 & 2.000 & 1.000 & 1.000 & 1.067 & 0.450 & 0.549 & 0.323\\
\cmidrule{1-15}
 & 500 & 0.003 & -0.002 & 0.006 & 0.027 & -0.262 & -0.020 & -0.031 & 0.035 & 0.739 & 0.713 & -0.025 & 0.004 & 0.143\\

 & 1000 & 0.001 & -0.001 & 0.002 & 0.038 & -0.288 & -0.010 & -0.034 & 0.036 & 0.712 & 0.706 & -0.007 & 0.010 & 0.146\\

\multirow{-3}{*}[1\dimexpr\aboverulesep+\belowrulesep+\cmidrulewidth]{\raggedright\arraybackslash Bias} & 2000 & -0.003 & 0.001 & -0.003 & 0.042 & -0.295 & -0.004 & -0.031 & 0.035 & 0.696 & 0.708 & 0.000 & 0.011 & 0.147\\
\cmidrule{1-15}
 & 500 & 0.268 & 0.315 & 0.420 & 0.204 & 0.291 & 0.072 & 0.168 & 0.204 & 0.810 & 0.777 & 0.054 & 0.107 & 0.173\\

 & 1000 & 0.188 & 0.233 & 0.298 & 0.163 & 0.300 & 0.048 & 0.121 & 0.154 & 0.748 & 0.742 & 0.033 & 0.082 & 0.166\\

\multirow{-3}{*}[1\dimexpr\aboverulesep+\belowrulesep+\cmidrulewidth]{\raggedright\arraybackslash RMSE} & 2000 & 0.135 & 0.154 & 0.207 & 0.134 & 0.301 & 0.033 & 0.089 & 0.109 & 0.710 & 0.749 & 0.024 & 0.062 & 0.161\\
\cmidrule{1-15}
 & 500 & 0.954 & 0.954 & 0.952 & 0.895 & 0.309 & 0.946 & 0.955 & 0.953 & 0.020 & 0.017 & 0.907 & 0.875 & 0.844\\

 & 1000 & 0.958 & 0.932 & 0.941 & 0.849 & 0.038 & 0.957 & 0.942 & 0.920 & 0.000 & 0.000 & 0.902 & 0.834 & 0.808\\

\multirow{-3}{*}[1\dimexpr\aboverulesep+\belowrulesep+\cmidrulewidth]{\raggedright\arraybackslash Coverage} & 2000 & 0.941 & 0.950 & 0.945 & 0.799 & 0.001 & 0.944 & 0.933 & 0.924 & 0.000 & 0.000 & 0.869 & 0.796 & 0.773\\
\bottomrule
\end{tabular}
\begin{tablenotes}[para]
\item \textit{Notes:} This table displays the average bias (Bias), the Root Mean Squared Error (RMSE), and the coverage rate (Coverage) across $R=1000$ replicates. The rows contain results for various sample size $n$.
\end{tablenotes}
\end{threeparttable}}
\end{table}

\begin{table}[!h]
\centering
\caption{\label{tab:tab-SMCresults-Mar26-DGP4-MTEs}Simulation Results for Marginal Treatment Effects}
\centering
\resizebox{\ifdim\width>\linewidth\linewidth\else\width\fi}{!}{
\fontsize{8}{10}\selectfont
\begin{threeparttable}
\begin{tabular}[t]{cccccccccccc}
\toprule
\multicolumn{3}{c}{ } & \multicolumn{3}{c}{n = 500} & \multicolumn{3}{c}{n = 1000} & \multicolumn{3}{c}{n = 2000} \\
\cmidrule(l{3pt}r{3pt}){4-6} \cmidrule(l{3pt}r{3pt}){7-9} \cmidrule(l{3pt}r{3pt}){10-12}
Exposure & Grid $v$ & True Value & Bias & RMSE & Coverage & Bias & RMSE & Coverage & Bias & RMSE & Coverage\\
\midrule
 & 0.1 & 1.423 & -0.097 & 0.386 & 0.910 & -0.085 & 0.308 & 0.862 & -0.074 & 0.241 & 0.846\\

 & 0.2 & 1.290 & -0.064 & 0.322 & 0.920 & -0.056 & 0.250 & 0.879 & -0.052 & 0.192 & 0.867\\

 & 0.3 & 1.210 & -0.056 & 0.285 & 0.927 & -0.057 & 0.215 & 0.907 & -0.055 & 0.164 & 0.877\\

 & 0.4 & 1.155 & -0.063 & 0.262 & 0.929 & -0.067 & 0.201 & 0.906 & -0.062 & 0.151 & 0.884\\

 & 0.5 & 1.098 & -0.064 & 0.254 & 0.926 & -0.073 & 0.192 & 0.910 & -0.069 & 0.144 & 0.871\\

 & 0.6 & 1.047 & -0.070 & 0.258 & 0.924 & -0.078 & 0.190 & 0.904 & -0.071 & 0.142 & 0.875\\

 & 0.7 & 0.987 & -0.071 & 0.269 & 0.910 & -0.079 & 0.193 & 0.900 & -0.080 & 0.149 & 0.861\\

 & 0.8 & 0.915 & -0.072 & 0.298 & 0.908 & -0.081 & 0.210 & 0.904 & -0.084 & 0.163 & 0.853\\

\multirow{-9}{*}{\centering\arraybackslash Low} & 0.9 & 0.783 & -0.039 & 0.335 & 0.922 & -0.061 & 0.236 & 0.911 & -0.060 & 0.181 & 0.881\\
\cmidrule{1-12}
 & 0.1 & 1.823 & -0.094 & 0.348 & 0.902 & -0.084 & 0.278 & 0.839 & -0.076 & 0.231 & 0.823\\

 & 0.2 & 1.690 & -0.062 & 0.272 & 0.901 & -0.055 & 0.214 & 0.863 & -0.053 & 0.177 & 0.826\\

 & 0.3 & 1.610 & -0.054 & 0.228 & 0.907 & -0.056 & 0.178 & 0.865 & -0.056 & 0.145 & 0.835\\

 & 0.4 & 1.555 & -0.061 & 0.203 & 0.897 & -0.067 & 0.159 & 0.859 & -0.063 & 0.128 & 0.827\\

 & 0.5 & 1.498 & -0.062 & 0.192 & 0.896 & -0.072 & 0.148 & 0.862 & -0.070 & 0.118 & 0.797\\

 & 0.6 & 1.447 & -0.068 & 0.193 & 0.894 & -0.078 & 0.146 & 0.857 & -0.072 & 0.111 & 0.814\\

 & 0.7 & 1.387 & -0.069 & 0.207 & 0.888 & -0.078 & 0.150 & 0.873 & -0.081 & 0.119 & 0.805\\

 & 0.8 & 1.315 & -0.070 & 0.240 & 0.893 & -0.080 & 0.171 & 0.874 & -0.085 & 0.138 & 0.793\\

\multirow{-9}{*}{\centering\arraybackslash Medium} & 0.9 & 1.183 & -0.037 & 0.284 & 0.903 & -0.060 & 0.208 & 0.902 & -0.061 & 0.158 & 0.851\\
\cmidrule{1-12}
 & 0.1 & 2.223 & -0.092 & 0.386 & 0.920 & -0.083 & 0.297 & 0.871 & -0.077 & 0.249 & 0.848\\

 & 0.2 & 2.090 & -0.059 & 0.318 & 0.920 & -0.054 & 0.239 & 0.889 & -0.054 & 0.199 & 0.865\\

 & 0.3 & 2.010 & -0.052 & 0.281 & 0.924 & -0.055 & 0.213 & 0.903 & -0.058 & 0.169 & 0.886\\

 & 0.4 & 1.955 & -0.058 & 0.265 & 0.932 & -0.066 & 0.196 & 0.896 & -0.065 & 0.154 & 0.878\\

 & 0.5 & 1.899 & -0.059 & 0.257 & 0.927 & -0.071 & 0.188 & 0.903 & -0.071 & 0.145 & 0.883\\

 & 0.6 & 1.847 & -0.066 & 0.254 & 0.928 & -0.077 & 0.186 & 0.901 & -0.074 & 0.136 & 0.890\\

 & 0.7 & 1.787 & -0.067 & 0.264 & 0.927 & -0.077 & 0.190 & 0.905 & -0.083 & 0.141 & 0.886\\

 & 0.8 & 1.715 & -0.067 & 0.287 & 0.914 & -0.080 & 0.208 & 0.907 & -0.087 & 0.158 & 0.864\\

\multirow{-9}{*}{\centering\arraybackslash High} & 0.9 & 1.583 & -0.035 & 0.325 & 0.921 & -0.059 & 0.244 & 0.903 & -0.062 & 0.175 & 0.885\\
\bottomrule
\end{tabular}
\begin{tablenotes}[para]
\item \textit{Notes:} This table displays the average bias (Bias), the Root Mean Squared Error (RMSE), and the coverage rate (Coverage) across $R=1000$ replicates.
\end{tablenotes}
\end{threeparttable}}
\end{table}

\section{On Empirical Application}\label{BCIES-APDX_EmpiricalApp}

This appendix supplements the Opportunity Zones (OZ) application in Section \ref{BCIES-Section5}, which reports the empirical design, key posterior estimates, heterogeneous treatment effects, and policy counterfactuals. Here we provide supporting data documentation, full model estimates, and numerical results underlying Figures \ref{fig:fig-OZ-MTE-combined}--\ref{fig:fig-OZ-PRSE-shifta-bygroup}.

\subsection{Data, Variable Construction, and Spatial Setting}\label{BCIES-APDX_EmpiricalApp-data}

As described in Section \ref{BCIES-Section5_1}, the analysis uses \(3{,}699\) OZ-eligible California census tracts: 727 designated QOZs and \(2{,}972\) eligible but non-designated tracts. The outcome is housing-unit growth between 2017 and 2022, and treatment is QOZ designation among eligible tracts. Neighborhood exposure is the row-normalized share of adjacent tracts designated as QOZs. Figure \ref{fig:fig-OZ-California} provides the corresponding spatial distribution, including a Downtown Los Angeles inset that illustrates the local intermixing of designated and non-designated tracts.

\begin{figure}[H]

{\centering \includegraphics[width=1\linewidth,height=0.35\textheight]{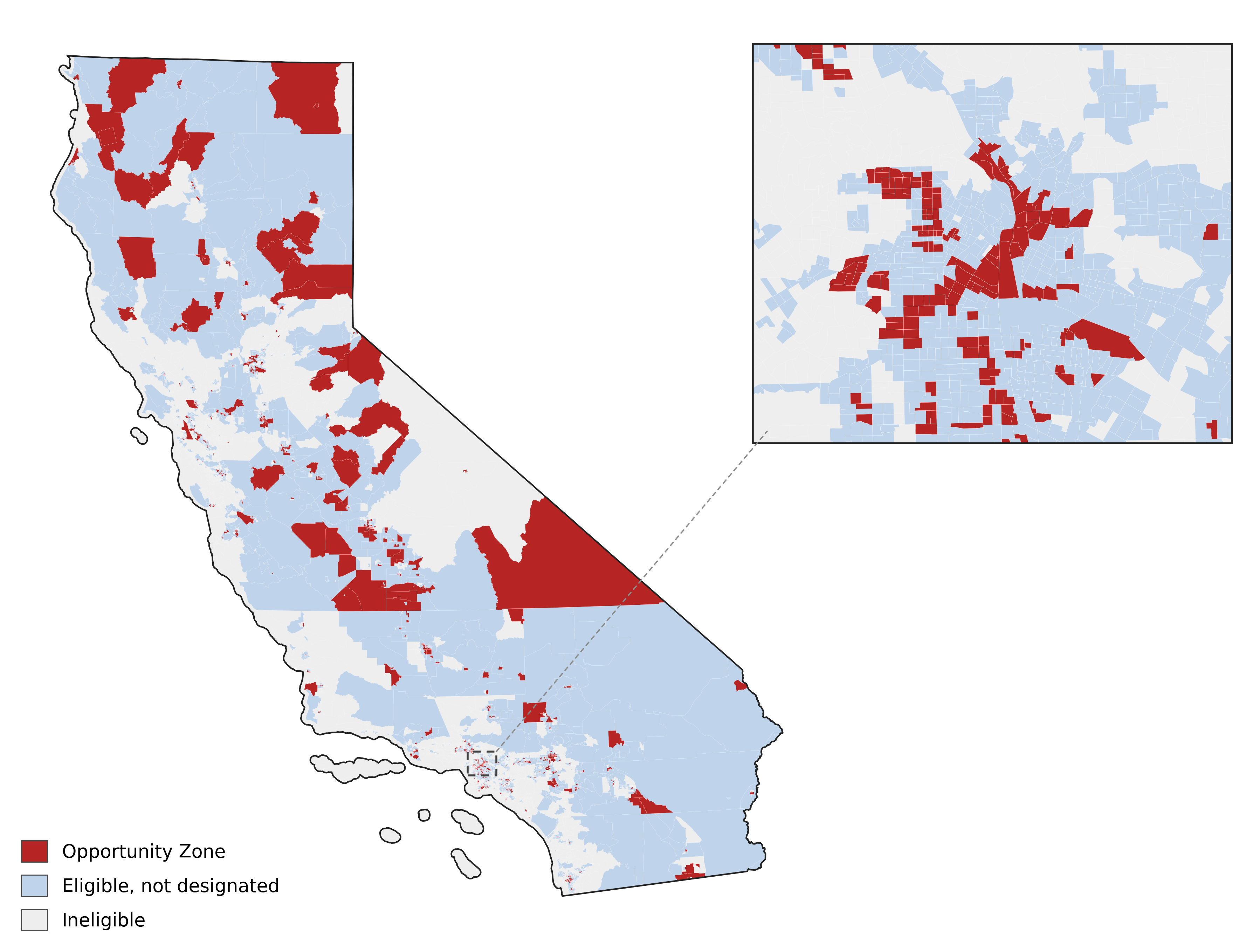} 

}

\caption{Opportunity Zone Designation across California Census Tracts. Census tracts are classified as designated Opportunity Zones (red), eligible but not designated (light blue), or ineligible (gray). The inset displays Downtown Los Angeles, illustrating the fine spatial intermixing of designated and non-designated tracts that motivates allowing for localized spillover effects.}\label{fig:fig-OZ-California}
\end{figure}

\begin{table}[htb]
        \caption{Variable Description}
        \label{tab-definition}
        \centering
\resizebox{\textwidth}{!}{%
\begin{tabular}{@{}ll@{}}
\toprule
\multicolumn{1}{c}{Variable} & \multicolumn{1}{c}{Description}                                                                         \\ \midrule
Housing Unit Growth           & Growth of total housing units between 2017-2022                                                            \\
QOZ                           & An indicator equal to one if an eligible tract was selected as an Opportunity Zone,  or zero otherwise. \\
Political Affiliation &
  \begin{tabular}[c]{@{}l@{}}An indicator equal to one if a tract’s representative to the state’s lower house is  of the same political \\ party as the state’s governor, and zero otherwise.\end{tabular} \\
Poverty Rate &
  \begin{tabular}[c]{@{}l@{}}The proportion of residents in a tract whose ratio of income to the poverty threshold is less than or \\ equal to 0.99, scaled by the number of residents.\end{tabular} \\
Median Earnings               & Logarithm of the median earnings in a tract.                                                            \\
Employment Rate &
  \begin{tabular}[c]{@{}l@{}}The number of individuals in the labor force in a tract that are working, either in civilian or \\ Armed Forces, scaled by the total labor force of the tract.\end{tabular} \\
\% White                      & The proportion of non-Hispanic white residents in a tract.                                              \\
\% Native hc covered          & The proportion of native-born individuals covered by health insurance in a tract.                       \\
\% Higher ed.                 & The proportion of the population in a tract with at least a high school education.                      \\
\% Rent                       & The proportion of rental unit in a tract.                                                               \\
Population                    & The total population in a tract from the 2010 Census.                                                   \\ \bottomrule
\end{tabular}%
}
\end{table}

\begin{table}[htb]
        \caption{Data Sources}
        \label{tab-source}
        \centering
\resizebox{\textwidth}{!}{%
\begin{tabular}{@{}lll@{}}
\toprule
\multicolumn{1}{c}{Source} & \multicolumn{1}{c}{URL} 
\\ \midrule
\begin{tabular}[c]{@{}l@{}}IRS (Internal Revenue Service) data: \\ List of Qualified Opportunity Zones\end{tabular} &
  \url{https://www.irs.gov/credits-deductions/businesses/opportunity-zones} &
   \\
Urban Institute's data       & \url{https://www.urban.org/policy-centers/metropolitan-housing-and-communities-policy-center/projects/opportunity-zones} &  \\
\begin{tabular}[c]{@{}l@{}}American Community Survey (ACS) \\ 5-Year Data (2009-2022)\end{tabular} &
  \url{https://www.census.gov/data/developers/data-sets/acs-5year.html} &
   \\
California State Legislature & \url{https://ballotpedia.org/California\_State\_Legislature}                                                               &  \\
TIGER Geographic Shapefiles  & \url{https://www.census.gov/geographies/mapping-files/time-series/geo/tiger-line-file.html}                              &  \\
SLDU and SLDL Blocks Splits  & \url{https://www.census.gov/geographies/mapping-files/2018/dec/rdo/2018-state-legislative-bef.html}                      &  \\ \bottomrule
\end{tabular}%
}
\end{table}

\begin{table}[H]
\centering\centering
\caption{\label{tab:tab-OZ-summarystats}Summary Statistics and Balancing Tests}
\centering
\fontsize{9}{11}\selectfont
\begin{threeparttable}
\begin{tabular}[t]{lcccr>{\raggedright\arraybackslash}p{1mm}c}
\toprule
\multicolumn{1}{c}{ } & \multicolumn{1}{c}{All tracts (n=3699)} & \multicolumn{1}{c}{QOZs (n=727)} & \multicolumn{1}{c}{Non-QOZs (n=2972)} & \multicolumn{3}{c}{QOZs -- Non-QOZs} \\
\cmidrule(l{3pt}r{3pt}){2-2} \cmidrule(l{3pt}r{3pt}){3-3} \cmidrule(l{3pt}r{3pt}){4-4} \cmidrule(l{3pt}r{3pt}){5-7}
\multicolumn{1}{c}{Variables} & \multicolumn{1}{c}{Mean (std)} & \multicolumn{1}{c}{Mean (std)} & \multicolumn{1}{c}{Mean (std)} & \multicolumn{1}{c}{Diff.Mean} & \multicolumn{1}{c}{} & \multicolumn{1}{c}{t-statistic}\\
\midrule
\addlinespace[0.3em]
\multicolumn{7}{l}{\textbf{Outcome}}\\
\hspace{1em}Housing Unit Growth & 0.03 (0.17) & 0.04 (0.14) & 0.03 (0.17) & 0.02 & * & 2.57\\
\addlinespace[0.3em]
\multicolumn{7}{l}{\textbf{Observed Characteristics}}\\
\hspace{1em}Political Affiliation & 0.79 (0.41) & 0.82 (0.38) & 0.78 (0.42) & 0.04 & ** & 2.72\\
\hspace{1em}Poverty Rate & 0.19 (0.09) & 0.27 (0.09) & 0.17 (0.08) & 0.09 & *** & 24.64\\
\hspace{1em}Median Earnings & 10.17 (0.31) & 10.01 (0.26) & 10.21 (0.30) & -0.21 & *** & -18.60\\
\hspace{1em}Median Rent & 7.08 (0.27) & 6.92 (0.26) & 7.11 (0.26) & -0.19 & *** & -17.79\\
\hspace{1em}Employment Rate & 0.29 (0.07) & 0.26 (0.07) & 0.29 (0.07) & -0.04 & *** & -12.87\\
\hspace{1em}\% White & 0.56 (0.21) & 0.53 (0.20) & 0.56 (0.21) & -0.04 & *** & -4.35\\
\hspace{1em}\% Native & 0.90 (0.04) & 0.89 (0.04) & 0.91 (0.04) & -0.02 & *** & -9.05\\
\hspace{1em}\% Higher ed. & 0.15 (0.09) & 0.11 (0.07) & 0.16 (0.09) & -0.05 & *** & -15.23\\
\hspace{1em}\% Rent & 0.57 (0.21) & 0.67 (0.19) & 0.54 (0.21) & 0.13 & *** & 16.47\\
Population & 4509.55 (1613.93) & 4305.31 (1476.18) & 4559.51 (1642.24) & -254.20 & *** & -4.07\\
\bottomrule
\end{tabular}
\begin{tablenotes}[para]
\item \textit{Notes:} This table presents summary statistics at the census tract level in California. All tracts refer to the entire sample of eligible census tracts for Opportunity Zones, which consist of selected tracts (QOZs) and eligible, not selected tracts (Non-QOZs). Two-sample t-statistics of tests for differences in mean values between two subsamples are reported. The asterisks *, **, and *** indicate statistical significance at the $10\%$, $5\%$, and $1\%$, respectively.
\end{tablenotes}
\end{threeparttable}
\end{table}

\subsection{Full Model Estimates and Specification Sensitivity}\label{BCIES-APDX_EmpiricalApp-estimates}

For completeness, the empirical specification maps the notation of Section \ref{BCIES-Section2_2} to \(D_i=QOZ_i\), \(Z_i=Political_i\), \(X_i=Demographic_i\), and \(\bar D_{\mathcal N i}=\overline{QOZ}_{\mathcal N i}\). The resulting selection and regime-specific outcome equations are
\begin{equation}
\begin{split}
QOZ_i &= \mathbbm{1}\left\{\alpha Political_i+X_i\beta^{(D)}+\epsilon_i^{(D)}>0\right\},\\
Y_i^{(d)} &=
\delta^{(d)}\overline{QOZ}_{\mathcal N i}+X_i\beta^{(d)}+\epsilon_i^{(d)},\qquad d\in\{0,1\}.
\end{split}
\end{equation}

\vspace{-0.75\baselineskip}

\begin{table}[H]
\centering
\caption{\label{tab:tab-OZ-est-results-full}Estimation Results}
\centering
\resizebox{\ifdim\width>\linewidth\linewidth\else\width\fi}{!}{
\fontsize{9}{11}\selectfont
\begin{threeparttable}
\begin{tabular}[t]{lllll}
\toprule
\multicolumn{1}{c}{ } & \multicolumn{2}{c}{(I) No Controls} & \multicolumn{2}{c}{(II) Additional Controls} \\
\cmidrule(l{3pt}r{3pt}){2-3} \cmidrule(l{3pt}r{3pt}){4-5}
 & Mean (std) & CI90 & Mean (std) & CI90\\
\midrule
\addlinespace[0.3em]
\multicolumn{5}{l}{\textbf{Treatment Decision Equation}}\\
\hspace{1em}$Political\ Affiliation\ (\alpha)$ & 0.125 (0.043) & {}[0.056, 0.198] & 0.162 (0.070) & {}[0.049, 0.277]\\
\hspace{1em}$Intercept\ (\beta^{(D)}_0)$ & -0.827 (0.042) & {}[-0.896, -0.759] & 0.011 (1.167) & {}[-1.943, 1.987]\\
\hspace{1em}$Poverty\ Rate\ (\beta^{(D)}_1)$ & -- & -- & 5.594 (0.348) & {}[5.022, 6.176]\\
\hspace{1em}$Median\ Income\ (\beta^{(D)}_2)$ & -- & -- & -0.194 (0.116) & {}[-0.389, 0.000]\\
\hspace{1em}$Employment\ Rate\ (\beta^{(D)}_3)$ & -- & -- & -0.761 (0.471) & {}[-1.532, 0.017]\\
\addlinespace[0.3em]
\multicolumn{5}{l}{\textbf{Outcome Equation for QOZs}}\\
\hspace{1em}$Neighborhood\ Treatment\ (\delta^{(1)})$ & 0.034 (0.017) & {}[0.005, 0.062] & 0.032 (0.014) & {}[0.009, 0.055]\\
\hspace{1em}$Intercept\ (\beta^{(1)}_0)$ & -0.206 (0.065) & {}[-0.255, -0.004] & -0.444 (0.233) & {}[-0.836, -0.067]\\
\hspace{1em}$Poverty\ Rate\ (\beta^{(1)}_1)$ & -- & -- & 0.304 (0.094) & {}[0.146, 0.454]\\
\hspace{1em}$Median\ Income\ (\beta^{(1)}_2)$ & -- & -- & 0.035 (0.023) & {}[-0.002, 0.073]\\
\hspace{1em}$Employment\ Rate\ (\beta^{(1)}_3)$ & -- & -- & 0.034 (0.077) & {}[-0.095, 0.160]\\
\addlinespace[0.3em]
\multicolumn{5}{l}{\textbf{Outcome Equation for Non-QOZs}}\\
\hspace{1em}$Neighborhood\ Treatment\ (\delta^{(0)})$ & 0.016 (0.013) & {}[-0.006, 0.038] & 0.009 (0.009) & {}[-0.006, 0.024]\\
\hspace{1em}$Intercept\ (\beta^{(0)}_0)$ & -0.025 (0.004) & {}[-0.032, -0.018] & -0.127 (0.089) & {}[-0.272, 0.020]\\
\hspace{1em}$Poverty\ Rate\ (\beta^{(0)}_1)$ & -- & -- & 0.039 (0.037) & {}[-0.023, 0.099]\\
\hspace{1em}$Median\ Income\ (\beta^{(0)}_2)$ & -- & -- & 0.015 (0.009) & {}[0.001, 0.030]\\
\hspace{1em}$Employment\ Rate\ (\beta^{(0)}_3)$ & -- & -- & -0.052 (0.033) & {}[-0.107, 0.003]\\
\addlinespace[0.3em]
\multicolumn{5}{l}{\textbf{Correlations and Variances}}\\
\hspace{1em}$\sigma_1^2$ & 0.047 (0.009) & {}[0.022, 0.058] & 0.018 (0.003) & {}[0.015, 0.023]\\
\hspace{1em}$\sigma_0^2$ & 0.038 (0.001) & {}[0.036, 0.040] & 0.045 (0.015) & {}[0.029, 0.071]\\
\hspace{1em}$\rho_{1D}$ & 0.839 (0.206) & {}[0.220, 0.920] & 0.182 (0.134) & {}[-0.063, 0.412]\\
\hspace{1em}$\rho_{0D}$ & -0.842 (0.011) & {}[-0.858, -0.824] & -0.130 (0.052) & {}[-0.218, -0.045]\\
\hspace{1em}$\rho_{10}$ & -0.708 (0.181) & {}[-0.794, -0.145] & -0.317 (0.101) & {}[-0.457, -0.150]\\
\addlinespace[0.3em]
\multicolumn{5}{l}{\textbf{Criteria}}\\
\hspace{1em}Log marginal likelihood & 471.819 (23.806) & {}[429.253, 507.149] & 1458.008 (6.315) & {}[1446.534, 1466.931]\\
\hspace{1em}AICM & 189.775 & -- & -2836.262 & --\\
\hspace{1em}Observations & 3699 & -- & 3699 & --\\
\addlinespace[0.3em]
\multicolumn{5}{l}{\textbf{Quantities of Interest}}\\
\hspace{1em}$\delta^{(1)}$ & 0.034 (0.017) & {}[0.005, 0.062] & 0.032 (0.014) & {}[0.009, 0.055]\\
\hspace{1em}$\delta^{(0)}$ & 0.016 (0.013) & {}[-0.006, 0.038] & 0.009 (0.009) & {}[-0.006, 0.024]\\
\hspace{1em}$\Delta_{\delta^{(1)}-\delta^{(0)}}$ & 0.017 (0.022) & {}[-0.018, 0.054] & 0.023 (0.016) & {}[-0.004, 0.051]\\
\hspace{1em}$\Delta_{\sigma_{1D}-\sigma_{0D}}$ & 0.349 (0.049) & {}[0.197, 0.386] & 0.052 (0.022) & {}[0.018, 0.092]\\
\bottomrule
\end{tabular}
\begin{tablenotes}[para]
\item \textit{Notes:} This table presents estimation results from Spillover Roy model. Posterior means, standard deviations, as well as lower and upper bounds of 90\% credible intervals are reported. Baseline specification (I) employs no control variables. Specification (II), which uses demographic characteristics as controls, provides robust results for quantities of interest and achieves better log marginal likelihood and Akaike's Information Criterion (AICM; Raftery \textit{et al.}, 2007).
\end{tablenotes}
\end{threeparttable}}
\end{table}

\vspace{-0.75\baselineskip}

Table \ref{tab:tab-OZ-est-results-full} provides the full posterior estimates underlying Table \ref{tab:tab-OZ-est-results} and compares the preferred specification with a parsimonious model that omits demographic controls. The preferred specification fits the data substantially better according to the reported criteria. The key patterns emphasized in Section \ref{BCIES-Section5_2} are robust across the two specifications: partisan alignment is positively associated with designation, and neighborhood exposure has a larger estimated effect for QOZs than for non-QOZs. The preferred specification includes pre-treatment demographic controls. Relative to the specification without controls, it provides substantially improved model fit according to the reported marginal-likelihood and AICM criteria and, more importantly, allows treatment selection to depend on observed socioeconomic characteristics that are relevant to OZ designation. Conditioning on observed tract characteristics substantially changes the estimated dependence structure.

\subsubsection*{Marginal Treatment Effects}\label{marginal-treatment-effects}
\addcontentsline{toc}{subsubsection}{Marginal Treatment Effects}

Table \ref{tab:tab-OZ-est-MTE} gives the numerical posterior estimates underlying Figure \ref{fig:fig-OZ-MTE-combined}. As discussed in Section \ref{BCIES-Section5_2}, the MTE declines with latent resistance at each exposure level, while higher neighborhood exposure shifts the MTE upward modestly. The table is included to report posterior standard deviations and 90\% credible intervals for the full resistance grid.

\vspace{-0.25\baselineskip}

\begin{table}[H]
\centering
\caption{\label{tab:tab-OZ-est-MTE}Marginal Treatment Effects by Resistance Level and Exposure.}
\centering
\resizebox{\ifdim\width>\linewidth\linewidth\else\width\fi}{!}{
\fontsize{10}{12}\selectfont
\begin{threeparttable}
\begin{tabular}[t]{ccccccc}
\toprule
\multicolumn{1}{c}{ } & \multicolumn{2}{c}{Low exposure (10th percentile)} & \multicolumn{2}{c}{Mean exposure} & \multicolumn{2}{c}{High exposure (90th percentile)} \\
\cmidrule(l{3pt}r{3pt}){2-3} \cmidrule(l{3pt}r{3pt}){4-5} \cmidrule(l{3pt}r{3pt}){6-7}
Resistance level ($v$) & Mean (std) & CI90 & Mean (std) & CI90 & Mean (std) & CI90\\
\midrule
0.1 & 0.027 (0.020) & {}[-0.006, 0.061] & 0.031 (0.020) & {}[-0.000, 0.064] & 0.041 (0.020) & {}[0.008, 0.074]\\
0.2 & 0.004 (0.019) & {}[-0.028, 0.036] & 0.008 (0.019) & {}[-0.022, 0.040] & 0.018 (0.019) & {}[-0.013, 0.050]\\
0.3 & -0.013 (0.021) & {}[-0.049, 0.021] & -0.008 (0.021) & {}[-0.043, 0.025] & 0.001 (0.021) & {}[-0.034, 0.035]\\
0.4 & -0.027 (0.025) & {}[-0.070, 0.014] & -0.022 (0.024) & {}[-0.065, 0.018] & -0.013 (0.024) & {}[-0.055, 0.027]\\
0.5 & -0.041 (0.029) & {}[-0.093, 0.008] & -0.036 (0.028) & {}[-0.088, 0.013] & -0.027 (0.028) & {}[-0.078, 0.022]\\
0.6 & -0.054 (0.033) & {}[-0.115, 0.003] & -0.049 (0.033) & {}[-0.110, 0.008] & -0.040 (0.033) & {}[-0.100, 0.017]\\
0.7 & -0.068 (0.038) & {}[-0.139, -0.003] & -0.063 (0.038) & {}[-0.134, 0.003] & -0.054 (0.038) & {}[-0.123, 0.012]\\
0.8 & -0.085 (0.044) & {}[-0.167, -0.009] & -0.080 (0.044) & {}[-0.162, -0.004] & -0.071 (0.044) & {}[-0.151, 0.005]\\
0.9 & -0.108 (0.053) & {}[-0.205, -0.017] & -0.103 (0.053) & {}[-0.200, -0.011] & -0.094 (0.053) & {}[-0.190, -0.003]\\
\bottomrule
\end{tabular}
\begin{tablenotes}[para]
\item \textit{Notes:} This table presents estimation results for Marginal Treatment Effects, evaluated at nine grid values of the unmeasured resistance level $v$ and different values of the neighborhood treatment $\bar{d}_\mathcal{N}$. Posterior means, standard deviations, as well as $90\%$ credible intervals for specification (II) with control variables are reported.
\end{tablenotes}
\end{threeparttable}}
\end{table}

\vspace{-0.25\baselineskip}

\subsubsection*{Average Effects under the Realized OZ Assignment}\label{average-effects-under-the-realized-oz-assignment}
\addcontentsline{toc}{subsubsection}{Average Effects under the Realized OZ Assignment}

Table \ref{tab:tab-OZ-est-CP} reports the numerical average effects summarized in Section \ref{BCIES-Section5_2}. It confirms that the estimated average direct and average total effects are positive for treated tracts, whereas the average spillover effect for untreated tracts is small and its 90\% credible interval includes zero.

\vspace{-0.25\baselineskip}

\begin{table}[H]
\centering
\caption{\label{tab:tab-OZ-est-CP}Average Effects under the Realized OZ Assignment}
\centering
\fontsize{10}{12}\selectfont
\begin{threeparttable}
\begin{tabular}[t]{cccc}
\toprule
ADT & AST & ATOT & ASUT\\
\midrule
0.045 [0.014, 0.078] & 0.014 [0.004, 0.023] & 0.059 [0.026, 0.096] & 0.001 [-0.001, 0.004]\\
\bottomrule
\end{tabular}
\begin{tablenotes}[para]
\item \textit{Notes:} This table presents estimation results for the average direct effect on the treated (ADT), the average spillover effect on the treated (AST), the average total effect on the treated (ATOT), and the average spillover effect on the untreated (ASUT) from the Spillover Roy model. Posterior means and $90\%$ credible intervals for specification (II) with control variables are reported.
\end{tablenotes}
\end{threeparttable}
\end{table}

\vspace{-0.25\baselineskip}

\subsection{Policy Counterfactual Analysis}\label{BCIES-APDX_EmpiricalApp-policy}

Section \ref{BCIES-Section5_3} defines the counterfactual expansion
\[
P_i^{a,\tau}=P_i^a+\tau(1-P_i^a),\qquad \tau\in[0,1],
\]
and Figure \ref{fig:fig-OZ-PRTE-shifta} displays the resulting policy-relevant effects. Table \ref{tab:tab-OZ-PRTE-shifta} provides the corresponding numerical posterior estimates and induced-treatment shares. Consistent with Figure \ref{fig:fig-OZ-PRTE-shifta}, the posterior mean PRDE declines with expansion, whereas the PRSE increases. The posterior mean PRTOT consequently exhibits diminishing returns and crosses zero for larger expansions. The 90\% credible intervals for PRDE and PRTOT include zero throughout the reported range, so Table \ref{tab:tab-OZ-PRTE-shifta} is also useful for assessing the uncertainty around the point-estimate patterns emphasized in the main text.

Table \ref{tab:tab-OZ-PRSE-shifta-bygroup} is the numerical counterpart to Figure \ref{fig:fig-OZ-PRSE-shifta-bygroup}. It shows that estimated spillover gains are largest for induced entrants and are also positive for always-treated tracts. For never-treated tracts, the posterior means are much smaller and the 90\% credible intervals include zero at every reported policy shift. These numerical results support the conclusion that spillover benefits are concentrated among tracts treated under the counterfactual policy.

\vspace{-0.25\baselineskip}

\begin{table}[H]
\centering
\caption{\label{tab:tab-OZ-PRTE-shifta}Policy-Relevant Effects under Policy Expansion}
\centering
\resizebox{\ifdim\width>\linewidth\linewidth\else\width\fi}{!}{
\fontsize{9}{11}\selectfont
\begin{threeparttable}
\begin{tabular}[t]{lcccc}
\toprule
Policy & Share induced & PRDE & PRSE & PRTOT\\
\midrule
$\tau = $0.05 & \makecell[c]{0.041 \\ {\scriptsize [0.040, 0.042]}} & \makecell[c]{0.012 \\ {\scriptsize [-0.019, 0.045]}} & \makecell[c]{0.014 \\ {\scriptsize [0.001, 0.026]}} & \makecell[c]{0.026 \\ {\scriptsize [-0.004, 0.060]}}\\
$\tau = $0.15 & \makecell[c]{0.122 \\ {\scriptsize [0.121, 0.124]}} & \makecell[c]{0.003 \\ {\scriptsize [-0.030, 0.034]}} & \makecell[c]{0.016 \\ {\scriptsize [0.003, 0.028]}} & \makecell[c]{0.019 \\ {\scriptsize [-0.012, 0.052]}}\\
$\tau = $0.25 & \makecell[c]{0.203 \\ {\scriptsize [0.200, 0.205]}} & \makecell[c]{-0.005 \\ {\scriptsize [-0.039, 0.028]}} & \makecell[c]{0.018 \\ {\scriptsize [0.005, 0.030]}} & \makecell[c]{0.013 \\ {\scriptsize [-0.020, 0.048]}}\\
$\tau = $0.35 & \makecell[c]{0.283 \\ {\scriptsize [0.279, 0.286]}} & \makecell[c]{-0.011 \\ {\scriptsize [-0.048, 0.025]}} & \makecell[c]{0.020 \\ {\scriptsize [0.007, 0.033]}} & \makecell[c]{0.008 \\ {\scriptsize [-0.028, 0.047]}}\\
$\tau = $0.45 & \makecell[c]{0.363 \\ {\scriptsize [0.359, 0.368]}} & \makecell[c]{-0.018 \\ {\scriptsize [-0.057, 0.020]}} & \makecell[c]{0.022 \\ {\scriptsize [0.008, 0.035]}} & \makecell[c]{0.004 \\ {\scriptsize [-0.037, 0.045]}}\\
$\tau = $0.55 & \makecell[c]{0.444 \\ {\scriptsize [0.438, 0.449]}} & \makecell[c]{-0.024 \\ {\scriptsize [-0.067, 0.016]}} & \makecell[c]{0.024 \\ {\scriptsize [0.008, 0.038]}} & \makecell[c]{-0.000 \\ {\scriptsize [-0.043, 0.043]}}\\
$\tau = $0.65 & \makecell[c]{0.525 \\ {\scriptsize [0.518, 0.531]}} & \makecell[c]{-0.030 \\ {\scriptsize [-0.077, 0.014]}} & \makecell[c]{0.025 \\ {\scriptsize [0.008, 0.042]}} & \makecell[c]{-0.004 \\ {\scriptsize [-0.051, 0.041]}}\\
$\tau = $0.75 & \makecell[c]{0.605 \\ {\scriptsize [0.597, 0.612]}} & \makecell[c]{-0.036 \\ {\scriptsize [-0.088, 0.011]}} & \makecell[c]{0.027 \\ {\scriptsize [0.009, 0.045]}} & \makecell[c]{-0.008 \\ {\scriptsize [-0.059, 0.039]}}\\
\bottomrule
\end{tabular}
\begin{tablenotes}[para]
\item \textit{Notes:} This table presents estimation results for the Policy-Relevant Direct Effect (PRDE), Policy-Relevant Spillover Effect (PRSE), and Total Policy-Relevant Total Effect (PRTOT) from the Spillover Roy model. Posterior means and $90\%$ credible intervals for specification (II) with control variables are reported.
\end{tablenotes}
\end{threeparttable}}
\end{table}

\vspace{-0.25\baselineskip}

\begin{table}[H]
\centering
\caption{\label{tab:tab-OZ-PRSE-shifta-bygroup}Group-Specific Spillover Effects under Policy Expansion}
\centering
\resizebox{\ifdim\width>\linewidth\linewidth\else\width\fi}{!}{
\fontsize{9}{11}\selectfont
\begin{threeparttable}
\begin{tabular}[t]{lccc}
\toprule
Policy & Always treated & Induced entrants & Never treated\\
\midrule
$\tau = $0.05 & \makecell[c]{0.0012 \\ {\scriptsize [0.0004, 0.0020]}} & \makecell[c]{0.0013 \\ {\scriptsize [0.0004, 0.0023]}} & \makecell[c]{0.0004 \\ {\scriptsize [-0.0002, 0.0010]}}\\
$\tau = $0.15 & \makecell[c]{0.0036 \\ {\scriptsize [0.0011, 0.0061]}} & \makecell[c]{0.0040 \\ {\scriptsize [0.0012, 0.0067]}} & \makecell[c]{0.0011 \\ {\scriptsize [-0.0007, 0.0029]}}\\
$\tau = $0.25 & \makecell[c]{0.0060 \\ {\scriptsize [0.0018, 0.0101]}} & \makecell[c]{0.0066 \\ {\scriptsize [0.0020, 0.0110]}} & \makecell[c]{0.0018 \\ {\scriptsize [-0.0012, 0.0048]}}\\
$\tau = $0.35 & \makecell[c]{0.0084 \\ {\scriptsize [0.0025, 0.0141]}} & \makecell[c]{0.0092 \\ {\scriptsize [0.0027, 0.0154]}} & \makecell[c]{0.0025 \\ {\scriptsize [-0.0017, 0.0066]}}\\
$\tau = $0.45 & \makecell[c]{0.0108 \\ {\scriptsize [0.0032, 0.0182]}} & \makecell[c]{0.0118 \\ {\scriptsize [0.0035, 0.0199]}} & \makecell[c]{0.0033 \\ {\scriptsize [-0.0022, 0.0085]}}\\
$\tau = $0.55 & \makecell[c]{0.0132 \\ {\scriptsize [0.0039, 0.0222]}} & \makecell[c]{0.0145 \\ {\scriptsize [0.0043, 0.0244]}} & \makecell[c]{0.0040 \\ {\scriptsize [-0.0027, 0.0104]}}\\
$\tau = $0.65 & \makecell[c]{0.0155 \\ {\scriptsize [0.0046, 0.0262]}} & \makecell[c]{0.0171 \\ {\scriptsize [0.0051, 0.0288]}} & \makecell[c]{0.0047 \\ {\scriptsize [-0.0031, 0.0123]}}\\
$\tau = $0.75 & \makecell[c]{0.0179 \\ {\scriptsize [0.0053, 0.0302]}} & \makecell[c]{0.0197 \\ {\scriptsize [0.0059, 0.0332]}} & \makecell[c]{0.0054 \\ {\scriptsize [-0.0036, 0.0141]}}\\
\bottomrule
\end{tabular}
\begin{tablenotes}[para]
\item \textit{Notes:} This table presents estimation results for the Policy-Relevant Direct Effect (PRDE), Policy-Relevant Spillover Effect (PRSE), and Policy-Relevant Total Effect (PRTOT) from the Spillover Roy model. Posterior means and $90\%$ credible intervals for specification (II) with control variables are reported.
\end{tablenotes}
\end{threeparttable}}
\end{table}

\end{document}